\newcounter{item}[section]
\newcounter{kirshr}
\newcounter{kirsha}
\newcounter{kirshb}
\newenvironment{mysect}[1]{\vskip8pt\par\noindent\setcounter{item}{1}

\setcounter{equation}{0}{\large\bf\arabic{section}.  #1 }\vskip8pt\nopagebreak\par\nopagebreak }
{\stepcounter{section}\upshape\par}
\newtheorem{theorem}{Theorem}[section]
\newtheorem{lemma}[theorem]{Lemma}
\newtheorem{corollary}[theorem]{Corollary}
\newtheorem{proposition}[theorem]{Proposition}
\newtheorem{remark}[theorem]{Remark}
\newtheorem{definition}[theorem]{Definition}
\newcommand\overcirc[1]{\raisebox{10pt}{\tiny$\circ$}{\kern-6.5pt}\mbox{$#1$}}
\newcommand\undersym[2]{\raisebox{-6pt}{\tiny$#2$}{\kern-5pt}\mbox{$#1$}}
\begin{document}

\title{\bf Teleparallel Lgrange Geometry and a Unified Field Theory\footnote{arXiv:0905.0209
}}

\author{M. I. Wanas$^\dagger$, Nabil L. Youssef$^{\,\ddagger}$ and  A. M.  Sid-Ahmed$^{\natural}$\footnote{The authors are members of the Egyptian Relativity Group (ERG): www.erg.net.eg}}
\date{}
\maketitle

\vspace{-1.1cm}
\begin{center}
{$\dagger$ Department of Astronomy, Faculty of Science, Cairo
University\\ CTP of the British University in Egypt (BUE)}
\end{center}
\vspace{-0.9cm}
\begin{center}
 wanas@frcu.eun.egt
 \end{center}
\vspace{-0.6cm}
\begin{center}
{$\ddagger$ Department of Mathematics, Faculty of Science, Cairo
University}
\end{center}
\vspace{-0.9cm}
\begin{center}
{nyoussef@frcu.eun.eg, \,nlyoussef2003@yahoo.fr}
\end{center}
\vspace{-0.6cm}
\begin{center}
{$\natural$ Department of Mathematics, Faculty of Science, Cairo
University}\end{center}
\vspace{-0.9cm}
\begin{center}
{amrs@mailer.eun.eg, \,amrsidahmed@gmail.com}
\end{center}

\maketitle \vspace{-1cm}

\vspace{1.5cm} \maketitle
\smallskip

\noindent{\bf Abstract.} In this paper, we construct a field theory
unifying gravity and electromagnetism in the context of Extended
Absolute Parallelism (EAP-) geometry. This geometry combines, within
its structure, the geometric richness of the tangent bundle and the
mathematical simplicity of Absolute Parallelism (AP-) geometry. The
constructed field theory is a generalization of the Generalized
Field Theory (GFT) formulated by Mikhail and Wanas. The theory
obtained is purely geometric. The horizontal (resp. vertical) field
equations are derived by applying the Euler-Lagrange equations to an
appropriate horizontal (resp. vertical) scalar Lagrangian. The
symmetric part of the resulting horizontal (resp. vertical) field
equations gives rise to a generalized form of Einstein's field
equations in which the horizontal (resp. vertical) energy-momentum
tensor is purely geometric. The skew-symmetric part of the resulting
horizontal (resp. vertical) field equations gives rise to a
generalized form of Maxwell equations in which the electromagnetic
field is purely geometric. Some interesting special cases, which
reveal the role of the nonlinear connection in the obtained field
equations, are examined. Finally, the condition under which our
constructed field equations reduce to the GFT is explicitly
established.

\bigskip

\medskip\noindent{\bf Keywords:} Extended Absolute Parallelism, Lagrange geometry, Metric $d$-connection, Canonical $d$-connection,
Euler-Lagrange equations, Generalized Field Theory (GFT), Unified field theory, Cartan condition,
Berwald condition, Cartan-Berwald condition.
%, Linearization.

\bigskip

\medskip\noindent{\bf PACS.\/} 04.50-h, 12.10-g, 45.10.Na, 02.40.Hw, 02.40.Ma.

%\bigskip

\medskip\noindent{\bf 2000 AMS Subject Classification.\/} 53B40, 53B50, 53Z05, 83C22.

\newpage
%%%%%%%%%%%%%%%%%%%%%%%%%%%%%%%%%%%%%%%%%%%%%%%%%%%% Section 0. Motivation and Introduction  %%%%%%%%%%%%%%%%%%%%%%%%%%%%%%%%%%%%%%%%%%%%%%%%%%%%%%%%%%%%%%

\begin{mysect}{Motivation and Introduction}

Although the general theory of relativity, constructed in a 4-dimensional Riemannian space, is the best known theory for studying gravitational
interactions, so far, it suffers from some problems. Examples of these problems are: the horizon problem, the initial singularity, the flatness of the
rotation curve of spiral galaxies \cite{gal}, the Pioneer 10, 11-anamoly \cite{MNJ} and the interpretation of supernovae type-Ia observation \cite{FTD}.
Some of these problems
are old, while others have been discovered in the last ten years or so. In the context of orthodox general relativity theory, there are no satisfactory
solutions for such problems.

\bigskip

The above mentioned problems may be due to a missing interaction that has no representative in Riemannian geometry. This may imply that
Riemannian geometry is inadequate for studying such problems, since it is limited to the case of a symmetric linear connection and a symmetric metric tensor. However,
some authors have suggested modifications of general relativity, retaining Riemannian geometry, by either:

\begin{description}
\item [(a)] Increasing the order of the Lagrangian used to construct the field equations in general relativity \cite{f(R)}.

\item [(b)] Using non-conventional equations of state \cite{DAJ}.

\item [(c)] Adding a term to the theory (i.e the cosmological term) preserving conservation \cite{zoo}.

\item [(d)] Adding a term to the theory violating conservation \cite{zoz}.

\item [(e)] Increasing the dimension of Riemannian space used (Kaluza-Klien-type theories) \hspace{- 0.1 cm}\cite{o}.

\end{description}

Other authors prefer to use more general geometric structures, other than Riemannian geometry, e.g. Riemann-Cartan geometry \cite{AE},
Absolute Parallelism (AP-) geometry (cf. \cite{FI}, \cite{b}, \cite{AMR}), Finsler geometry and its generalizations (cf. \cite{Gh}, \cite{MM}) and
generalized AP-geometry \cite{WNA}. The use of these structures has the advantage of probing the role of geometric entities other
than the curvature, e.g. non-symmetric linear connection and its torsion, in physical applications. This may illuminate the
role of such entities in physical phenomena which
have no satisfactory interpretation in orthodox general relativity.

\bigskip

As an example, some of the geometric advantages and physical achievements of the conventional AP-geometry are the following:
\begin{description}

\item [(a)] AP-geometry admits at least four built-in (natural)
affine connections, two of
which are non-symmetric and three of which have non-vanishing curvature. AP-geometry also admits tensors of third order, a number of second order skew and
symmetric tensors and a {\bf non-vanishing torsion} \cite{FI}.

\item [(b)] Electromagnetism can be successfully represented together with gravity \cite{aaa}.

\item [(c)] In four dimensions, the tetrad vector field defining the
geometric structure of AP-space is
used as fundamental variables in an attempt to
quantize gravity \cite{vv}.

\item [(d)] AP-geometry gives rise to a new interaction between the torsion of the background geometry and the spin of the moving particles \cite{msp},
which has been confirmed experimentally \cite{q}.

\item [(e)] It has been shown that there is a built-in quantum properties in any geometric structure with non-vanishing torsion \cite{magd}.

\end{description}

We consider the Generalized Field Theory (GFT) \cite{aaa} as a good example of using geometries with simultaneously non-vanishing curvature and torsion for the
purpose of unifying fundamental interactions. Applications of this theory show to what extent it is successful in unifying gravity with
electromagnetism (cf. \cite{aax}, \cite{MIW}, \cite{W}).
\bigskip

Recently, many attempts have been made to formulate Einstein's field equations and Maxwell's equations in the context of
Finsler geometry and generalized Lagrange geometry (\cite{KM}, \cite{Miron}, \cite{C}, \cite{M}). For example, Almost Finslerian Lagrange
(AFL-) spaces \cite{V} were used as a model for a theory of electromagnetism
in this wider framework \cite{xxx}, \cite{abc}.
Moreover, R. Miron developed a Lagrangian theory
of relativity, which is a more generalized version of the Finslerian theory of relativity. In his theory, four kinds of new Einstein-like equations
and two kinds of conservation-like laws are constructed \cite{V}.

\bigskip

The use of Riemannian geometry in applications explores the role
played by its symmetric linear connection (and its consequences) in
physical interactions. Similarly, the use of more general geometric
structures, with non-vanishing torsion, explores the role played by
the non-symmetric linear connection (and its consequences) in
physical applications.

\bigskip

One of the aims of the present work is to explore the role of the nonlinear connection in physical phenomena,
if any. As a first step to achieve our goal is to construct a field theory in spaces equipped with a nonlinear connection. For this to be done,
two of the authors of this paper have constructed a geometric structure called Extended Absolute Parallelism \linebreak (EAP-) geometry
\cite{EAP}. EAP-geometry combines, within its structure, the geometric richness of the tangent bundle \cite{NLY} and the mathematical simplicity of AP-geometry.
Consequently, it may have a potentially wider geometric and physical scope than AP-geometry.

\bigskip

In the present work, we are going to construct a version of the GFT within the context of EAP-geometry, performing a suitable generalization of the
scheme followed in the construction of the field equations of the GFT. Our theory is formulated on the tangent bundle $TM$ of
$M$, on the basis of Miron's approach to the geometry of tangent bundles. However,
the method of construction of the theory and its content are substantially different from those of Miron's.

\bigskip

The paper is organized in the following manner.
In section 1, we focus our attention on the fundamental concepts that will be needed in the sequel. We then discuss the properties of a unique metric $d$-connection on $TM$ which we refer to as the
natural metric $d$-connection, relative to which the generalized field equations are to be obtained.
In section 2, a survey of the basic definitions and concepts of EAP-geometry is given, followed, in section 3,
by some relations needed for the construction of the field equations. In section 4, the field equations are constructed.
We derive the horizontal generalized field equations
by applying a modified version of the Euler-Lagrange equations to a suitable (horizontal) scalar Lagrangian ${\cal H}$. The vertical generalized
field equations {\it in the general case}
are also found, again by applying the Euler-Lagrange equations to the vertical analogue ${\cal V}$ of
${\cal H}$. In section 5, splitting of the obtained field equations into its symmetric and skew-symmetric parts is performed. The symmetric part of the
resulting horizontal (resp. vertical)
field equations gives rise to a generalized form of
Einstein's field equations in which the horizontal (resp. vertical)
energy momentum tensor is purely geometric. The
skew-symmetric part of the resulting horizontal (resp. vertical) field equations gives rise to a generalized form of Maxwell's equations
in which the electromagnetic
field is purely geometric. In section 6, the form of the field equations under the Integrability, Cartan and Berwald conditions are deduced. These special cases
throw some light
on the role played by the nonlinear connection in the obtained field equations. Finally, in section 7, we end the paper by summarizing the
obtained results in the different cases dealt with and some concluding remarks.

\end{mysect}

%%%%%%%%%%%%%%%%%%%%%%%%%%%%%%%%%%%%%%%%%%%%%%%%%%% Section 1. Basic Preliminaries %%%%%%%%%%%%%%%%%%%%%%%%%%%%%%%%%%%%%%%%%%%%%%%%%%%%%%

\begin{mysect}{Basic Prelimenaries}

The material covered in the present section may be found in \cite{GLS} and \cite{V}. Some related topics may be found in \cite{GhA}.

\bigskip

Let $M$ be a paracompact manifold of dimension $n$ of class $C^{\infty}$. Let $\pi:TM\to M$ be its tangent bundle.
If $(U, \ x^{\mu})$
is a local chart on $M$, then $(\pi^{-1}(U), (x^{\mu}, \ y^{a}))$ is the corresponding local
chart on $TM$.  The coordinate transformation on $TM$ is given by:
\begin{equation}\label{chof}x^{\mu'} = x^{\mu'}(x^{\nu}), \ \ y^{a'} = p^{a'}_{a} y^{a},\end{equation}
$\mu = 1, \ldots, n; \ a = 1, \ldots, n$; \ 
 \begin{equation}\label{crucial}p^{a'}_{a} = \frac{\partial y^{a'}}{\partial y^{a}} =
\frac{\partial x^{a'}}{\partial x^{a}}\end{equation} and
det$(p^{a'}_{a})\neq 0$.

\bigskip

The tangent space $T_{u}(TM)$ at $u\in TM$ is a $2n$ dimensional vector space, having the natural basis
$(\frac{\partial}{\partial x_{\mu}}, \frac{\partial}{\partial y^{a}})$. The change of coordinates (\ref {chof}) {\bf in a local chart of $TM$} implies a
change of the natural basis as follows:

\begin{equation}\label{need}\frac{\partial}{\partial x^{\mu'}} = \frac{\partial x^{\mu}}{\partial x^{\mu'}}\frac{\partial}{\partial x^{\mu}}
+  \frac{\partial y^{a}}{\partial x^{\mu'}}\frac{\partial}{\partial y^{a}}\end{equation}

\begin{equation}\label{ned} \frac{\partial}{\partial y^{a'}} = \frac{\partial y^{a}}{\partial y^{a'}}\frac{\partial}{\partial y^{a}}\end{equation}

\vspace{0.1 cm}

The paracompactness of $M$ ensures the existence of a nonlinear connection $N$ on $TM$
with coefficients $N^{a}_{\alpha}(x, y)$. The transformation formula for the coefficients
$N^{a}_{\alpha}$ is given by
\begin{equation}\label{nlinear}N^{a'}_{\alpha'} = p^{a'}_{a} p^{\alpha}_{\alpha'}N^{a}_{\alpha} + p^{a'}_{a}
p^{a}_{c'\alpha'}y^{c'},\end{equation}
where $p^{a}_{c'\alpha'} = \frac{\partial p^{a}_{c'}}{\partial x^{\alpha'}}$ .
The nonlinear connection leads to the direct sum decomposition
\begin{equation}\label{Sum}T_{u}(TM) = H_{u}(TM)\oplus V_{u}(TM), \ \ \forall u\in
TM\setminus \{0\},\end{equation}
where $\{0\}$ is the null section of $TM$, $V_{u}(TM)$ is the vertical space at $u$ with
local basis $\dot \partial_a := \frac{\partial}{\partial y^{a}}$
and $H_{u}(TM)$ is the horizontal space at $u$, associated with $N$,
supplementary to $V_{u}(TM)$. If $\partial_{\mu}: = \frac{\partial}{\partial x^{\mu}}$, then
the canonical basis of $H_{u}(TM)$
is given by
\begin{equation}\delta_{\mu}: = \partial_{\mu} - N^{a}_{\mu} \ \dot {\partial_a}\end{equation}
Consequently, $(\delta_{\mu}, \, \dot{\partial_a})$ is a basis of $T_u(TM)$ at $u\in TM$, called the adapted basis.

\bigskip

Now, let $(dx^{\alpha}, \ \delta y^{a})$ be the basis of $T^{*}_{u}(TM)$ dual to the adapted
basis $(\delta_{\alpha}, \ \dot {\partial_a})$ of $T_{u}(TM)$. Then
\begin{equation}\delta y^{a} := dy^{a} + N^{a}_{\alpha}dx^{\alpha}\end{equation}
and
\begin{equation}dx^{\alpha}(\delta_{\beta}) = \delta^{\alpha}_{\beta}, \ \
dx^{\alpha}(\dot{\partial_{a}}) = 0; \ \ \ \delta y^{a}(\delta_{\beta}) = 0, \ \
\delta y^{a}(\dot \partial_{b}) = \delta^{a}_{b}.\end{equation}

Under a change of local coordinates in $TM$, the following holds:
\begin{equation}\label{change}\delta_{\alpha'} = p^{\alpha}_{\alpha'} \delta_{\alpha}, \ \ \dot{\partial_{a'}} =
p^{a}_{a'} \dot{\partial_{a}};
\ \ dx^{\alpha'} = p^{\alpha'}_{\alpha} dx^{\alpha}, \ \ \delta y^{a'} = p^{a'}_{a} \delta y^{a}.\end{equation}
The above transformation formulae result from the law of transformation (\ref{nlinear}) of the coefficients 
$(N^{a}_{\alpha})$ of the nonlinear connection, together with (\ref{need}) and (\ref{ned}).
The general covariance of geometric objects defined on the tangent bundle $TM$ is guaranteed by (\ref{change}) as will be revealed below.

\begin{definition} A nonlinear connection $N^{a}_{\mu}$ is said to be homogeneous if it is
positively homogeneous of degree $1$ in the directional argument $y$.
\end{definition}

We denote by $\mathfrak{X}(TM)$ the set of all vector fields on $TM$.

\begin {definition} A $d$-connection $D$ on $TM$ is a linear connection on $TM$ which preserves by
parallelism the horizontal and vertical distribution: if $Y$ is a horizontal (vertical) vector field,
then $D_{X} Y$ is a horizontal (vertical)
vector field, \  for all $X\in \mathfrak{X}(TM)$.
\end{definition}
Consequently, as opposed to a linear connection on $TM$, which has in general eight coefficients, a $d$-connection
$D$ on $TM$ has only four coefficients. The coefficients
of a $d$-connection $D = (\Gamma^{\alpha}_{\mu\nu}, \ \Gamma^{a}_{b\nu}, \ C^{\alpha}_{\mu c}, \ C^{a}_{bc})$
are defined by
\begin{equation}\label{xzc}D_{\delta \nu}\delta_{\mu} = :\Gamma^{\alpha}_{\mu\nu}\delta_{\alpha}, \ \ \
D_{\delta \nu}\dot{\partial_b} = :\Gamma^{a}_{b\nu}\dot {\partial_{a}}; \ \ \
D_{\dot{\partial_c}}\delta_{\mu} =: C^{\alpha}_{\mu c}\delta_{\alpha}, \ \ \
D_{\dot{\partial_c}}\dot {\partial_{b}} =: C^{a}_{bc}\dot{\partial_{a}}.\end{equation}

By (\ref{change}) and (\ref{xzc}), the transformation formulae of a $d$-connection are given by:
$$\Gamma^{\alpha'}_{\mu'\nu'} = p^{\alpha'}_{\alpha} p^{\mu}_{\mu'} p^{\nu}_{\nu'}
\Gamma^{\alpha}_{\mu\nu} + p^{\alpha'}_{\epsilon} p^{\epsilon}_{\mu'\nu'}, \ \ \ \Gamma^{a'}_{b'\mu'} = p^{a'}_{a} p^{b}_{b'} p^{\mu}_{\mu'}
\Gamma^{a}_{b\mu} + p^{a'}_{c}p^{c}_{b'\mu'};$$
$$C^{\alpha'}_{\mu' c'} = p^{\alpha'}_{\alpha} p^{\mu}_{\mu'} p^{c}_{c'}
C^{\alpha}_{\mu c}, \ \ \ C^{a'}_{b'c'} = p^{a'}_{a} p^{b}_{b'} p^{c}_{c'}
C^{a}_{bc}.$$
\begin{definition} A $d$-tensor field $A$ on $TM$ of type $(p, r; \ q, s)$ is a tensor field on
$TM$ which can be locally expressed in the form
$$A = A^{{u_{1}}...{u_{p + r}}}_{{v_1}\ldots{v_{q+s}}}\partial_{u_{1}}\otimes\ldots
\otimes{\partial_{u_{p + r}}}\otimes dx^{v_1}\otimes\ldots\otimes dx^{v_{q + s}},$$
where $u_{i}\in \{\alpha_{i}, \ a_i\}, \ v_{j}\in \{\beta_{j}, \ b_{j}\}$,
$$ \ \partial_{u_{i}}\in \{\delta_{\alpha_{i}}, \ \dot{\partial}_{a_i}\}, \
dx^{v_{j}}\in \{dx^{\beta_j}, \ \delta y^{b_j}\}, \ \ i = 1,\ldots, p + r; \ j = 1,\ldots, q + s,$$
so that the number of $\alpha_{i}$'s = $p$, the number of $a_{i}$'s = $r$, the number of
$\beta_{j}$'s = $q$ and the number of $b_{j}$'s = $s$.
\end{definition}

Let $A = A^{\alpha a}_{\beta b}\delta_{\alpha}\otimes{\dot{\partial_a}}\otimes dx^{\beta}
\otimes \delta y^{b}$ be a $d$-tensor field of type $(1, 1; 1, 1)$.
If $$A = A^{\alpha' a'}_{\beta' b'}\delta_{\alpha'}\otimes{\dot{\partial_{a'}}}\otimes dx^{\beta'}
\otimes \delta y^{b'}$$ is the representation of $A$ in the new coordinate system $(x^{\mu'}, y^{a'})$, then, in view of (\ref{change}), we have
\begin{equation}  A^{\alpha' a'}_{\beta' b'} = p^{\alpha'}_{\alpha} p^{\beta}_{\beta'} p^{a'}_{a} p^{b}_{b'}A^{\alpha a}_{\beta b}.\end{equation}

Since both $p^{\alpha}_{\alpha'}$ and $p^{a}_{a'}$ depend on the positional argument $x$ only, as is clear from (\ref{crucial}), it follows that
{\it the transformation formula for the components of a $d$-tensor field on the tangent bundle $TM$ is similar in form to the transformation formula for the
components of a tensor field defined on the base manifold $M$.} In fact, this should be expected; a $d$-tensor field is defined in terms of the adapted basis {\it and not in terms of the natural basis}. The adapted basis, unlike the natural basis, has a transformation formula similar in form to the transformation formula of a tensor field defined on the base manifold $M$. For this reason, \emph{the general covariance of the geometric objects considered in this work is ensured}.

\bigskip

{\bf A comment on notation:} Throughout the paper, both Greek indices $\{\alpha, \beta, \mu,\ldots\}$ and
Latin indices $\{a, b, c,\ldots\}$ take the values $1,\ldots, n$. {\bf Greek} indices are used to denote
{\bf horizontal} entities, whereas {\bf Latin} indices are used to denote {\bf vertical}
 entities. Einstein convention is applied on both types of indices. Also, throughout, the symbol $|$ and $||$ will
denote the horizontal and vertical covariant derivatives respectively with respect to a given $d$-connection.

\begin{definition} The torsion tensor field $\bf{T}$ of a d-connection $D$ on $TM$ is defined by
\begin{equation}\label{def}{\bf T}(X, Y): = D_{X} Y - D_{Y} X - [X, Y]; \ \ \forall X, Y\in \mathfrak{X}(TM).\nonumber\end{equation}
\end{definition}

\begin{proposition} In the adapted basis $(\delta_{\alpha}, \ \dot{\partial_{a}})$,
the torsion tensor field ${\bf T}$ of a d-connection $D =
(\Gamma^{\alpha}_{\mu\nu}, \, \Gamma^{a}_{b \mu}, \, C^{\alpha}_{\mu
c}, \, C^{a}_{bc})$ is characterized by the $d$-tensor fields with
local coefficients $(\Lambda^{\alpha}_{\mu\nu}, \, R^{a}_{\mu\nu},
\, C^{\alpha}_{\mu c}, \, P^{a}_{\mu c}, \, T^{a}_{bc})$ defined by:
$$h{\bf T}(\delta_{\nu}, \ \delta_{\mu}) =: \Lambda^{\alpha}_{\mu\nu}\delta_{\alpha}, \ \ \ \ \
v{\bf T}(\delta_{\nu}, \ \delta_{\mu}) =:
R^{a}_{\mu\nu}\dot{\partial_a}$$
$$h{\bf T}(\dot{\partial_c}, \ \delta_{\mu}) =: C^{\alpha}_{\mu c}\delta_{\alpha}, \ \ \ \ \
v{\bf T}(\dot{\partial_c}, \ \delta_{\mu}) =: P^{a}_{\mu c}\dot{\partial_a}, \ \ \ \ \ \ v{\bf T}(\dot{\partial_c}, \
\dot{\partial_b}) =: T^{a}_{bc}\dot{\partial_a},$$
\begin{equation}\Lambda^{\alpha}_{\mu\nu} := \Gamma^{\alpha}_{\mu\nu} - \Gamma^{\alpha}_{\nu\mu}, \ \ \
R^{a}_{\mu\nu} := \delta_{\nu}N^{a}_{\mu} - \delta_{\mu}N^{a}_{\nu}, \ \ \ P^{a}_{\mu c}: = \dot{\partial_c} N^{a}_{\mu} - \Gamma^{a}_{c \mu}, \ \ \
T^{a}_{bc} := C^{a}_{bc} - C^{a}_{cb},\nonumber\end{equation}
where $h$ (resp. $v$) denotes the horizontal (resp. vertical) counterpart.

\end{proposition}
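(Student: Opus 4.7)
The plan is to evaluate the defining formula $\mathbf{T}(X,Y)=D_X Y-D_Y X-[X,Y]$ on each of the three relevant pairs of adapted-basis vectors $(\delta_\nu,\delta_\mu)$, $(\dot\partial_c,\delta_\mu)$, $(\dot\partial_c,\dot\partial_b)$, and then split the result into its horizontal and vertical components using the decomposition \eqref{Sum}. Since $D_XY$ is known explicitly on basis vectors by \eqref{xzc}, the only nontrivial input is the three Lie brackets among the adapted basis fields.

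First I would compute the Lie brackets. Because $\dot\partial_a=\partial/\partial y^a$ and $\partial/\partial x^\mu$ commute pairwise, a direct calculation using $\delta_\mu=\partial_\mu-N^a_\mu\dot\partial_a$ yields
\begin{equation*}
[\dot\partial_c,\dot\partial_b]=0,\qquad [\dot\partial_c,\delta_\mu]=-(\dot\partial_c N^a_\mu)\,\dot\partial_a,
\end{equation*}
and, after the cross terms $[\partial_\mu,-N^b_\nu\dot\partial_b]$ and $[-N^a_\mu\dot\partial_a,-N^b_\nu\dot\partial_b]$ are combined, the key identity
\begin{equation*}
[\delta_\mu,\delta_\nu]=(\delta_\nu N^b_\mu-\delta_\mu N^b_\nu)\,\dot\partial_b=R^b_{\mu\nu}\,\dot\partial_b.
\end{equation*}
This last bracket is the only one requiring care: one must keep track of both $\partial_\mu N^b_\nu$ and $N^a_\mu\dot\partial_a N^b_\nu$ terms so that they recombine precisely into $\delta_\mu N^b_\nu$, thereby producing the $d$-tensor $R^a_{\mu\nu}$ defined in the statement. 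This is the main obstacle, and it is essentially an organizational matter rather than a conceptual one.

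Next I would substitute into the torsion formula. Using \eqref{xzc} and the bracket computations above,
\begin{equation*}
\mathbf{T}(\delta_\nu,\delta_\mu)=\Gamma^\alpha_{\mu\nu}\delta_\alpha-\Gamma^\alpha_{\nu\mu}\delta_\alpha-[\delta_\nu,\delta_\mu]=\Lambda^\alpha_{\mu\nu}\delta_\alpha+R^a_{\mu\nu}\dot\partial_a,
\end{equation*}
\begin{equation*}
\mathbf{T}(\dot\partial_c,\delta_\mu)=C^\alpha_{\mu c}\delta_\alpha-\Gamma^a_{c\mu}\dot\partial_a+(\dot\partial_c N^a_\mu)\dot\partial_a=C^\alpha_{\mu c}\delta_\alpha+P^a_{\mu c}\dot\partial_a,
\end{equation*}
\begin{equation*}
\mathbf{T}(\dot\partial_c,\dot\partial_b)=C^a_{bc}\dot\partial_a-C^a_{cb}\dot\partial_a=T^a_{bc}\dot\partial_a.
\end{equation*}
Projecting each of these onto $H_u(TM)$ and $V_u(TM)$ yields all five claimed identities, with the additional automatic facts $v\mathbf{T}(\dot\partial_c,\dot\partial_b)=T^a_{bc}\dot\partial_a$ being purely vertical and $h\mathbf{T}(\dot\partial_c,\dot\partial_b)=0$.

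Finally, to conclude that the five objects $(\Lambda^\alpha_{\mu\nu},R^a_{\mu\nu},C^\alpha_{\mu c},P^a_{\mu c},T^a_{bc})$ are genuine $d$-tensor fields (and not mere coefficient arrays), I would invoke the transformation laws \eqref{change} and \eqref{nlinear} together with the $d$-connection transformation formulae already displayed in the text: since $\mathbf{T}$ is a tensor field and the horizontal/vertical projectors are globally defined, each projected local expression transforms as a $d$-tensor of the appropriate type, which completes the proof.
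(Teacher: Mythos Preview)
Your argument is correct and is exactly the standard computation one would perform: evaluate $\mathbf{T}(X,Y)=D_XY-D_YX-[X,Y]$ on adapted-basis pairs using the bracket identities $[\delta_\mu,\delta_\nu]=R^a_{\mu\nu}\dot\partial_a$, $[\dot\partial_c,\delta_\mu]=-(\dot\partial_cN^a_\mu)\dot\partial_a$, $[\dot\partial_c,\dot\partial_b]=0$, and then read off the horizontal and vertical parts. The paper itself does not supply a proof of this proposition; it is quoted as background from the references on Lagrange geometry, so there is nothing to compare against beyond noting that your direct computation is the canonical one.
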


It is to be noted that $R^{a}_{\mu\nu}$ is again the curvature of the nonlinear connection $N$.

\begin{definition}The curvature tensor field ${\bf R}$ of a $d$-connection $D$ is given by
\begin{equation}{\bf R}(X, Y)Z: = D_{X} D_{Y} Z - D_{Y}D_{X} Z - D_{[X, \ Y]}Z; \ \ \forall X, Y, Z\in \mathfrak{X}(TM).\nonumber\end{equation}
\end{definition}

In the adapted basis $(\delta_{\alpha}, \,\dot{\partial_a})$, the curvature tensor field ${\bf R}$ is characterized by the six $d$-tensor fields defined by
$$ \ {\bf R}(\delta_{\mu}, \ \delta_{\nu})\delta_{\beta} =: R^{\alpha}_{\beta\mu\nu}\delta_{\alpha};
\ \ \ \ \ {\bf R}(\delta_{\mu}, \ \delta_{\nu})\dot{\partial_{b}} =: R^{a}_{b\mu\nu}\dot{\partial_{a}},$$
$${\bf R}(\dot{\partial_c}, \ \delta_{\nu})\delta_{\beta} =: P^{\alpha}_{\beta\nu c}\delta_{\alpha};
\ \ \ \ \ \ {\bf R}(\dot{\partial_c}, \ \delta_{\nu})\dot{\partial_{b}} =: P^{a}_{b\nu c}\dot{\partial_{a}},$$
$${\bf R}(\dot{\partial_b}, \ \dot{\partial_c})\delta_{\beta} =: S^{\alpha}_{\beta bc}\delta_{\alpha};
\ \ \ \ \ \ {\bf R}(\dot{\partial_c}, \ \dot{\partial_d})\dot{\partial_{b}} =: S^{a}_{bcd}\dot{\partial_{a}}.$$

\begin{definition} An $hv$-metric on $TM$ is a covariant $d$-tensor field
${\bf {\cal G}} := h{\bf {\cal G}} + v{\bf {\cal G}}$ on $TM$, where $h{\bf {\cal G}}: = g_{\alpha\beta}\,dx^{\alpha}\otimes dx^{\beta}$, \
$v{\bf {\cal G}} := g_{ab}\,\delta y^{a}\otimes \delta y^{b}$ such that:
\begin{equation}g_{\alpha\beta} = g_{\beta\alpha}, \ \ det (g_{\alpha\beta})\neq 0; \ \
\ g_{ab} = g_{ba}, \ \ det (g_{ab})\neq 0.\nonumber\end{equation}
\end{definition}
\begin{definition} A $d$-connection $D$ on $TM$ is said to be metric or compatible with
the metric ${\bf {\cal G}}$ if $D_{X}{\bf {\cal G}} = 0, \ \forall X\in \mathfrak{X}(TM)$.
\end{definition}

In the adapted frame $(\delta_{\alpha}, \ \dot{\partial_a})$, the above condition can
be expressed in the form:
\begin{equation}\label{locally}g_{\alpha\beta|\mu} = g_{\alpha\beta||c} = g_{ab|\mu} = g_{ab||c} = 0.\nonumber\end{equation}

\begin{theorem}\label{metric} For a given $hv$-metric on $TM$, there exists a unique metric $d$-connection \ $\overcirc{D} =
( \ \overcirc{\Gamma^{\alpha}_{\mu\nu}}, \ \, \overcirc{\Gamma^{a}_{b\nu}}, \
\, \overcirc{C^{\alpha}_{\mu c}}, \ \, \overcirc{C^{a}_{bc}})$ on $TM$ with the properties that
\begin{description}
\item [(a)] $\overcirc{\Lambda}^{\alpha}_{\mu\nu} = \ \overcirc{\Gamma}^{\alpha}_{\mu\nu} - \
\overcirc{\Gamma}^{\alpha}_{\nu\mu} = 0, \ \, \ \ \ \ \overcirc{T}^{a}_{bc} = \ \overcirc{C}^{a}_{bc} - \ \overcirc{C}^{a}_{cb} = 0.$
\item [(b)] $\overcirc{\Gamma}^{a}_{b\nu} := \dot{\partial_b} N^{a}_{\nu} +
\frac{1}{2} \, g^{ac}(\delta_{\nu} g_{bc} - g_{dc} \, \dot{\partial_b}N^{d}_{\nu} -
g_{bd} \, \dot{\partial_c}N^{d}_{\nu}), \ \ \ \ \ \ \, \overcirc{C}^{\alpha}_{\mu c} := \frac{1}{2} \, g^{\alpha\epsilon}\dot{\partial_c}g_{\mu\epsilon}.$
\end{description}
In this case, the coefficients \ $\overcirc{\Gamma}^{\alpha}_{\mu\nu}$ and \ $\overcirc{C}^{a}_{bc}$ are necessarily
of the form
\begin{equation}\overcirc{\Gamma}^{\alpha}_{\mu\nu}: = \frac{1}{2} \, g^{\alpha\epsilon}(\delta_{\mu}
g_{\epsilon\nu} + \delta_{\nu}g_{\epsilon\mu} - \delta_{\epsilon}g_{\mu\nu}), \ \ \ \ \
\overcirc{C^{a}_{bc}}: = \frac{1}{2} \, g^{ad}(\dot{\partial_{b}}g_{dc} +
\dot{\partial_{c}}g_{db} - \dot{\partial_{d}}g_{bc}).\nonumber\end{equation}

\end{theorem}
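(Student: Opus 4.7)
The plan is to exploit the decoupling of the four metric-compatibility equations $g_{\alpha\beta|\mu}=0$, $g_{\alpha\beta||c}=0$, $g_{ab|\mu}=0$, $g_{ab||c}=0$ in the adapted frame. Expanding each using the $d$-connection formulas in (\ref{xzc}), each one involves exactly one of the four families of coefficients $\overcirc{\Gamma}^\alpha_{\mu\nu}$, $\overcirc{C}^\alpha_{\mu c}$, $\overcirc{\Gamma}^a_{b\nu}$, $\overcirc{C}^a_{bc}$. This reduces the theorem to four essentially independent sub-problems: two resolved by the classical Koszul trick combined with (a), and two prescribed directly by (b) and checked for consistency.

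For $\overcirc{\Gamma}^\alpha_{\mu\nu}$ I would apply the Koszul computation: expand $g_{\alpha\beta|\nu}=\delta_\nu g_{\alpha\beta}-\Gamma^\epsilon_{\alpha\nu}g_{\epsilon\beta}-\Gamma^\epsilon_{\beta\nu}g_{\alpha\epsilon}=0$, cyclically permute $(\alpha,\beta,\nu)$ to obtain three versions, and add two while subtracting the third. The symmetry of $g_{\alpha\beta}$ together with the hypothesis $\overcirc{\Lambda}^\alpha_{\mu\nu}=0$ from (a) causes four of the six $\Gamma$-contributions to cancel pairwise, leaving $2\overcirc{\Gamma}^\epsilon_{\alpha\beta}\,g_{\epsilon\nu}=\delta_\alpha g_{\beta\nu}+\delta_\beta g_{\nu\alpha}-\delta_\nu g_{\alpha\beta}$. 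Contracting with $g^{\nu\rho}$ yields the asserted Christoffel-type formula, and the derivation produces uniqueness at the same stroke. The identical argument, with $\delta$ replaced by $\dot{\partial}$, Greek replaced by Latin indices, and $\overcirc{T}^a_{bc}=0$ playing the role of $\overcirc{\Lambda}^\alpha_{\mu\nu}=0$, produces the claimed formula for $\overcirc{C}^a_{bc}$.

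The remaining two coefficients $\overcirc{\Gamma}^a_{b\nu}$ and $\overcirc{C}^\alpha_{\mu c}$ are prescribed outright by (b), so uniqueness is automatic; what must be verified is consistency, i.e.\ that the prescribed forms actually satisfy $g_{ab|\mu}=0$ and $g_{\alpha\beta||c}=0$, respectively. The second is immediate: substituting $\overcirc{C}^\epsilon_{\alpha c}=\tfrac12 g^{\epsilon\rho}\dot{\partial_c}g_{\alpha\rho}$ into $\dot{\partial_c}g_{\alpha\beta}-C^\epsilon_{\alpha c}g_{\epsilon\beta}-C^\epsilon_{\beta c}g_{\alpha\epsilon}$ yields $\dot{\partial_c}g_{\alpha\beta}-\tfrac12\dot{\partial_c}g_{\alpha\beta}-\tfrac12\dot{\partial_c}g_{\alpha\beta}=0$. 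For the first, substituting the formula for $\overcirc{\Gamma}^a_{b\nu}$ into $\delta_\mu g_{ab}-\Gamma^d_{a\mu}g_{db}-\Gamma^d_{b\mu}g_{ad}$, the contributions $g_{db}\dot{\partial_a}N^d_\mu$ and $g_{ad}\dot{\partial_b}N^d_\mu$ coming from the leading $\dot{\partial}N$ term in (b) cancel exactly the cross-terms generated by the Koszul-like combination inside (b), and the residue reduces to $\delta_\mu g_{ab}$, which cancels the leading $\delta_\mu g_{ab}$.

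The main difficulty is purely bookkeeping: one must be meticulous about which slot of $\Gamma^\alpha_{\mu\nu}$ is the direction of differentiation and which is the differentiated argument, and keep the Koszul cancellations straight while exploiting the symmetry imposed by (a) rather than the superficially similar symmetry of $g$. A final routine check, which I would leave to the end, is that the expressions obtained in the adapted frame transform correctly as $d$-connection coefficients under (\ref{change}); since $\delta_{\alpha'}=p^\alpha_{\alpha'}\delta_\alpha$ has the same form as an ordinary coordinate change on $M$, this proceeds exactly as in the classical Christoffel argument, with the inhomogeneous piece $p^{\alpha'}_\epsilon p^\epsilon_{\mu'\nu'}$ arising from the action of $\delta$ on the Jacobian $p^\alpha_{\alpha'}$.
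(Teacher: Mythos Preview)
Your argument is correct and is the standard Koszul-type derivation. Note, however, that the paper does not actually supply its own proof of this theorem: it is stated in the preliminaries section, whose opening sentence says the material is taken from the Miron--Anastasiei references \cite{GLS} and \cite{V}, and no proof is given in the paper itself. Your proof is precisely the one found in those references, so there is nothing to compare beyond confirming that your decoupling of the four metric-compatibility equations, the Christoffel/Koszul cyclic trick for $\overcirc{\Gamma}^{\alpha}_{\mu\nu}$ and $\overcirc{C}^{a}_{bc}$ under the symmetry hypotheses (a), and the direct verification of metricity for the prescribed mixed coefficients in (b) are all sound --- which they are.
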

We call the connection \,$\overcirc{D}$ the {\bf natural metric} $d$-connection.

\begin{definition} A nonlinear connection $N^{a}_{\mu}$ is said to be integrable if $R^{a}_{\mu\nu} = 0.$ In this case, the bracket of two horizontal
vector fields is a horizontal vector field.
\end{definition}

\begin{proposition} \label{SEE}Let \ $\overcirc{R}_{\beta\nu}: = \ \overcirc{R}^{\alpha}_{\beta\nu\alpha}, \ \ \overcirc{S}_{bc} :=  \ \overcirc{S}^{d}_{bcd}$
be the horizontal and vertical Ricci tensors respectively of the above metric $d$-connection. Then we have:

\begin{equation}\label{skewz}\overcirc{R}_{\beta\nu} - \ \overcirc{R}_{\nu\beta} = \, \mathfrak{S}_{\beta, \nu, \alpha} \
\overcirc{C}^{\alpha}_{\beta d}R^{d}_{\nu\alpha}, \ \ \ \ \ \overcirc{S}_{bc} = \ \overcirc{S}_{cb},\end{equation}
where $\mathfrak{S}_{\beta, \nu, \alpha}$ denotes a cyclic sum on the indices $\beta, \nu, \alpha$.
Consequently, \ $\overcirc{R}_{\beta\nu}$ is symmetric if the nonlinear connection
is integrable.
\end{proposition}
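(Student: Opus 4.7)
The plan is to apply the first Bianchi identity
\begin{equation*}
\mathfrak{S}_{X,Y,Z}\;{\bf R}(X,Y)Z \;=\; \mathfrak{S}_{X,Y,Z}\,\bigl\{ {\bf T}({\bf T}(X,Y),Z) + (\overcirc{D}_{X}{\bf T})(Y,Z) \bigr\}
\end{equation*}
to the natural metric $d$-connection $\overcirc{D}$, exploiting that by Theorem~\ref{metric} both the horizontal torsion $\overcirc{\Lambda}^{\alpha}_{\mu\nu}$ and the vertical torsion $\overcirc{T}^{a}_{bc}$ vanish, while the remaining torsion pieces $R^{a}_{\mu\nu}$, $\overcirc{C}^{\alpha}_{\mu c}$, $P^{a}_{\mu c}$ survive. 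Metric compatibility $\overcirc{D}\,{\bf {\cal G}} = 0$ will then be used to kill certain contractions.

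For the first identity, I feed Bianchi with the horizontal triple $(\delta_{\beta},\delta_{\nu},\delta_{\mu})$ and project onto the horizontal subspace. Since $\overcirc{\Lambda} = 0$, the torsion ${\bf T}(\delta_{\nu},\delta_{\mu}) = R^{a}_{\mu\nu}\dot{\partial}_{a}$ is purely vertical, from which a direct Leibniz expansion shows that $(\overcirc{D}_{\delta_{\nu}}{\bf T})(\delta_{\mu},\delta_{\beta})$ carries no horizontal component. The only horizontal contribution of ${\bf T}({\bf T}(\delta_{\nu},\delta_{\mu}),\delta_{\beta}) = R^{a}_{\mu\nu}\,{\bf T}(\dot{\partial}_{a},\delta_{\beta})$ comes from the $\overcirc{C}^{\alpha}_{\beta a}\delta_{\alpha}$ part of the torsion expansion. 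The horizontal projection of Bianchi therefore reads
\begin{equation*}
\overcirc{R}^{\alpha}_{\mu\beta\nu} + \overcirc{R}^{\alpha}_{\beta\nu\mu} + \overcirc{R}^{\alpha}_{\nu\mu\beta} \;=\; R^{a}_{\nu\beta}\,\overcirc{C}^{\alpha}_{\mu a} + R^{a}_{\mu\nu}\,\overcirc{C}^{\alpha}_{\beta a} + R^{a}_{\beta\mu}\,\overcirc{C}^{\alpha}_{\nu a}.
\end{equation*}
I now contract $\alpha = \mu$ and use three ingredients: (i) $\overcirc{R}^{\mu}_{\mu\beta\nu} = 0$, because metric compatibility forces $\overcirc{R}_{\alpha\beta\mu\nu}$ to be antisymmetric in its first pair (via ${\bf R}(X,Y)\cdot {\bf {\cal G}} = 0$); (ii) $\overcirc{R}^{\mu}_{\beta\nu\mu} = \overcirc{R}_{\beta\nu}$ by the definition of the horizontal Ricci; and (iii) $\overcirc{R}^{\mu}_{\nu\mu\beta} = -\overcirc{R}_{\nu\beta}$ by antisymmetry of the curvature in its last two indices. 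Rearranging and relabelling the three surviving right-hand terms produces the cyclic-sum expression stated in the proposition.

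For the second identity, the same Bianchi principle is applied to the vertical triple $(\dot{\partial}_{b},\dot{\partial}_{c},\dot{\partial}_{d})$. This time the computation trivialises: $[\dot{\partial}_{b},\dot{\partial}_{c}] = 0$ and $\overcirc{T}^{a}_{bc} = 0$ together force both the quadratic-torsion term and the derivative-of-torsion term to vanish identically. Hence Bianchi collapses to the classical cyclic form $\mathfrak{S}_{b,c,d}\,\overcirc{S}^{a}_{dbc} = 0$; contracting $a = b$ and using the vertical analogue $\overcirc{S}^{b}_{bcd} = 0$ of (i) yields $\overcirc{S}_{bc} = \overcirc{S}_{cb}$. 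The closing remark about integrability is then immediate: if $R^{a}_{\mu\nu} \equiv 0$, every term on the right-hand side of the first identity is annihilated.

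The main technical obstacle will be the careful verification that $(\overcirc{D}_{\delta_{\nu}}{\bf T})(\delta_{\mu},\delta_{\beta})$ has no horizontal component; this forces a full Leibniz expansion and repeated use of $\overcirc{\Lambda} = 0$ at every horizontal slot, together with the observation that all surviving terms carry a factor $R^{a}_{\cdot\cdot}\dot{\partial}_{a}$ or $P^{a}_{\cdot\cdot}\dot{\partial}_{a}$ and are hence vertical. Once this is secure, the remainder is bookkeeping with curvature antisymmetries.
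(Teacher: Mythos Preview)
Your argument via the first Bianchi identity with torsion is correct and is the standard route to this result. Note, however, that the paper itself does not supply a proof of this proposition: it sits in the preliminary section, whose contents are attributed to the references \cite{GLS} and \cite{V}. So there is no ``paper's own proof'' against which to compare; your proposal \emph{is} a full proof.

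A few small points worth tightening. First, your claim that $h\bigl((\overcirc{D}_{\delta_{\nu}}{\bf T})(\delta_{\mu},\delta_{\beta})\bigr)=0$ is exactly right, and the reason is simply that ${\bf T}(\delta_{\mu},\delta_{\beta})=R^{a}_{\beta\mu}\dot{\partial}_{a}$ is purely vertical (since $\overcirc{\Lambda}=0$), and the $d$-connection preserves the vertical distribution; the Leibniz terms ${\bf T}(\overcirc{D}_{\delta_{\nu}}\delta_{\mu},\delta_{\beta})$ and ${\bf T}(\delta_{\mu},\overcirc{D}_{\delta_{\nu}}\delta_{\beta})$ are again torsions of two horizontal vectors and hence vertical. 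No elaborate expansion is needed. Second, after contracting $\alpha=\mu$ on the right-hand side you obtain three terms, one of which is $R^{a}_{\nu\beta}\,\overcirc{C}^{\mu}_{\mu a}$; this trace term is precisely what the paper packages (somewhat unconventionally) into the cyclic-sum notation $\mathfrak{S}_{\beta,\nu,\alpha}\,\overcirc{C}^{\alpha}_{\beta d}R^{d}_{\nu\alpha}$, where the third label $\alpha$ is simultaneously cycled and contracted. Your three terms match this expression once the convention is unwound. Third, the vertical case collapses exactly as you say, because ${\bf T}(\dot{\partial}_{b},\dot{\partial}_{c})$ vanishes identically for $\overcirc{D}$ (both $\overcirc{T}^{a}_{bc}=0$ and $h{\bf T}(\dot{\partial}_{c},\dot{\partial}_{b})=0$ for any $d$-connection), so both Bianchi right-hand terms are zero before any projection.
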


\begin{theorem}\label{EE} Let
$\,\,\overcirc{\cal R} := g^{\alpha\beta}\,\,\overcirc{R}_{\alpha\beta}$ and $\,\,\overcirc{\cal S}: = h^{ab}\,\,\overcirc{S}_{ab}$ be the
horizontal and vertical Ricci scalars
of the natural metric $d$-connection respectively. Then the horizontal Einstein tensor
\begin{equation}\label{HE}\overcirc{G}_{\mu\sigma} := \ \overcirc{R}_{\mu\sigma} - \frac{1}{2} \, g_{\mu\sigma} \ \overcirc{\cal{R}}
\end{equation}
satisfies the identity
\begin{equation}\label{HES} \,\overcirc{G}^{\mu}\!\,_{\sigma{o\atop|}\mu} = R^{a}_{\sigma\mu}\,\,\overcirc{P}^{\mu}_{a} +
\frac{1}{2}\,R^{a}_{\alpha\mu}\,\,\overcirc{P}^{\alpha\mu}\!\,_{\sigma a},\end{equation}
where $\,\overcirc{G}^{\mu}_{\sigma}:= g^{\mu\epsilon}\,\overcirc{G}_{\epsilon\sigma}$, $\,\,\overcirc{P}_{\epsilon a} := - \,\,\overcirc{P}^{\beta}_{\epsilon\beta a}$,
$\,\,\overcirc{P}^{\mu}_{a} := g^{\mu\epsilon}\,\overcirc{P}_{\epsilon a}$, $\,\,\overcirc{P}^{\alpha\mu}\!\,_{\sigma a} := g^{\mu\epsilon}\,\overcirc{P}^{\alpha}_{\epsilon\sigma a}$.

Consequently, if the nonlinear connection is integrable, then $\,\overcirc{G}_{\mu\sigma}$ is
symmetric and satisfies the conservation law
\begin{equation}\label{hee}\,\,\overcirc{G}^{\mu}\!\,_{\sigma{o\atop|}\mu} = 0.\end{equation}
\par On the other hand, the vertical Einstein tensor  \
\begin{equation}\overcirc{G}_{ab} := \ \overcirc{S}_{ab} - \frac{1}{2} \, h_{ab} \ \overcirc{\cal{S}}\end{equation} is symmetric and satisfies the
conservation law
\begin{equation}\label{vee}\,\,\overcirc{G}^{a}\!\,_{b{o\atop||}a} = 0,\end{equation}
where \,$\overcirc{G}^{a}_{b}: = g^{ac}\,\,\overcirc{G}_{cb}.$
\end{theorem}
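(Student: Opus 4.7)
The strategy is to mimic the classical Riemannian derivation of the twice-contracted second Bianchi identity for the Einstein tensor, but carried out for the natural metric $d$-connection $\overcirc{D}$ on $TM$. The essential new feature is that the horizontal adapted frame is non-holonomic, with $[\delta_\mu,\delta_\nu]=R^{a}_{\mu\nu}\dot\partial_a$, so the horizontal Bianchi identity acquires correction terms proportional to $R^{a}_{\mu\nu}$. These corrections are precisely what produces the right-hand side of \eqref{HES} after the double contraction, and they disappear in the integrable case. Metric compatibility $g_{\alpha\beta|\mu}=g_{ab||c}=0$ under $\overcirc{D}$, recorded just before Theorem \ref{metric}, will be used repeatedly to commute index raising with $\overcirc{D}$-differentiation.

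For the horizontal assertion, I would first write the second horizontal Bianchi identity for $\overcirc{R}^{\alpha}_{\beta\mu\nu}$. In this $d$-connection setting it is not the purely Riemannian one: besides the cyclic sum of horizontal covariant derivatives of $\overcirc{R}^{\alpha}_{\beta\mu\nu}$, one gets additional pieces built from $R^{a}_{\mu\nu}$ and the mixed curvature $\overcirc{P}^{\alpha}_{\beta\rho a}$, the latter coming from the $\dot\partial_a$-component of $[\delta_\mu,\delta_\nu]$ acting on horizontal vectors. I would then contract on $\alpha$ and $\nu$ to form $\overcirc{R}_{\beta\mu}$, raise $\beta$ with $g^{\beta\sigma}$, and contract once more to produce $\overcirc{G}^{\mu}{}_{\sigma\,\overcirc{|}\,\mu}$. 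The factor $\tfrac12$ that decorates the second remainder term is inherited from the Ricci-scalar part of $\overcirc{G}_{\mu\sigma}$, and the two surviving correction terms rearrange into exactly $R^{a}_{\sigma\mu}\overcirc{P}^{\mu}_{a}+\tfrac12 R^{a}_{\alpha\mu}\overcirc{P}^{\alpha\mu}{}_{\sigma a}$ after lowering/raising with $g$ and using the definitions of $\overcirc{P}_{\epsilon a}$, $\overcirc{P}^{\mu}_{a}$, $\overcirc{P}^{\alpha\mu}{}_{\sigma a}$ in the statement. If $R^{a}_{\mu\nu}=0$, both terms vanish; Proposition \ref{SEE} simultaneously forces $\overcirc{R}_{\mu\sigma}=\overcirc{R}_{\sigma\mu}$, so $\overcirc{G}_{\mu\sigma}$ is symmetric and \eqref{hee} follows.

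For the vertical assertion, the geometry is structurally cleaner because the vertical adapted frame is holonomic, $[\dot\partial_b,\dot\partial_c]=0$, and because $\overcirc{C}^{a}_{bc}$ is symmetric in its lower indices by Theorem \ref{metric}(a). The vertical curvature $\overcirc{S}^{a}_{bcd}$ therefore satisfies a Bianchi identity of the same algebraic form as in the Riemannian case, and its twofold contraction gives the conservation law \eqref{vee} with no residual terms. Symmetry of $\overcirc{G}_{ab}$ is immediate from $\overcirc{S}_{bc}=\overcirc{S}_{cb}$, already established in Proposition \ref{SEE}.

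The main obstacle will be the combinatorial bookkeeping of the horizontal step: the $d$-connection version of the second Bianchi identity has considerably more terms than its Riemannian prototype, and I must verify carefully that after the double contraction all purely horizontal curvature contributions cancel, thanks to metric compatibility and the symmetry $\overcirc{\Gamma}^{\alpha}_{\mu\nu}=\overcirc{\Gamma}^{\alpha}_{\nu\mu}$, leaving only the two $R^{a}_{\mu\nu}$-driven terms in \eqref{HES}. Getting the index placements on $\overcirc{P}^{\alpha}_{\beta\nu a}$ and its raised-index avatars exactly aligned with the definitions in the statement is where the computation is most error-prone.
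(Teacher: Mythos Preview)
The paper does not actually prove this theorem: it is stated in the preliminary section with the blanket remark that ``the material covered in the present section may be found in \cite{GLS} and \cite{V}.'' So there is no proof in the paper against which to compare your argument.

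That said, your outline is the standard route taken in those references. The second Bianchi identity for the natural metric $d$-connection, with the non-holonomic correction terms arising from $[\delta_\mu,\delta_\nu]=R^{a}_{\mu\nu}\dot\partial_a$, is exactly what is contracted twice to produce \eqref{HES}; the vertical case is the torsion-free, holonomic analogue and yields \eqref{vee} without remainder. Your identification of the bookkeeping on the $\overcirc{P}$-terms as the delicate point is accurate, but there is no conceptual gap in the plan.
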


\end{mysect}

%%%%%%%%%%%%%%%%%%%%%%%%%%%%%%%%%%%%%% Section 2. Extended Absolute Parallelism Geometry  %%%%%%%%%%%%%%%%%%%%%%%%%%%%%%%%%%%%%%%%%%%%%%%%%%%

\begin{mysect}{Extended Absolute Parallelism Geometry (EAP-geometry)}

Searching for a viable modern formulation of conventional AP-geometry may seem not only desirable, but actually
essential. The structure of EAP-geometry may reveal some deep connection between the geometry of the tangent bundle, which is geometrically very rich, and the
conventional AP-geometry, which is physically quite successful. The inherent simplicity of EAP-geometry may indicate that such a connection is neither artificial
nor can be overlooked.

\bigskip

In this section, we give a brief review of the basic concepts of the EAP-geometry.
We shall limit ourselves to the necessary material needed for the construction of the field equations.
For a detailed exposition of EAP-geometry, we refer the reader to \cite{EAP}.

\bigskip

As in the previous section, $M$ is assumed to be a smooth paracompact manifold of dimension $n$. This insures the existence of a
nonlinear connection on
$TM$ so that the decomposition (\ref{Sum}) induced by the nonlinear connection holds.

\bigskip

We assume that \ $\undersym{\lambda}{i}$, $i = 1, \ldots, n$, are $n$ vector fields {\bf globally} defined
on $TM$. In the adapted basis $(\delta_{\alpha}, \ \dot{\partial_{a}})$, we have \
$\undersym{\lambda}{i} = \ h \, \undersym{\lambda}{i} + \ v \, \undersym{\lambda}{i} =
 \ \undersym{\lambda}{i}^{\alpha}{\delta_{\alpha}} +
\ \undersym{\lambda}{i}^{a}\dot{\partial_a}.$ We further assume that the $n$ horizontal vector fields $h \, \undersym{\lambda}{i}$ and
the $n$ vertical vector fields $v \, \undersym{\lambda}{i}$ are {\bf linearly
independent} so that
\begin{equation}\label{inverse}\undersym{\lambda}{i}^{\alpha} \ \undersym{\lambda}{i}_{\beta} = \delta^{\alpha}_{\beta}, \
\ \ \ \undersym{\lambda}{i}^{\alpha} \ \undersym{\lambda}{j}_{\alpha} = \delta_{ij}; \ \ \ \
\ \undersym{\lambda}{i}^{a} \ \undersym{\lambda}{i}_{b} = \delta^{a}_{b}, \
\ \ \ \undersym{\lambda}{i}^{a} \ \undersym{\lambda}{j}_{a} = \delta_{ij},\nonumber\end{equation}
where $( \ \undersym{\lambda}{i}_{\alpha})$ and $( \ \undersym{\lambda}{i}_{a})$ denote the
inverse matrices of $( \ \undersym{\lambda}{i}^{\alpha})$ and $( \ \undersym{\lambda}{i}^{a})$
respectively. We refer to the vector fields ($\,\undersym{\lambda}{i}^{\alpha}$ and $\,\undersym{\lambda}{i}^{a}$) $\lambda$'s as
the (horizontal and vertical) {\bf fundamental vector fields}. These $\,\,\undersym{\lambda}{i}$
correspond to the ordinary tetrad of the conventional AP-space. It should be noted that the fundamental vector fields depend on both the positional
argument $x$ and the directional argument $y$. Consequently, all geometric objects of the EAP-space are a function of both $x$ and $y$.

\bigskip

A manifold $M$ equipped with such independent vector fields on $TM$ is called an Extended Absolute Parallelism space (EAP-space) and is
denoted by $(TM, \lambda)$.

\begin{remark}\em{It should be noted that the existence of mesh indices on any geometric object does not in any way affect the
general covariance of the tensor fields involved. For example, the transformation formula
for $\undersym{\lambda}{i}^{\alpha}$ (a tensor field of type $(1, 0; 0, 0)$) is given by \,$\undersym{\lambda}{i}^{\alpha'} = p^{\alpha'}_{\alpha}\,
\undersym{\lambda}{i}^{\alpha}$. Moreover, if \, $\undersym{C}{i} = \,\undersym{\lambda}{i}^{\mu}C_{\mu}$, then, noting that 
$C_{\mu}$ transforms according to the law $C_{\mu'} = p^{\mu}_{\mu'}C_{\mu}$, the $\,\undersym{C}{i}$'s,
$i = 1,\ldots, n$, {\it transform as a set of scalars}. Indeed, we have
\\[- 0.2 cm]$$\undersym{C'}{i} = \,\undersym{\lambda}{i}^{\mu'} C_{\mu'} = p^{\mu'}_{\mu} p^{\alpha}_{\mu'}\,\undersym{\lambda}{i}^{\mu}C_{\alpha} =
\delta^{\alpha}_{\mu} \,\,\undersym{\lambda}{i}^{\mu}C_{\alpha} =
 \,\undersym{\lambda}{i}^{\mu} C_{\mu} = \,\undersym{C}{i}.$$
On the other hand, if a summation is performed on the mesh index $i$, then in this case $i$ acts as a dummy index and hence does not again affect
the covariance of the tensor fields involved.}
\end{remark}

We will use the symbol $\lambda$ without
the subscript $i$ to denote any one of the vector fields \
$\undersym{\lambda}{i} \ (i = 1, \ldots, n)$. The index $i$
will appear {\it only when summation is performed.}

\bigskip

We set
\begin{equation}\label{apm}g_{\alpha\beta}: = \ \undersym{\lambda}{i}_{\alpha} \ \undersym{\lambda}{i}_{\beta}, \ \
\ \ g_{ab}: = \ \undersym{\lambda}{i}_{a} \ \undersym{\lambda}{i}_{b}.\nonumber\end{equation} Then, clearly,
$${\bf {\cal G}} =  g_{\alpha\beta} \,dx^{\alpha}\otimes dx^{\beta} + g_{ab}\,\delta y^{a}\otimes \delta y^{b}$$
is an $hv$-metric on $TM$. Moreover, the inverse of the matrices $(g_{\alpha\beta})$ and $(g_{ab})$
are given by $(g^{\alpha\beta})$ and $(g^{ab})$ respectively, where
\begin{equation}g^{\alpha\beta} = \ \undersym{\lambda}{i}^{\alpha} \
\undersym{\lambda}{i}^{\beta}, \ \ \ \ g^{ab} = \ \undersym{\lambda}{i}^{a} \
\undersym{\lambda}{i}^{b}.\nonumber\end{equation}

\begin{theorem}\label{apc} The $d$-connection $D = (\Gamma^{\alpha}_{\mu\nu}, \
\Gamma^{a}_{b\nu}, \ C^{\alpha}_{\mu c}, \ C^{a}_{bc})$ defined by
\begin{equation}\Gamma^{\alpha}_{\mu\nu} =
\ \undersym{\lambda}{i}^{\alpha}(\delta_{\nu} \ \undersym{\lambda}{i}_{\mu}),  \ \ \
\Gamma^{a}_{b\nu} = \ \undersym{\lambda}{i}^{a}(\delta_{\nu} \ \undersym{\lambda}{i}_{b});  \ \
\ C^{\alpha}_{\mu c} = \ \undersym{\lambda}{i}^{\alpha} (\dot{\partial_c} \
\undersym{\lambda}{i}_{\mu}), \ \ \ C^{a}_{bc} = \ \undersym{\lambda}{i}^{a} (\dot{\partial_c} \
\undersym{\lambda}{i}_{b})\nonumber\end{equation}
satisfies the AP-condition
\begin{equation}\label{NY} {\lambda}^{\alpha}\!\, _{|\mu} = {\lambda}^{\alpha}\!\, _{||c} =
{\lambda}^{a}\!\, _{|\mu} = {\lambda}^{a}\!\, _{||c} = 0.\nonumber\end{equation}
Consequently, $D$ is a metric $d$-connection.
\end{theorem}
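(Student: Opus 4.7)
The plan is to verify the four vanishing conditions directly from the given definitions of the connection coefficients and then deduce metric compatibility as an immediate corollary. The calculation is essentially one identity worked four times, with the only subtlety being the careful handling of the mesh index $i$ versus the tensor indices.

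First I would compute the horizontal covariant derivative $\,\undersym{\lambda}{i}^{\alpha}\!\,_{|\mu}$. By definition of the horizontal covariant derivative with respect to a $d$-connection,
\[
\undersym{\lambda}{i}^{\alpha}\!\,_{|\mu} = \delta_{\mu}\bigl(\,\undersym{\lambda}{i}^{\alpha}\bigr) + \Gamma^{\alpha}_{\beta\mu}\,\undersym{\lambda}{i}^{\beta}.
\]
Substituting $\Gamma^{\alpha}_{\beta\mu} = \,\undersym{\lambda}{j}^{\alpha}(\delta_{\mu}\,\undersym{\lambda}{j}_{\beta})$ and using the duality identity $\,\undersym{\lambda}{j}_{\beta}\,\undersym{\lambda}{i}^{\beta} = \delta_{ji}$ in its differentiated form $(\delta_{\mu}\,\undersym{\lambda}{j}_{\beta})\,\undersym{\lambda}{i}^{\beta} = -\,\undersym{\lambda}{j}_{\beta}\,\delta_{\mu}(\,\undersym{\lambda}{i}^{\beta})$, together with $\,\undersym{\lambda}{j}^{\alpha}\,\undersym{\lambda}{j}_{\beta} = \delta^{\alpha}_{\beta}$, the two terms cancel and $\,\undersym{\lambda}{i}^{\alpha}\!\,_{|\mu} = 0$. (Equivalently, one can work with the lower-index form $\,\undersym{\lambda}{i}_{\alpha|\mu}$ and use $\,\undersym{\lambda}{j}^{\beta}\,\undersym{\lambda}{i}_{\beta} = \delta_{ji}$ to obtain the same cancellation.)

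Next I would repeat exactly this argument for the remaining three derivatives: replacing $\delta_{\mu}$ by $\dot{\partial_c}$ gives $\,\undersym{\lambda}{i}^{\alpha}\!\,_{||c} = 0$ using the definition of $C^{\alpha}_{\mu c}$, and switching from the horizontal coefficients $(\,\undersym{\lambda}{i}^{\alpha},\,\undersym{\lambda}{i}_{\alpha})$ to the vertical coefficients $(\,\undersym{\lambda}{i}^{a},\,\undersym{\lambda}{i}_{a})$ (with the corresponding duality relations and the defining formulas for $\Gamma^{a}_{b\nu}$ and $C^{a}_{bc}$) gives $\,\undersym{\lambda}{i}^{a}\!\,_{|\mu} = 0$ and $\,\undersym{\lambda}{i}^{a}\!\,_{||c} = 0$. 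The structure of the calculation is identical in all four cases.

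Finally, metric compatibility follows at once. Since $g_{\alpha\beta} = \,\undersym{\lambda}{i}_{\alpha}\,\undersym{\lambda}{i}_{\beta}$ and $g_{ab} = \,\undersym{\lambda}{i}_{a}\,\undersym{\lambda}{i}_{b}$, the Leibniz rule applied with the sum over the mesh index $i$ gives
\[
g_{\alpha\beta|\mu} = \,\undersym{\lambda}{i}_{\alpha|\mu}\,\undersym{\lambda}{i}_{\beta} + \,\undersym{\lambda}{i}_{\alpha}\,\undersym{\lambda}{i}_{\beta|\mu} = 0,
\]
and the same for $g_{\alpha\beta||c}$, $g_{ab|\mu}$, $g_{ab||c}$. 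Hence $D$ is metric in the sense of the definition preceding Theorem~\ref{metric}. There is no real obstacle; the only care required is to keep clear the distinction between $i$ (a mesh label over which one sums independently of the tensorial indices) and the tensor indices $\alpha,\beta,a,b,\ldots$, and to invoke the correct duality relation at the cancellation step.
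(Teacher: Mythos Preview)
Your proof is correct and is exactly the natural direct verification one would expect. The paper itself states this theorem without proof (it is quoted from the reference on EAP-geometry), so there is no alternative argument to compare against; your computation via the duality relations $\,\undersym{\lambda}{j}_{\beta}\,\undersym{\lambda}{i}^{\beta}=\delta_{ji}$ and $\,\undersym{\lambda}{j}^{\alpha}\,\undersym{\lambda}{j}_{\beta}=\delta^{\alpha}_{\beta}$, followed by Leibniz for the metric, is the standard route.
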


This $d$-connection is referred to as the {\bf canonical} $d$-connection.

\begin{definition}The torsion tensor field ${\bf T} = (\Lambda^{\alpha}_{\mu\, \nu}, \,R^{a}_{\mu\nu},
\,C^{\alpha}_{\mu c}, \,P^{a}_{\mu c}, \,T^{a}_{bc})$ of the canonical $d$-connection is referred to
as the torsion of the EAP-space.
\end{definition}

\begin{definition} The contortion tensor field of the EAP-space is defined by
\begin{equation}{\bf C}(X, \ Y): = D_{Y}X - \ \overcirc{D}_{Y}X; \ \ \forall X, Y\in \mathfrak{X}(TM),\nonumber\end{equation}
where $D_{Y}$ is  the covariant derivatives with respect to the cannonical $d$-connection
and \, $\overcirc{D}_{Y}$ is the covariant derivatives with respect to the natural metric
$d$-connection obtained in Theorem \ref{metric} corresponding to the metric defined by (\ref {apm}).
\end{definition}

In the adapted basis $(\delta_{\mu}, \ \dot{\partial_a})$, the contortion tensor ${\bf C}$ is characterized by the $d$-tensor fields
with local coefficients $(\gamma^{\alpha}_{\mu\nu}, \, \gamma^{a}_{b\nu}, \, \gamma^{\alpha}_{\mu c}, \, \gamma^{a}_{bc})$ defined by:
\begin{equation}\label{cnt}\gamma^{\alpha}_{\mu\nu} := \Gamma^{\alpha}_{\mu\nu} -
\ \overcirc{\Gamma}^{\alpha}_{\mu\nu}, \ \ \ \gamma^{a}_{b\mu}: =
\Gamma^{a}_{b\mu} -
\ \overcirc{\Gamma}^{a}_{b\mu}; \ \ \ \ \gamma^{\alpha}_{\mu c}: = C^{\alpha}_{\mu c} -
\ \overcirc{C}^{\alpha}_{\mu c}, \ \ \ \gamma^{a}_{bc}: = C^{a}_{bc} - \ \overcirc{C}^{a}_{bc}.\end{equation}
Consequently,
\begin{equation}\label{torsion}\Lambda^{\alpha}_{\mu\nu} =  \gamma^{\alpha}_{\mu\nu} - \gamma^{\alpha}_{\nu\mu}, \ \ \
\ \ \ \ \ T^{a}_{bc} = \gamma^{a}_{bc} - \gamma^{a}_{cb}.\end{equation}
We set
\begin{equation}\Omega^{\alpha}_{\mu\nu} := \gamma^{\alpha}_{\mu\nu} + \gamma^{\alpha}_{\nu\mu}, \ \ \ \ \ \ \ \ \Omega^{a}_{bc} :=
\gamma^{a}_{bc} + \gamma^{a}_{cb}.\nonumber\end{equation}

By definition of the canonical $d$-connection and (\ref{cnt}), we have

\begin{proposition} The contortion tensor can be expressed in terms of
the ${\lambda}$'s in the form:
\begin{equation}\label{contortion}\gamma^{\alpha}_{\mu\nu} = \ \undersym{\lambda}{i}^{\alpha} \
\undersym{\lambda}{i}_{\mu{o\atop{|}}\nu}, \ \
\gamma^{a}_{b \mu} = \ \undersym{\lambda}{i}^{a} \
\undersym{\lambda}{i}_{b {o\atop{|}}\mu}, \ \ \gamma^{\alpha}_{\mu c} =
\ \undersym{\lambda}{i}^{\alpha} \
\undersym{\lambda}{i}_{\mu{o\atop{||}}c}, \ \ \ \gamma^{a}_{b c} =
\ \undersym{\lambda}{i}^{a} \ \undersym{\lambda}{i}_{b {o\atop{||}}c}.\nonumber\end{equation}
\end{proposition}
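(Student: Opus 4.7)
The plan is to verify each of the four identities by a short direct calculation, using the explicit formulas for the canonical $d$-connection coefficients from Theorem \ref{apc}, the definition of the contortion $d$-tensor field (\ref{cnt}), and the duality relations (\ref{inverse}) between the fundamental vector fields and their inverses.

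I would start with the first identity. By definition of the horizontal covariant derivative with respect to the natural metric $d$-connection $\overcirc{D}$, applied to the covariant vector $\undersym{\lambda}{i}_{\mu}$, one has
\begin{equation*}
\undersym{\lambda}{i}_{\mu{o\atop{|}}\nu} \;=\; \delta_{\nu}\undersym{\lambda}{i}_{\mu} \;-\; \overcirc{\Gamma}^{\beta}_{\mu\nu}\,\undersym{\lambda}{i}_{\beta}.
\end{equation*}
Multiplying both sides by $\undersym{\lambda}{i}^{\alpha}$ (summed on the mesh index $i$) and using the duality relation $\undersym{\lambda}{i}^{\alpha}\undersym{\lambda}{i}_{\beta}=\delta^{\alpha}_{\beta}$ from (\ref{inverse}), the second term collapses to $\overcirc{\Gamma}^{\alpha}_{\mu\nu}$, while the first term is precisely the canonical coefficient $\Gamma^{\alpha}_{\mu\nu}$ as written in Theorem \ref{apc}. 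Invoking the definition (\ref{cnt}) then gives $\undersym{\lambda}{i}^{\alpha}\undersym{\lambda}{i}_{\mu{o\atop{|}}\nu}=\Gamma^{\alpha}_{\mu\nu}-\overcirc{\Gamma}^{\alpha}_{\mu\nu}=\gamma^{\alpha}_{\mu\nu}$.

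The remaining three identities are obtained by the same argument, one has only to track which pair of indices (Greek/Latin, horizontal/vertical) is involved and, correspondingly, which component of $\overcirc{D}$ is being used: the second identity uses $h\overcirc{D}$ acting on the vertical component $\undersym{\lambda}{i}_{b}$ with $\overcirc{\Gamma}^{a}_{b\mu}$; the third uses $v\overcirc{D}$ acting on the horizontal component $\undersym{\lambda}{i}_{\mu}$ with $\overcirc{C}^{\alpha}_{\mu c}$; and the fourth uses $v\overcirc{D}$ acting on $\undersym{\lambda}{i}_{b}$ with $\overcirc{C}^{a}_{bc}$. In each case, contracting the defining relation of the relevant covariant derivative with the appropriate inverse $\lambda$ and using (\ref{inverse}) extracts the canonical coefficient from the leading $\delta_{\nu}$- or $\dot{\partial}_{c}$-derivative term, and the remaining term supplies the corresponding coefficient of $\overcirc{D}$ with the opposite sign, so that (\ref{cnt}) gives the claim.

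There is essentially no obstacle here: the statement is a purely algebraic restatement of the fact that the canonical $d$-connection is, by its very definition, characterized by the vanishing of the $\lambda$'s under its own covariant differentiation (the AP-condition of Theorem \ref{apc}), so the $\overcirc{D}$-covariant derivatives of the $\lambda$'s must measure exactly the deviation between the two $d$-connections. The only point one must be a little careful about is the index bookkeeping, in particular distinguishing clearly between the mesh index $i$ (on which summation is performed and which does not affect covariance by Remark above) and the tensorial indices.
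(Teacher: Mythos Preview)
Your proof is correct and follows exactly the route indicated by the paper, which simply states that the proposition follows ``by definition of the canonical $d$-connection and (\ref{cnt})''; your explicit unpacking of the covariant derivative, contraction with $\undersym{\lambda}{i}^{\alpha}$, and invocation of the duality relations (\ref{inverse}) is precisely the computation the paper leaves implicit.
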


\begin{proposition}\label{skew} Let $\gamma_{\alpha\mu\nu}: =
g_{\alpha\epsilon}\gamma^{\epsilon}_{\mu\nu}, \
\gamma_{ab\mu}: = g_{ac}\gamma^{c}_{b \mu}, \ \gamma_{\alpha\mu c} :=
g_{\alpha\epsilon}\gamma^{\epsilon}_{\mu c}$, \ $\gamma_{abc} := g_{ad}\gamma^{d}_{bc}$.
Then each of the above defined $d$-tensor fields is skew-symmetric in the
first pair of indices.
Consequently,
$\gamma^{\alpha}_{\alpha \nu} = \gamma^{a}_{a\mu} = \gamma^{\alpha}_{\alpha c} =
\gamma^{a}_{a c} = 0$.
\end{proposition}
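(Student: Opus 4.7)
The plan is to exploit the fact that both the canonical $d$-connection $D$ and the natural metric $d$-connection $\overcirc{D}$ associated to the $hv$-metric $\mathcal{G}$ built from the $\lambda$'s are \emph{metric} $d$-connections: the former by Theorem \ref{apc} (it satisfies the AP-condition, hence in particular kills the metric), and the latter by Theorem \ref{metric}. Consequently, both connections annihilate each of $g_{\alpha\beta}$ and $g_{ab}$ under both the horizontal and the vertical covariant derivative.

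First I would treat $\gamma^{\alpha}_{\mu\nu}$. Writing out $g_{\alpha\beta |\mu}=0$ for $D$ and for $\overcirc{D}$ separately and subtracting, the common term $\delta_{\mu}g_{\alpha\beta}$ cancels and one obtains
\begin{equation*}
\gamma^{\epsilon}_{\alpha\mu}\,g_{\epsilon\beta}+\gamma^{\epsilon}_{\beta\mu}\,g_{\alpha\epsilon}=0,
\end{equation*}
which on lowering the upper index reads $\gamma_{\beta\alpha\mu}+\gamma_{\alpha\beta\mu}=0$, i.e.\ skew-symmetry in the first pair of indices. The same three-line argument, applied to the remaining metric conditions $g_{ab|\mu}=0$, $g_{\alpha\beta||c}=0$ and $g_{ab||c}=0$, respectively, gives the skew-symmetry of $\gamma_{ab\mu}$, $\gamma_{\alpha\mu c}$ and $\gamma_{abc}$ in their first pair of indices. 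At each step one must only be careful to match the right metric compatibility condition to the right contortion piece, since the $hv$-splitting produces four separate conditions rather than one.

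For the vanishing of the traces, I would contract the identity $\gamma_{\alpha\beta\mu}+\gamma_{\beta\alpha\mu}=0$ with the symmetric tensor $g^{\alpha\beta}$; the left-hand side becomes $2\,g^{\alpha\beta}\gamma_{\alpha\beta\mu}=2\,\gamma^{\beta}_{\beta\mu}$, so $\gamma^{\alpha}_{\alpha\nu}=0$. The identical computation with $g^{ab}$, $g^{\alpha\beta}$ and $g^{ab}$ respectively yields $\gamma^{a}_{a\mu}=\gamma^{\alpha}_{\alpha c}=\gamma^{a}_{ac}=0$.

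There is no substantive obstacle: the entire proposition is an immediate consequence of the fact that the difference of two metric $d$-connections, when its upper index is lowered by the metric, is skew-symmetric in the first pair of indices, exactly as in the classical Riemannian case. The only mild bookkeeping point is the separation of the four metricity conditions afforded by the $hv$-decomposition, which is what produces four separate skew-symmetry relations instead of one.
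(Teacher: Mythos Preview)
Your argument is correct and is the standard one: since both $D$ and $\overcirc{D}$ are metric $d$-connections (Theorems \ref{apc} and \ref{metric}), subtracting the two metricity conditions for each of $g_{\alpha\beta|\mu}$, $g_{ab|\mu}$, $g_{\alpha\beta||c}$, $g_{ab||c}$ immediately yields the four skew-symmetry relations, and contraction with the inverse metric gives the vanishing traces. The paper itself states Proposition \ref{skew} without proof; the only implicit hint is its placement immediately after the formula $\gamma^{\alpha}_{\mu\nu} = \undersym{\lambda}{i}^{\alpha}\,\undersym{\lambda}{i}_{\mu{o\atop|}\nu}$, from which one could alternatively argue via $\gamma_{\alpha\mu\nu} = \undersym{\lambda}{i}_{\alpha}\,\undersym{\lambda}{i}_{\mu{o\atop|}\nu}$ together with $g_{\alpha\mu{o\atop|}\nu} = (\undersym{\lambda}{i}_{\alpha}\,\undersym{\lambda}{i}_{\mu})_{o\atop|}{}_{\nu} = 0$ --- but this is the same metricity argument in different clothing.
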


We set
\begin{equation}\Lambda^{\alpha}_{\mu\alpha} = \gamma^{\alpha}_{\mu\alpha} = :C_{\mu},  \ \ \ \ \ \
T^{a}_{ba} = \gamma^{a}_{ba} = :C_{b}.\nonumber\end{equation}
\begin{definition}We refer to ${\bf B}: = (C_{\mu}, \, C_{a})$ as the basic vector field of the EAP-space.
\end{definition}
\begin{proposition}\label{Bianchi} The following identities hold:

\begin{equation}\label{h}\Lambda^{\alpha}_{\mu\nu|\alpha} =
(C_{\mu|\nu} - C_{\nu|\mu}) + C_{\epsilon}\Lambda^{\epsilon}_{\mu\nu} + \
\mathfrak{S}_{\mu, \nu, \alpha} C^{\alpha}_{\nu a}R^{a}_{\mu\alpha}\end{equation}

\begin{equation}\label{v}T^{d}_{bc||d} = (C_{b||c} - C_{c||b}) + C_{d}T^{d}_{bc}.\end{equation}

\end{proposition}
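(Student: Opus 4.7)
The plan is to derive both identities as contracted forms of the first Bianchi identity for the canonical $d$-connection, exploiting the crucial consequence of the AP-condition $\undersym{\lambda}{i}^{\alpha}_{\ |\mu}=\undersym{\lambda}{i}^{\alpha}_{\ ||c}=0$ (Theorem \ref{apc}): all four curvature $d$-tensors of the canonical $d$-connection vanish identically. This reduces the general first Bianchi identity from the schematic form ``$\mathfrak{S}\,\mathbf{R}=\mathfrak{S}\,(DT+T\circ T)$'' to the torsion-only relation $\mathfrak{S}\,(DT+T\circ T)=0$, which after contraction yields exactly (\ref{h}) and (\ref{v}).

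For the horizontal identity (\ref{h}), I would start from the Jacobi identity on the triple $(\delta_\mu,\delta_\nu,\delta_\alpha)$, write the canonical first Bianchi identity
\begin{equation*}
\mathfrak{S}_{\mu,\nu,\alpha}\bigl\{\mathbf{R}(\delta_\mu,\delta_\nu)\delta_\alpha - (D_{\delta_\mu}\mathbf{T})(\delta_\nu,\delta_\alpha) - \mathbf{T}\bigl(\mathbf{T}(\delta_\mu,\delta_\nu),\delta_\alpha\bigr)\bigr\}=0,
\end{equation*}
and set the horizontal curvature coefficient $R^{\beta}{}_{\sigma\mu\nu}$ equal to zero. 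Now expand $\mathbf{T}(\delta_\mu,\delta_\nu)=\Lambda^{\sigma}_{\nu\mu}\delta_\sigma+R^{a}_{\nu\mu}\dot\partial_a$ using Proposition 1.3 and the bracket formula $[\delta_\mu,\delta_\nu]=R^{a}_{\mu\nu}\dot\partial_a$. The second application of $\mathbf{T}$ then produces two kinds of horizontal contributions to the $\delta_\beta$-component: a pure $\Lambda\Lambda$ term from the horizontal piece $\Lambda^{\sigma}_{\nu\mu}\delta_\sigma$, and a mixed term $R^{a}_{\nu\mu}\,C^{\beta}_{\alpha a}\delta_\beta$ coming from the vertical piece $R^{a}_{\nu\mu}\dot\partial_a$ paired with $hT(\dot\partial_a,\delta_\alpha)=C^{\beta}_{\alpha a}\delta_\beta$. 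Finally, contract the horizontal component equation over $\beta=\alpha$: the derivative piece becomes $\mathfrak{S}_{\mu,\nu,\alpha}\Lambda^{\alpha}_{\nu\alpha|\mu}=C_{\nu|\mu}-C_{\mu|\nu}+\Lambda^{\alpha}_{\mu\nu|\alpha}$ (using $\Lambda^{\alpha}_{\nu\alpha}=C_\nu$ and the antisymmetry in $\mu,\nu$), the $\Lambda\Lambda$ terms collapse to $C_{\epsilon}\Lambda^{\epsilon}_{\mu\nu}$ by Proposition \ref{skew} (which gives the tracelessness $\gamma^{\alpha}_{\alpha\nu}=0$), and the mixed terms assemble into exactly $\mathfrak{S}_{\mu,\nu,\alpha}C^{\alpha}_{\nu a}R^{a}_{\mu\alpha}$. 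Rearranging gives (\ref{h}).

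For the vertical identity (\ref{v}), the procedure is completely analogous on the triple $(\dot\partial_b,\dot\partial_c,\dot\partial_d)$, but now the vertical distribution is involutive, $[\dot\partial_b,\dot\partial_c]=0$, so no analogue of $R^{a}_{\mu\nu}$ enters. The vertical torsion component of $\mathbf{T}(\dot\partial_b,\dot\partial_c)$ is just $T^{a}_{cb}\dot\partial_a$ (no $P$-term is produced because $\dot\partial_c,\dot\partial_b$ are both vertical), and $\mathbf{R}(\dot\partial_b,\dot\partial_c)\dot\partial_d=S^{a}_{dbc}\dot\partial_a=0$ by the AP-condition. Contracting on $a=d$ and using $T^{d}_{bd}=C_b$ together with the vertical analogue of Proposition \ref{skew} (so that the $T\cdot T$ terms reduce to $C_{d}T^{d}_{bc}$) directly yields (\ref{v}) with no cyclic-sum correction.

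The main obstacle is the horizontal case: one must bookkeep correctly which terms of the expanded Bianchi identity are genuinely horizontal, carry the correct signs inherited from $\mathbf{T}(X,Y)=D_XY-D_YX-[X,Y]$ and from $hT(\delta_\nu,\delta_\mu)=\Lambda^{\alpha}_{\mu\nu}\delta_\alpha$, and verify that the remaining mixed $R$-$C$ contributions conspire into the single symmetrized expression $\mathfrak{S}_{\mu,\nu,\alpha}C^{\alpha}_{\nu a}R^{a}_{\mu\alpha}$ rather than any spurious combination. Once the sign conventions are fixed uniformly, the contraction is mechanical.
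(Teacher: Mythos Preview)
Your approach is correct. The paper itself states Proposition~\ref{Bianchi} without proof (it is part of the survey of EAP-geometry imported from the reference \cite{EAP}), so there is no in-text argument to compare against. Your route via the first Bianchi identity for the canonical $d$-connection, combined with the vanishing of all its curvature $d$-tensors (a direct consequence of the AP-condition in Theorem~\ref{apc}), is precisely the standard mechanism behind identities of this type in absolute-parallelism frameworks, and it carries over to the EAP setting exactly as you describe.

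One small comment on the bookkeeping you flag as the main obstacle: when you contract the horizontal Bianchi relation on the free upper index, it is cleanest to first write the cyclic sum out explicitly as three terms with free indices $(\mu,\nu,\alpha)$ and upper index $\beta$, and only then set $\beta=\alpha$ and sum. The derivative block then reads $\Lambda^{\alpha}_{\mu\nu|\alpha}+\Lambda^{\alpha}_{\nu\alpha|\mu}+\Lambda^{\alpha}_{\alpha\mu|\nu}$, which, using $\Lambda^{\alpha}_{\nu\alpha}=C_{\nu}$ and $\Lambda^{\alpha}_{\alpha\mu}=-C_{\mu}$, gives $\Lambda^{\alpha}_{\mu\nu|\alpha}+C_{\nu|\mu}-C_{\mu|\nu}$. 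For the quadratic $\Lambda\Lambda$ block, the three contracted terms are $\Lambda^{\epsilon}_{\mu\nu}\Lambda^{\alpha}_{\epsilon\alpha}$, $\Lambda^{\epsilon}_{\nu\alpha}\Lambda^{\alpha}_{\epsilon\mu}$ and $\Lambda^{\epsilon}_{\alpha\mu}\Lambda^{\alpha}_{\epsilon\nu}$; the last two cancel by relabeling $\epsilon\leftrightarrow\alpha$ and using antisymmetry, leaving $-\Lambda^{\epsilon}_{\mu\nu}C_{\epsilon}$. The $R$--$C$ block is already in cyclic form. Rearranging yields (\ref{h}) with the stated sign. The vertical case is the same computation with $R^{a}_{\mu\nu}$ absent, giving (\ref{v}) directly.
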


Now, we list in Table 1 below some second rank tensors available in
the EAP-space. These tensors will play an important role in the
constructed field equations.

\newpage
\begin{center}{\bf Table 1: Fundamental second rank tensors of EAP-space}\\[0.1 cm]
{\begin{tabular}{|c|c|c|c|}\hline
\multicolumn{2}{|c|}{\hbox{ }} &\multicolumn{2}{c|}{\hbox{ }}\\
\multicolumn{2}{|c|}{{\bf Horizontal}}&\multicolumn{2}{c|}{{\bf Vertical}}
\\[0.4 cm]\hline
&&&\\
{\bf Skew-Symmetric}&{\bf Symmetric}&{\bf Skew-Symmetric}&{\bf Symmetric}
\\[0.4 cm]\hline
&&&\\
$\xi_{\mu\nu}: = \gamma_{\mu\nu}  \!^{\alpha} \!\,_{|\alpha}$&$ \ $&
${\xi}_{ab}: = \gamma_{ab}  \!^{d} \!\,_{||d}$&$ \ $
\\[0.4 cm]\hline
&&&\\
$\gamma_{\mu\nu}: = C_{\alpha}\gamma_{\mu\nu} \!^{\alpha}$& $ \
$&${\gamma}_{ab} : = C_{d}\gamma_{ab} \!^{d}$& $ \ $
\\[0.4 cm]\hline
&&&\\
$\eta_{\mu\nu} := C_{\beta}\,\Lambda^{\beta}_{\mu\nu}$&
$\phi_{\mu\nu} := C_{\beta}\,\Omega^{\beta}_{\mu\nu}$ &$
{\eta}_{ab} := C_{d}\,T^{d}_{ab}$& $
{\phi}_{ab} := C_{d}\,\Omega^{d}_{ab}$
\\[0.4 cm]\hline
&&&\\
$\chi_{\mu\nu} := \Lambda^{\alpha}_{\mu\nu|\alpha}$& $\psi_{\mu\nu}
:= \Omega^{\beta}_{\mu\nu|\beta}$& $ {\chi}_{ab} :=
T^{d}_{ab||d}$& $ {\psi}_{ab} :=
\Omega^{d}_{ab||d}$
\\[0.4 cm]\hline
&&&\\
$\epsilon_{\mu\nu} := C_{\mu|\nu} -
C_{\nu|\mu}$& $\theta_{\mu\nu} :=  C_{\mu|\nu}
+ C_{\nu|\mu}$& ${\epsilon}_{ab} :=
C_{a||b} - C_{b||a}$& ${\theta}_{ab} :=
C_{a||b} + C_{b||a}$
\\[0.4 cm]\hline
&&&\\
\tiny{$\kappa_{\mu\nu} :=
\gamma^{\beta}_{\alpha\mu}\gamma^{\alpha}_{\nu\beta}
 - \gamma^{\beta}_{\mu\alpha}\gamma^{\alpha}_{\beta\nu}$}&
\tiny{$\varpi_{\mu\nu}: =
\gamma^{\beta}_{\alpha\mu}\gamma^{\alpha}_{\nu\beta} +
\gamma^{\beta}_{\mu\alpha}\gamma^{\alpha}_{\beta\nu}$}&

\tiny{${\kappa}_{ab}: =
\gamma^{c}_{da}\gamma^{d}_{bc} -
\gamma^{c}_{ad}\gamma^{d}_{cb}$}&
\tiny{${\varpi}_{ab} :=
\gamma^{c}_{da}\gamma^{d}_{bc} +
\gamma^{c}_{ad}\gamma^{d}_{cb}$}
\\[0.4 cm]\hline
&&&\\
$ \ $&$\sigma_{\mu\nu} :=
\gamma^{\beta}_{\alpha\mu}\gamma^{\alpha}_{\beta\nu}$& $ \ $&
${\sigma}_{ab}: =
\gamma^{c}_{da}\gamma^{d}_{cb}$

\\[0.4 cm]\hline
&&&\\
$ \ $&$\omega_{\mu\nu} :=
\gamma^{\beta}_{\mu\alpha}\gamma^{\alpha}_{\nu\beta}$& $  \ $&
${\omega}_{ab} :=
\gamma^{c}_{ad}\gamma^{d}_{bc}$
\\[0.4 cm]\hline
&&&\\
$ \ $&$\alpha_{\mu\nu} := C_{\mu}C_{\nu}$& $ \ $& $
{\alpha}_{ab} := C_{a}C_{b}$
\\[0.4 cm]\hline
\end{tabular}}
\end{center}

\begin{proposition}\label{RICCIT} The Ricci tensors $\,\,\overcirc{R}_{\beta\mu}$ and
$\,\,\overcirc{S}_{bc}$ of the natural metric $d$-connection can be expressed in terms of the fundamental tensors of Table 1 in the form
\begin{equation}\label{symx}\overcirc{R}_{\beta\mu} = - \frac{1}{2}\,(\theta_{\beta\mu} - \psi_{\beta\mu} + \phi_{\beta\mu}) + \omega_{\beta\mu} +
Q_{(\beta\mu)} + \frac{1}{2}\, \mathfrak{S}_{\beta, \mu, \alpha} \ \overcirc{C}^{\alpha}_{\beta a} R^{a}_{\mu \alpha},\\[- 0.2 cm]\end{equation}
\begin{equation}\label{VSM}\,\overcirc{S}_{bc} =  - \frac{1}{2}\,(\theta_{bc} - \psi_{bc} + \phi_{bc}) + \omega_{bc};\end{equation}
where $Q_{\beta\mu} := \gamma^{\epsilon}_{\beta a}R^{a}_{\mu\epsilon}$. %, \ Q: = g^{\beta\mu}Q_{\beta\mu}.$
\end{proposition}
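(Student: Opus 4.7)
The cornerstone is that the canonical $d$-connection $D$ of Theorem~\ref{apc} has vanishing horizontal $hh$-curvature $R^{\alpha}_{\beta\mu\nu}$ and vertical $vv$-curvature $S^{a}_{bcd}$. Indeed, applying the Ricci identity to the $D$-parallel fields $\lambda^{\alpha}$ (resp.\ $\lambda^{a}$) yields $R^{\alpha}_{\epsilon\mu\nu}\lambda^{\epsilon}=0$ (resp.\ $S^{a}_{bcd}\lambda^{b}=0$), since the AP-conditions $\lambda^{\alpha}{}_{|\mu}=\lambda^{\alpha}{}_{||c}=0$ kill every torsion and nonlinear-connection correction term that would otherwise appear; linear independence of the $\lambda$'s finishes this.

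Next I would substitute $\overcirc{\Gamma}^{\alpha}_{\mu\nu}=\Gamma^{\alpha}_{\mu\nu}-\gamma^{\alpha}_{\mu\nu}$ and $\overcirc{C}^{\alpha}_{\mu c}=C^{\alpha}_{\mu c}-\gamma^{\alpha}_{\mu c}$ from (\ref{cnt}) into the horizontal curvature formula
$$
\overcirc{R}^{\alpha}_{\beta\mu\nu}=\delta_{\mu}\overcirc{\Gamma}^{\alpha}_{\beta\nu}-\delta_{\nu}\overcirc{\Gamma}^{\alpha}_{\beta\mu}+\overcirc{\Gamma}^{\alpha}_{\epsilon\mu}\overcirc{\Gamma}^{\epsilon}_{\beta\nu}-\overcirc{\Gamma}^{\alpha}_{\epsilon\nu}\overcirc{\Gamma}^{\epsilon}_{\beta\mu}-R^{a}_{\mu\nu}\overcirc{C}^{\alpha}_{\beta a},
$$
regroup to exhibit the $D$-horizontal covariant derivative $\gamma^{\alpha}_{\beta\nu|\mu}$, and use $R^{\alpha}_{\beta\mu\nu}(D)=0$. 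This expresses $\overcirc{R}^{\alpha}_{\beta\mu\nu}$ as the skew part in $(\mu,\nu)$ of $\gamma^{\alpha}_{\beta\nu|\mu}$, plus the quadratic piece $\gamma^{\alpha}_{\epsilon\mu}\gamma^{\epsilon}_{\beta\nu}-\gamma^{\alpha}_{\epsilon\nu}\gamma^{\epsilon}_{\beta\mu}$, plus the nonlinear-connection contributions $R^{a}_{\mu\nu}\gamma^{\alpha}_{\beta a}-R^{a}_{\mu\nu}\overcirc{C}^{\alpha}_{\beta a}$.

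I then contract $\nu\to\alpha$ to form $\overcirc{R}_{\beta\mu}$. By Proposition~\ref{skew}, $\gamma^{\alpha}_{\alpha\mu}=0$, while the definition of the basic vector $\mathbf{B}$ gives $\gamma^{\alpha}_{\epsilon\alpha}=C_{\epsilon}$; hence the divergence-type contractions produce $C_{\beta|\mu}$, $\xi_{\beta\mu}$ and $\chi_{\beta\mu}$ of Table~1. The Bianchi identity (\ref{h}) then rewrites $\chi_{\beta\mu}$ as $\epsilon_{\beta\mu}+\eta_{\beta\mu}+\mathfrak{S}_{\beta,\mu,\alpha}C^{\alpha}_{\mu a}R^{a}_{\beta\alpha}$. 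Splitting each combination into its symmetric and skew parts in $\beta,\mu$, the derivative pieces telescope into $-\tfrac12(\theta_{\beta\mu}-\psi_{\beta\mu}+\phi_{\beta\mu})$, the quadratic contortion term matches $\omega_{\beta\mu}$ (definition of $\omega$ in Table~1), the $R^{a}_{\mu\nu}\gamma^{\alpha}_{\beta a}$ contraction symmetrizes to $Q_{(\beta\mu)}$, and the residual $\overcirc{C}^{\alpha}_{\beta a}R^{a}_{\mu\alpha}$ terms close into $\tfrac12\mathfrak{S}_{\beta,\mu,\alpha}\overcirc{C}^{\alpha}_{\beta a}R^{a}_{\mu\alpha}$. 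For the vertical case I would run the same procedure starting from $S^{a}_{bcd}(D)=0$; since $[\dot\partial_{b},\dot\partial_{c}]=0$, the $vv$-curvature formula carries no nonlinear-connection contribution, so the $Q$ and cyclic-sum terms are simply absent and one lands on (\ref{VSM}).

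\textbf{Main obstacle.} The real work is the index bookkeeping in the contraction step: correctly resolving every $\gamma\gamma$ and $\gamma_{|\cdot}$ combination into the right entry of Table~1 while respecting the mixed symmetry of $\gamma^{\alpha}_{\mu\nu}$ (skew only in its first covariant-index pair). Pinning down the coefficient $\tfrac12$ in the cyclic-sum term via (\ref{h}) is the most delicate moment, and consistency of the resulting skew part of $\overcirc{R}_{\beta\mu}$ with Proposition~\ref{SEE} provides an essential cross-check.
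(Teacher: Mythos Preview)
The paper states this proposition without proof (it is quoted as a structural fact of EAP-geometry, with the detailed computation deferred to \cite{EAP}), so there is no in-text argument to compare yours against. Your strategy---kill the $hh$- and $vv$-curvatures of the canonical $d$-connection via the AP-condition and the Ricci identity, then feed $\overcirc{\Gamma}=\Gamma-\gamma$ into the curvature formula and contract---is exactly the standard route and is sound.

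Two small technical corrections to your outline. First, when you take the difference of the two curvature formulas, the nonlinear-connection piece is a \emph{single} term proportional to $R^{a}_{\mu\nu}\gamma^{\alpha}_{\beta a}$, since $C^{\alpha}_{\beta a}-\overcirc{C}^{\alpha}_{\beta a}=\gamma^{\alpha}_{\beta a}$; listing both $R^{a}_{\mu\nu}\gamma^{\alpha}_{\beta a}$ and $-R^{a}_{\mu\nu}\overcirc{C}^{\alpha}_{\beta a}$ double-counts. The $\overcirc{C}$-type cyclic sum in the final answer does not arise here---it enters only after you invoke (\ref{h}), which is written with the canonical $C^{\epsilon}_{\nu a}=\overcirc{C}^{\epsilon}_{\nu a}+\gamma^{\epsilon}_{\nu a}$, and split that into the $\overcirc{C}$-part (matching Proposition~\ref{SEE}) and the $\gamma$-part (feeding $Q_{[\beta\mu]}$). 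Second, because you regroup using the $D$-covariant derivative and $D$ has horizontal torsion $\Lambda^{\epsilon}_{\mu\nu}$, the antisymmetrization in $\mu,\nu$ throws off an extra $\Lambda^{\epsilon}_{\mu\nu}\gamma^{\alpha}_{\beta\epsilon}$ term which you omitted; upon contracting $\nu\to\alpha$ this contributes to the $\phi$ and $\eta$ pieces and is essential for the coefficients to come out right. Both points are precisely the bookkeeping you already flag as the delicate step, and your proposed cross-check against (\ref{skewz}) will indeed catch any residual sign or factor errors.
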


\end{mysect}

\newpage

%%%%%%%%%%%%%%%%%%%%%%%%%%%%%%%%%%%%%%%%%%%%%%%% Section 3. Some Useful Relations %%%%%%%%%%%%%%%%%%%%%%%%%%%%%%%%%%%%%%%%%%%%%%%%%%

\begin{mysect}{Some Useful Relations}
Henceafter, we shall make extensive use of the fundamental tensor fields defined in Table 1 of the previous section.
We shall denote $\frac {\partial }{\partial x^{\mu}}\equiv \partial_{\mu}$ by \lq\lq$,\mu$\rq\rq \,and
$\frac {\partial }{\partial y^{a}}\equiv \dot{\partial_{a}}$ by \lq\lq$; a$\rq\rq \, interchangeably.
Moreover, {\it we regard $\lambda_{\beta}$, $\lambda_{\beta, \alpha}$, $\lambda_{\beta; a}$, $\lambda_{b}$, $\lambda_{b, \alpha}$ and
$\lambda_{b; a}$ as independent}. \footnote{Let $A = \{\lambda_{\beta}, \lambda_{\beta, \alpha}, \lambda_{\beta; a}, \lambda_{b},
\lambda_{b, \alpha}, \lambda_{b; a}\}$. By saying that $u, v\in A$ are independent we mean that
$\frac{\partial u}{\partial v} = \frac{\partial v}{\partial u} = 0$.} The next three Lemmas are needed for the derivation of the field equations. The proof of the
first two Lemmas is not difficult and will be omitted.

\begin{lemma}\label{ct}
Let $C := C^{\epsilon}\,_{|\epsilon} - C^{\epsilon}C_{\epsilon}, \,\theta : = g^{\mu\nu}\theta_{\mu\nu}$,
$\alpha := g^{\mu\nu}\alpha_{\mu\nu}$, $\phi: = g^{\mu\nu}\phi_{\mu\nu}$, $\psi := g^{\mu\nu}\psi_{\mu\nu}$ and
$\Lambda_{\beta\mu\nu} := g_{\beta\epsilon}\Lambda^{\epsilon}_{\mu\nu}$. Then the following relations hold:
\begin{description}
\item [(a)] $\frac{1}{2}\, (\phi - \psi + \theta) - \alpha - 2C = 0$,
\item [(b)] $C^{\epsilon}(\Lambda_{\mu\epsilon\nu} + \Lambda_{\nu\epsilon\mu}) = - \,\phi_{\mu\nu}, \ \ \ \
g^{\epsilon\alpha}(\Lambda_{\mu\nu\alpha|\epsilon} + \Lambda_{\nu\mu\alpha|\epsilon}) = \psi_{\mu\nu},$
\item [(c)] $C^{\epsilon}(\Lambda_{\mu\epsilon\nu} - \Lambda_{\nu\epsilon\mu}) = - \,(2\gamma_{\mu\nu} + \eta_{\mu\nu}), \ \ \ \
g^{\epsilon\alpha}(\Lambda_{\mu\nu\alpha|\epsilon} - \Lambda_{\nu\mu\alpha|\epsilon}) = 2 \xi_{\mu\nu} + \chi_{\mu\nu}.$
\end{description}
\end{lemma}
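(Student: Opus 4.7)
The plan is to reduce every identity to algebraic manipulation of $\gamma_{\alpha\mu\nu}$ by using three tools already available: (i) skew-symmetry in the first pair, $\gamma_{\alpha\mu\nu} = -\gamma_{\mu\alpha\nu}$ (Proposition \ref{skew}); (ii) the decompositions $\Lambda^{\alpha}_{\mu\nu} = \gamma^{\alpha}_{\mu\nu} - \gamma^{\alpha}_{\nu\mu}$ and $\Omega^{\alpha}_{\mu\nu} = \gamma^{\alpha}_{\mu\nu} + \gamma^{\alpha}_{\nu\mu}$; and (iii) the traces $\gamma^{\alpha}_{\mu\alpha} = C_{\mu}$ together with $\gamma^{\alpha}_{\alpha\mu} = 0$. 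Because the connection is metric, raising an index commutes with the horizontal covariant derivative and I will use this freely.

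For parts (b) and (c) I would expand $\Lambda_{\mu\epsilon\nu} = \gamma_{\mu\epsilon\nu} - \gamma_{\mu\nu\epsilon}$ and $\Lambda_{\nu\epsilon\mu} = \gamma_{\nu\epsilon\mu} - \gamma_{\nu\mu\epsilon}$. Taking the sum, the skew-symmetry kills $\gamma_{\mu\nu\epsilon} + \gamma_{\nu\mu\epsilon}$ and a second application folds the remainder to $-\Omega_{\epsilon\mu\nu}$; contracting with $C^{\epsilon}$ gives $-\phi_{\mu\nu}$, and the parallel computation inside $g^{\epsilon\alpha}(\cdot)_{|\epsilon}$ gives $\psi_{\mu\nu}$. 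Taking the difference, the same symmetry doubles $\gamma_{\mu\nu\epsilon}$ and leaves $-\Lambda_{\epsilon\mu\nu} - 2\gamma_{\mu\nu\epsilon}$; its $C^{\epsilon}$-contraction is $-\eta_{\mu\nu} - 2\gamma_{\mu\nu}$ once one recognises the Table~1 definitions $\eta_{\mu\nu} = C^{\sigma}\Lambda_{\sigma\mu\nu}$ and $\gamma_{\mu\nu} = C^{\sigma}\gamma_{\mu\nu\sigma}$, and its divergence analogue is $\chi_{\mu\nu} + 2\xi_{\mu\nu}$.

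For part (a) the crucial intermediate step is the trace identity
\begin{equation*}
g^{\mu\nu}\gamma_{\sigma\mu\nu} = -C_{\sigma}.
\end{equation*}
To prove it, I would start from $g_{\mu\nu} = \undersym{\lambda}{i}_{\mu}\,\undersym{\lambda}{i}_{\nu}$ and the canonical $d$-connection definition to obtain $\delta_{\rho}g_{\mu\nu} = \Gamma_{\mu\nu\rho} + \Gamma_{\nu\mu\rho}$; inserting this into the Christoffel-type formula for $\overcirc{\Gamma}^{\alpha}_{\mu\nu}$ from Theorem \ref{metric} yields the closed form $\gamma_{\alpha\mu\nu} = \tfrac{1}{2}(\Lambda_{\alpha\mu\nu} + \Lambda_{\mu\nu\alpha} + \Lambda_{\nu\mu\alpha})$. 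Contracting with $g^{\mu\nu}$, the first term vanishes (skew in $(\mu,\nu)$ hit by a symmetric object) and each of the other two reduces to $\Lambda^{\nu}_{\nu\sigma} = \gamma^{\nu}_{\nu\sigma} - \gamma^{\nu}_{\sigma\nu} = -C_{\sigma}$. Equipped with this, I compute $\phi = -2\alpha$, $\psi = -2C^{\epsilon}{}_{|\epsilon}$, $\theta = 2C^{\epsilon}{}_{|\epsilon}$; substituting into $\tfrac{1}{2}(\phi - \psi + \theta) - \alpha - 2C$ and eliminating $C^{\epsilon}{}_{|\epsilon}$ through $C = C^{\epsilon}{}_{|\epsilon} - C^{\epsilon}C_{\epsilon} = C^{\epsilon}{}_{|\epsilon} - \alpha$ collapses the expression to $0$.

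The main obstacle is the trace identity in (a), since it is the only step that uses the specific Christoffel-type structure of $\overcirc{\Gamma}$ rather than the purely algebraic skew-symmetry of $\gamma$. A secondary care point throughout the proof is strict bookkeeping of index position: one must distinguish the first-index contractions producing $\eta, \phi, \chi, \psi$ from the third-index contractions producing $\gamma_{\mu\nu}, \xi_{\mu\nu}$, and track the sign flip each time the skew-symmetry $\gamma_{\alpha\mu\nu} = -\gamma_{\mu\alpha\nu}$ is invoked.
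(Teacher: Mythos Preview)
Your argument is correct. The paper itself omits the proof of this lemma, stating only that it ``is not difficult and will be omitted,'' so there is no published proof to compare against; your route via the skew-symmetry $\gamma_{\alpha\mu\nu}=-\gamma_{\mu\alpha\nu}$, the decomposition $\Lambda=\gamma-\gamma^{T}$, and the trace identity $g^{\mu\nu}\gamma_{\sigma\mu\nu}=-C_{\sigma}$ (derived from the explicit contortion formula $\gamma_{\alpha\mu\nu}=\tfrac{1}{2}(\Lambda_{\alpha\mu\nu}+\Lambda_{\mu\nu\alpha}+\Lambda_{\nu\mu\alpha})$) is exactly the natural computation and all your intermediate values $\phi=-2\alpha$, $\psi=-2C^{\epsilon}{}_{|\epsilon}$, $\theta=2C^{\epsilon}{}_{|\epsilon}$ check out.
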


\begin{lemma}\label{lemma} Let $(TM, \lambda)$ be an EAP-space. Let $D = (\Gamma^{\alpha}_{\mu\nu}, \, \Gamma^{a}_{b\mu}, \, C^{\alpha}
_{\mu c}, \, C^{a}_{bc})$ be the canonical $d$-connection. Then the following hold:
\begin{description}
\item [(a)] $\frac{\partial |\lambda|}{\partial \,\undersym{\lambda}{j}_{\beta}} = |\lambda|\,\undersym{\lambda}{j}^{\beta},$ \
$|\lambda|: = det(\lambda_{\beta})$.
Consequently, $|\lambda|\!\,_{,\gamma} = |\lambda|(\,\undersym{\lambda}{j}^{\beta}\,\undersym{\lambda}{j}_{\beta,\gamma})$, \ \
$\delta_{\gamma}|\lambda| = |\lambda|\Gamma^{\beta}_{\beta\gamma}$, \ \ $|\lambda|\!\,_{; a} = |\lambda|C^{\beta}_{\beta a}.$

\item [(b)] $\frac{\partial \,\undersym{\lambda}{i}^{\alpha}}{\partial \,\undersym{\lambda}{j}_{\beta}} = - \,\undersym{\lambda}{j}^{\alpha}\,
\undersym{\lambda}{i}^{\beta}.$
Consequently, $\frac{\partial g^{\mu\nu}}{\partial \, \undersym{\lambda}{j}_{\beta}} = - (g^{\mu\beta}
\,\undersym{\lambda}{j}^{\nu} + g^{\nu\beta}\,\undersym{\lambda}{j}^{\mu}).$

\item [(c)] $\frac{\partial ||\lambda||}{\partial \,\undersym{\lambda}{j}_{b}} = ||\lambda||\,\undersym{\lambda}{j}^{b},$ \ $||\lambda||: = det(\lambda_{b})$.
Consequently, \ $||\lambda||\!\,_{; a} = ||\lambda||C^{b}_{ba},$ \ $\delta_{\gamma}||\lambda|| = ||\lambda||\Gamma^{b}_{\gamma b}.$

\vspace{- 0.1 cm}\item [(d)] $\frac{\partial \,\undersym{\lambda}{i}^{a}}{\partial \,\undersym{\lambda}{j}_{b}} = - \,\undersym{\lambda}{j}^{a}\,
\undersym{\lambda}{i}^{b}.$ Consequently, $\frac{\partial g^{cd}}{\partial \, \undersym{\lambda}{j}_{b}} = - (g^{cb}
\,\undersym{\lambda}{j}^{d} + g^{db}\,\undersym{\lambda}{j}^{c}).$

\item [(e)] $\frac{\partial C^{a}_{dc}}{\partial\, \undersym{\lambda}{j}_{b}} =
- \,\undersym{\lambda}{j}^{a}C^{b}_{dc}$. Consequently, $\frac{\partial T^{a}_{dc}}{\partial\, \undersym{\lambda}{j}_{b}} =
\,\undersym{\lambda}{j}^{a}T^{b}_{cd}$, \ \ $\frac{\partial C_{d}}{\partial\, \undersym{\lambda}{j}_{b}} =
\,\undersym{\lambda}{j}^{a}T^{b}_{ad}.$

\item [(f)] $\frac{\partial C^{a}_{dc}}{\partial\, \undersym{\lambda}{j}_{b; e}} =
\,\undersym{\lambda}{j}^{a}\delta^{b}_{d}\delta^{e}_{c}$. Consequently, $\frac{\partial T^{a}_{dc}}{\partial\, \undersym{\lambda}{j}_{b, e}} =
\,\undersym{\lambda}{j}^{a}(\delta^{b}_{d}\delta^{e}_{c} - \delta^{e}_{d}\delta^{b}_{c})$, \ \
$\frac{\partial C_{d}}{\partial\, \undersym{\lambda}{j}_{b; e}} =
\,\undersym{\lambda}{j}^{a}(\delta^{b}_{d}\delta^{e}_{a} - \delta^{e}_{d}\delta^{b}_{a}).$

\end{description}

\end{lemma}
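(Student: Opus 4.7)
The plan is to derive each of the six identities (and their stated consequences) by direct computation, exploiting the footnote's stipulation that the six families of objects $\lambda_{\beta}$, $\lambda_{\beta,\alpha}$, $\lambda_{\beta;a}$, $\lambda_{b}$, $\lambda_{b,\alpha}$, $\lambda_{b;a}$ are mutually independent variables. The horizontal assertions (a), (b) and their vertical counterparts (c), (d) are structurally identical; parts (e), (f) then reduce to a chain-rule calculation once (d) is in place.

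For (a), I would invoke the Jacobi determinant identity $\partial(\det A)/\partial A_{ij} = (\det A)(A^{-1})_{ji}$ applied to the matrix $(\undersym{\lambda}{j}_\beta)$ (with mesh index $j$ labelling rows and coordinate index $\beta$ labelling columns); its inverse is precisely $(\undersym{\lambda}{j}^\beta)$ by the biorthogonality relations of the EAP-space. This yields $\partial|\lambda|/\partial \undersym{\lambda}{j}_\beta = |\lambda|\,\undersym{\lambda}{j}^\beta$ at once. The consequences are pure chain rule: since $|\lambda|$ depends only on the horizontal components $\undersym{\lambda}{j}_\beta$,
$$\delta_\gamma|\lambda| \;=\; \frac{\partial |\lambda|}{\partial \undersym{\lambda}{j}_\beta}(\delta_\gamma\,\undersym{\lambda}{j}_\beta) \;=\; |\lambda|\,\undersym{\lambda}{j}^\beta(\delta_\gamma\,\undersym{\lambda}{j}_\beta) \;=\; |\lambda|\,\Gamma^\beta_{\beta\gamma},$$
the last equality being the defining formula for $\Gamma^\alpha_{\mu\nu}$ from Theorem \ref{apc}. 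The $\partial_\gamma$ and $\dot\partial_a$ versions are identical. Part (c) and its consequences go the same way using $||\lambda|| = \det(\lambda_b)$ and the canonical coefficients $\Gamma^b_{\gamma b}$, $C^b_{ba}$.

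For (b), I would differentiate the biorthogonality relation $\undersym{\lambda}{i}^\alpha\,\undersym{\lambda}{i}_\gamma = \delta^\alpha_\gamma$ (summed over $i$) with respect to $\undersym{\lambda}{j}_\beta$. The key observation is that $\partial \undersym{\lambda}{i}_\gamma / \partial \undersym{\lambda}{j}_\beta = \delta_{ij}\delta^\beta_\gamma$ carries a Kronecker delta in the mesh indices. This yields
$$\sum_i \frac{\partial \undersym{\lambda}{i}^\alpha}{\partial \undersym{\lambda}{j}_\beta}\,\undersym{\lambda}{i}_\gamma \;=\; -\,\undersym{\lambda}{j}^\alpha\,\delta^\beta_\gamma,$$
and contracting with $\undersym{\lambda}{k}^\gamma$ isolates the desired identity. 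The consequence for $g^{\mu\nu} = \undersym{\lambda}{i}^\mu\,\undersym{\lambda}{i}^\nu$ then follows by the Leibniz rule. Part (d) is identical with $b$ in place of $\beta$.

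Finally, (e) and (f) follow from the explicit expression $C^a_{dc} = \undersym{\lambda}{i}^a\,\undersym{\lambda}{i}_{d;c}$ provided by Theorem \ref{apc}. For (e), $\undersym{\lambda}{i}_{d;c}$ is independent of $\undersym{\lambda}{j}_b$ by the footnote, so only the factor $\undersym{\lambda}{i}^a$ contributes; invoking (d) gives $-\undersym{\lambda}{j}^a C^b_{dc}$. For (f) the situation is reversed: $\undersym{\lambda}{i}^a$ is independent of $\undersym{\lambda}{j}_{b;e}$, so only $\undersym{\lambda}{i}_{d;c}$ contributes, and the identity $\partial \undersym{\lambda}{i}_{d;c}/\partial \undersym{\lambda}{j}_{b;e} = \delta_{ij}\delta^b_d\delta^e_c$ produces $\undersym{\lambda}{j}^a\delta^b_d\delta^e_c$. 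The consequences for $T^a_{dc}=C^a_{dc}-C^a_{cd}$ and for $C_d=T^a_{da}$ in both (e) and (f) then follow immediately by antisymmetrization. I do not foresee a genuine obstacle: the calculations are routine once the independence hypothesis is used to discard the correct cross-terms. The only delicate point is in (b), (d), where the mesh-index Kronecker $\delta_{ij}$ must be tracked carefully so as not to lose a factor during contraction.
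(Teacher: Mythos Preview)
Your proposal is correct and proceeds exactly along the lines one would expect: Jacobi's formula for the determinant in (a)/(c), differentiation of the biorthogonality relation for the inverse-matrix identity in (b)/(d), and chain rule through the explicit canonical-connection formulae of Theorem~\ref{apc} for (e)/(f). The paper itself omits the proof entirely, remarking only that it ``is not difficult''; your sketch supplies precisely the routine computation the authors had in mind, and there is nothing further to compare.
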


\begin{lemma}\label{lem} Assume that the nonlinear connection $N^{a}_{\mu}$ does not depend on the horizontal
counterparts $\,\undersym{\lambda}{i}^{\alpha}$ of the fundamental vector fields. Then we have:
\begin{description}

\item [(a)] $\frac{\partial \Gamma^{\alpha}_{\mu\nu}}{\partial\, \undersym{\lambda}{j}_{\beta}} =
- \,\undersym{\lambda}{j}^{\alpha}\Gamma^{\beta}_{\mu\nu}$.
Consequently, $\frac{\partial \Lambda^{\alpha}_{\mu\nu}}{\partial\, \undersym{\lambda}{j}_{\beta}} =
\,\undersym{\lambda}{j}^{\alpha}\Lambda^{\beta}_{\nu\mu}$, \ \ $\frac{\partial C_{\mu}}{\partial\, \undersym{\lambda}{j}_{\beta}} =
\,\undersym{\lambda}{j}^{\alpha}\Lambda^{\beta}_{\alpha\mu}.$

\item [(b)] $\frac{\partial \Gamma^{\alpha}_{\mu\nu}}{\partial\, \undersym{\lambda}{j}_{\beta, \gamma}} =
\,\undersym{\lambda}{j}^{\alpha}\delta^{\beta}_{\mu}\delta^{\gamma}_{\nu}$.
Consequently, $\frac{\partial \Lambda^{\alpha}_{\mu\nu}}{\partial\, \undersym{\lambda}{j}_{\beta, \gamma}} =
\,\undersym{\lambda}{j}^{\alpha}(\delta^{\beta}_{\mu}\delta^{\gamma}_{\nu} - \delta^{\gamma}_{\mu}\delta^{\beta}_{\nu})$,
\ \ $\frac{\partial C_{\mu}}{\partial\, \undersym{\lambda}{j}_{\beta, \gamma}} =
\,\undersym{\lambda}{j}^{\alpha}(\delta^{\beta}_{\mu}\delta^{\gamma}_{\alpha} - \delta^{\gamma}_{\mu}\delta^{\beta}_{\alpha}).$

\item [(c)] $\frac{\partial \Gamma^{\alpha}_{\mu\nu}}{\partial\, \undersym{\lambda}{j}_{\beta; a}} = -
\,\undersym{\lambda}{j}^{\alpha}\delta^{\beta}_{\mu}N^{a}_{\nu}$.
\\Consequently, $\frac{\partial \Lambda^{\alpha}_{\mu\nu}}{\partial\, \undersym{\lambda}{j}_{\beta; a}} =
\,\undersym{\lambda}{j}^{\alpha}(\delta^{\beta}_{\nu}N^{a}_{\mu} - \delta^{\beta}_{\mu}N^{a}_{\nu})$, \ \
$\frac{\partial C_{\mu}}{\partial\, \undersym{\lambda}{j}_{\beta; a}} =
\,\undersym{\lambda}{j}^{\alpha}(\delta^{\beta}_{\alpha}N^{a}_{\mu} - \delta^{\beta}_{\mu}N^{a}_{\alpha}).$

\end{description}
\end{lemma}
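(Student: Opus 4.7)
\begin{demo}{Proof plan for Lemma 4.3}
The plan is to reduce everything to a direct partial differentiation of the explicit formula for the canonical coefficient $\Gamma^{\alpha}_{\mu\nu}$ given in Theorem~\ref{apc}, using the independence convention on $\{\lambda_\beta,\lambda_{\beta,\alpha},\lambda_{\beta;a}\}$ together with Lemma~\ref{lemma}(b). First, I would expand the horizontal frame derivative as $\delta_\nu\undersym{\lambda}{i}_\mu=\undersym{\lambda}{i}_{\mu,\nu}-N^{a}_\nu\,\undersym{\lambda}{i}_{\mu;a}$, so that
\begin{equation*}
\Gamma^{\alpha}_{\mu\nu}=\undersym{\lambda}{i}^{\alpha}\,\undersym{\lambda}{i}_{\mu,\nu}-N^{a}_{\nu}\,\undersym{\lambda}{i}^{\alpha}\,\undersym{\lambda}{i}_{\mu;a}.
\end{equation*}
The hypothesis that $N^{a}_{\mu}$ is independent of the $\undersym{\lambda}{i}^{\alpha}$ (and hence, by the inverse-matrix relation, independent of the $\undersym{\lambda}{j}_\beta$ as well as of their derivatives $\undersym{\lambda}{j}_{\beta,\gamma}$ and $\undersym{\lambda}{j}_{\beta;a}$) is what allows each of the three derivatives to act only on the explicit $\lambda$-factors displayed above.

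For (a), I would differentiate with respect to $\undersym{\lambda}{j}_\beta$. Since $\undersym{\lambda}{i}_{\mu,\nu}$ and $\undersym{\lambda}{i}_{\mu;a}$ are independent of $\undersym{\lambda}{j}_\beta$, only the inverse factor $\undersym{\lambda}{i}^{\alpha}$ contributes, and Lemma~\ref{lemma}(b) gives $\partial\undersym{\lambda}{i}^{\alpha}/\partial\undersym{\lambda}{j}_{\beta}=-\undersym{\lambda}{j}^{\alpha}\undersym{\lambda}{i}^{\beta}$; reassembling yields $-\undersym{\lambda}{j}^{\alpha}\bigl(\undersym{\lambda}{i}^{\beta}\delta_\nu\undersym{\lambda}{i}_\mu\bigr)=-\undersym{\lambda}{j}^{\alpha}\Gamma^{\beta}_{\mu\nu}$. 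For (b), only the term $\undersym{\lambda}{i}^{\alpha}\undersym{\lambda}{i}_{\mu,\nu}$ can contribute to $\partial/\partial\undersym{\lambda}{j}_{\beta,\gamma}$, and $\partial\undersym{\lambda}{i}_{\mu,\nu}/\partial\undersym{\lambda}{j}_{\beta,\gamma}=\delta^{i}_{j}\delta^{\beta}_{\mu}\delta^{\gamma}_{\nu}$ gives the stated formula. For (c), only the term $-N^{b}_{\nu}\undersym{\lambda}{i}^{\alpha}\undersym{\lambda}{i}_{\mu;b}$ contributes, and $\partial\undersym{\lambda}{i}_{\mu;b}/\partial\undersym{\lambda}{j}_{\beta;a}=\delta^{i}_{j}\delta^{\beta}_{\mu}\delta^{a}_{b}$ yields $-\undersym{\lambda}{j}^{\alpha}\delta^{\beta}_{\mu}N^{a}_{\nu}$, as claimed.

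The ``Consequently'' statements are purely algebraic consequences of what has just been derived. Since $\Lambda^{\alpha}_{\mu\nu}=\Gamma^{\alpha}_{\mu\nu}-\Gamma^{\alpha}_{\nu\mu}$, I would antisymmetrize each of the three identities in $(\mu,\nu)$; in (a) the skew-symmetry of $\Lambda$ in its lower indices flips the sign to produce $\undersym{\lambda}{j}^{\alpha}\Lambda^{\beta}_{\nu\mu}$, while in (b) and (c) one gets the displayed $(\delta^{\beta}_{\mu}\delta^{\gamma}_{\nu}-\delta^{\gamma}_{\mu}\delta^{\beta}_{\nu})$ and $(\delta^{\beta}_{\nu}N^{a}_{\mu}-\delta^{\beta}_{\mu}N^{a}_{\nu})$ structures directly. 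Finally, using $C_\mu=\Lambda^{\alpha}_{\mu\alpha}$ and setting $\alpha$ as a contracted index in each of the $\Lambda$-identities produces the three claimed formulas for $\partial C_\mu/\partial\undersym{\lambda}{j}_\beta$, $\partial C_\mu/\partial\undersym{\lambda}{j}_{\beta,\gamma}$, and $\partial C_\mu/\partial\undersym{\lambda}{j}_{\beta;a}$.

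The main (and essentially only) conceptual point is to justify that $N^{a}_{\nu}$ can be treated as a constant with respect to all three differentiations; everything else is bookkeeping. I would therefore devote one short paragraph at the start to arguing that independence of $N$ from $\lambda^{\alpha}$ propagates to independence from $\lambda_{\beta}$ (via $\lambda^{\alpha}\lambda_{\beta}=\delta^{\alpha}_{\beta}$) and hence to independence from $\lambda_{\beta,\gamma}$ and $\lambda_{\beta;a}$ under the convention stated before the lemma. After that, each of the nine identities is a one-line computation, and no deeper obstacle appears.
\end{demo}
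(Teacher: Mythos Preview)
Your proposal is correct and follows essentially the same approach as the paper: expand $\Gamma^{\alpha}_{\mu\nu}=\undersym{\lambda}{i}^{\alpha}\undersym{\lambda}{i}_{\mu,\nu}-N^{a}_{\nu}\undersym{\lambda}{i}^{\alpha}\undersym{\lambda}{i}_{\mu;a}$, invoke the hypothesis to treat $N^{a}_{\nu}$ as inert under all three differentiations, and read off each identity (the paper writes out only part (c) and declares the rest similar). Your added remark that independence of $N$ from $\undersym{\lambda}{i}^{\alpha}$ entails independence from $\undersym{\lambda}{j}_{\beta}$ via the inverse relation is a small clarification the paper simply assumes.
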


\begin{proof} We prove {\bf (c)} only. The rest is similar. By hypothesis, $N^{a}_{\mu}$ is independent of $\lambda_{\beta}$ and its derivatives. Consequently,
\begin{eqnarray*}\frac{\partial \Gamma^{\alpha}_{\mu\nu}}{\partial \,\undersym{\lambda}{j}_{\beta; c}}&=&
\frac{\partial }{\partial \,\undersym{\lambda}{j}_{\beta; c}}(\,\undersym{\lambda}{i}^{\alpha}\,\undersym{\lambda}{i}_{\mu, \nu} -
\,\undersym{\lambda}{i}^{\alpha}N^{a}_{\nu}\,\undersym{\lambda}{i}_{\mu; a}) = - \frac{\partial }{\partial \,\undersym{\lambda}{j}_{\beta; c}}
(\,\undersym{\lambda}{i}^{\alpha}N^{a}_{\nu}\,\undersym{\lambda}{i}_{\mu; a})\\&=&  - \,\undersym{\lambda}{i}^{\alpha}N^{a}_{\nu}
\bigg(\frac{\partial \,\undersym{\lambda}{i}_{\mu; a}}{\partial \,\undersym{\lambda}{j}_{\beta; c}}\bigg) = - \,\undersym{\lambda}{i}^{\alpha}
N^{a}_{\nu}(\delta_{ij}\delta^{c}_{a}\delta^{\beta}_{\mu}) =  - \,\undersym{\lambda}{j}^{\alpha}\delta^{\beta}_{\mu}N^{c}_{\nu}\\
\\[- 2 cm]\end{eqnarray*}
\end{proof}
%It is to be noted that the above lemma is automatically valid under the Cartan type condition. This is because, in this case,
%$N^{a}_{\mu} =  y^{b}( \, \undersym{\lambda}{i}^{a}\partial_{\mu} \ \undersym{\lambda}{i}_{b})$.

\begin{corollary} \label{form}Let ${\bf T} = (\Lambda^{\alpha}_{\mu\, \nu}, \,R^{a}_{\mu\nu},
\,C^{\alpha}_{\mu c}, \,P^{a}_{\mu c}, \,T^{a}_{bc})$ be the torsion of the EAP-space. If
$P^{a}_{\mu c} = 0$, then $N^{a}_{\mu; a} = \delta_{\mu}(ln ||\lambda||)$ is a scalar 1-form.
\end{corollary}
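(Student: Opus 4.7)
The plan is a definition chase: unpack $P^{a}_{\mu c}$ using Proposition 1.2, contract a pair of Latin indices, and identify the result via Lemma 4.2(c).

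From Proposition 1.2 the mixed vertical torsion coefficient is $P^{a}_{\mu c} = \dot{\partial}_{c} N^{a}_{\mu} - \Gamma^{a}_{c\mu}$, which in the abbreviated notation of this section ($\dot{\partial}_{c} \equiv (\,\cdot\,)_{;c}$) becomes $N^{a}_{\mu ; c} = \Gamma^{a}_{c\mu}$ under the hypothesis $P^{a}_{\mu c}=0$. I would then contract the upper $a$ with the lower Latin $c$ (Einstein sum) to obtain $N^{a}_{\mu ; a} = \Gamma^{a}_{a\mu}$.

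Next, I would invoke Lemma 4.2(c), which supplies the determinant identity $\delta_{\gamma}(\ln\|\lambda\|) = \Gamma^{a}_{a\gamma}$ (reading the Latin trace by strict analogy with the horizontal companion formula $\delta_{\gamma}(\ln|\lambda|) = \Gamma^{\beta}_{\beta\gamma}$ that comes out of part~(a) of the same lemma, since $\|\lambda\|$ is nowhere zero by the linear independence of the vertical fundamental vectors). Substituting into the previous display yields precisely $N^{a}_{\mu ; a} = \delta_{\mu}(\ln\|\lambda\|)$, the assertion of the corollary. The subsidiary claim that the common value is a \emph{scalar} $1$-form is then immediate: the right-hand side is the horizontal derivative $\delta_{\mu}$ applied to the single (logarithmic) scalar $\ln\|\lambda\|$, so only the horizontal index $\mu$ remains free, and the two sides are subject to the same anomalous transformation law inherited from $N^{a}_{\mu}$ (which is a nonlinear connection, not a tensor) — so that the identity is coordinate-invariant on $TM$.

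I do not anticipate any genuine obstacle; the argument is a single contraction of Latin indices followed by one invocation of Lemma 4.2(c). The only care required is index bookkeeping — in particular, matching the Latin trace that appears in Lemma 4.2(c) (written there as $\Gamma^{b}_{\gamma b}$) with the canonical ordering $\Gamma^{a}_{b\nu}$ (upper Latin, first-lower Latin, second-lower Greek) of the canonical $d$-connection introduced in Section~1, so that the contracted Latin indices of $N^{a}_{\mu ; a}$ line up with those of $\Gamma^{a}_{a\mu}$.
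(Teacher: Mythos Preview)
Your argument is correct and mirrors the paper's own proof essentially step for step: the paper also contracts the hypothesis $P^{a}_{\mu c}=0$ to obtain $N^{a}_{\mu;a}=\Gamma^{a}_{a\mu}$ and then invokes Lemma~\ref{lemma}(c) in the form $\delta_{\mu}\|\lambda\|=\|\lambda\|\,\Gamma^{a}_{a\mu}$ to conclude. Your remark about reconciling the index placement in Lemma~\ref{lemma}(c) with the canonical ordering $\Gamma^{a}_{b\nu}$ is well taken and is exactly the bookkeeping the paper silently performs.
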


\begin{proof} By hypothesis, we have $N^{a}_{\mu; a} = \Gamma^{a}_{a\mu}$.
On the other hand, by Lemma \ref{lemma} {\bf (c)},
$\delta_{\mu}||\lambda|| = ||\lambda||\Gamma^{a}_{a\mu}$.
Consequently, $\delta_{\mu}(ln ||\lambda||) = \frac{1}{||\lambda||}\delta_{\mu}||\lambda|| = N^{a}_{\mu; a}.$
\end{proof}
\end{mysect}

%%%%%%%%%%%%%%%%%%%%%%%%%%%%%%%%%%%%%%%%%%%%%%% Section 4. Unified Field Equations %%%%%%%%%%%%%%%%%%%%%%%%%%%%%%%%%%%%%%%%%%%%%%%%%%%%%%

\begin{mysect}{Unified Field Equations}
We now generalize the GFT in the context of the EAP-geometry. We
derive the field equations using a variational technique, which
involves the variation of an appropriate chosen (horizontal and
vertical) Lagrangians with respect to the (horizontal and vertical)
fundamental vector fields.

%%%%%%%%%%%%%%%%%%%%%%%%%%%Subsection 4.1 %%%%%%%%%%%%%%%%%%%%%%%%%%%%%%%%%%%%%%%%%%%%

\subsection{Horizontal unified field equations}

\hspace{0.6 cm} Let $(TM, \lambda)$ be an EAP-space. {\it We assume that the nonlinear connection $N^{a}_{\mu}$ does not depend on the horizontal
fundamental vector fields} so that Lemma \ref{lem} holds.

\bigskip

We now formulate a generalized version of the GFT \cite{aaa} in the
framework of EAP-geometry under the above mentioned condition. The
problem is to find an appropriate scalar Lagrangian that can work
effectively in the context of EAP-geometry. We make the following
choice. We take for the horizontal field equations a Lagrangian
similar in form (but not in content) to that used by Mikhail and
Wanas in their construction of the GFT. This is done for three
reasons. First, the form of the chosen Lagrangian is relatively
simple (depends on the first derivatives of the horizontal
counterparts of the fundamental vector fields). Second, this form of
the Lagrangian, in the conventional AP-context, has led to powerful
theoretical and experimental results. Last, in order to facilitate
comparison between the results obtained in our unified field theory
and the GFT.
\bigskip

In view of the above, we start with the following scalar Lagrangian:
Let
$${\cal H} = |\lambda| g^{\mu\nu} H_{\mu\nu},$$ where
\begin{equation}\label{rzx}H_{\mu\nu} := \Lambda^{\alpha}_{\epsilon\mu}\Lambda^{\epsilon}_{\alpha\nu} - C_{\mu}C_{\nu}.\end{equation}

The Euler-Lagrange equations \cite{GLS} for this Lagrangian are given by
\begin{equation}\label{ELEst}\frac{\delta {\cal H}}{\delta \lambda_{\beta}}: = \frac{\partial {\cal H}}{\partial \lambda_{\beta}} -
\frac{\partial}{\partial x^{\gamma}}\bigg(\frac{\partial {\cal H}}
{\partial \lambda_{\beta, \gamma}}\bigg) - \frac{\partial}{\partial y^{a}}\bigg(\frac{\partial {\cal H}}
{\partial \lambda_{\beta; a}}\bigg) = 0.\end{equation}

Setting \begin{equation}{\cal H} = {\cal K} - {\cal L};\nonumber\end{equation}
\begin{equation}{\cal K} := |\lambda|g^{\mu\nu}K_{\mu\nu}: = |\lambda|g^{\mu\nu}\Lambda^{\alpha}_{\epsilon\mu}
\Lambda^{\epsilon}_{\alpha\nu},\nonumber\end{equation}
\begin{equation}{\cal L} := |\lambda|g^{\mu\nu}L_{\mu\nu}:= |\lambda|g^{\mu\nu}C_{\mu}C_{\nu},\nonumber\end{equation}
(\ref{ELEst}) can be written in the form
\begin{equation}\frac{\delta {\cal H}}{\delta \lambda_{\beta}} = \frac{\delta {\cal K}}{\delta \lambda_{\beta}} -
\frac{\delta {\cal L}}{\delta \lambda_{\beta}}.\nonumber\end{equation}

%\vspace{- 0.2 cm}

We first consider the expression
$$\frac{\delta {\cal L}}{\delta \lambda_{\beta}} :=\frac{\partial {\cal L}}{\partial \lambda_{\beta}} -
\frac{\partial}{\partial x^{\gamma}}\bigg(\frac{\partial {\cal L}}
{\partial \lambda_{\beta, \gamma}}\bigg) - \frac{\partial}{\partial y^{a}}\bigg(\frac{\partial {\cal L}}
{\partial \lambda_{\beta; a}}\bigg).$$

Taking into account Lemmas \ref{lemma} and \ref{lem}, we have
\begin{equation}\frac{\partial {\cal L}}{\partial \lambda_{\beta}}= |\lambda|\{\lambda^{\beta}g^{\mu\nu}L_{\mu\nu} -
(g^{\mu\beta}\lambda^{\nu} + g^{\nu\beta}\lambda^{\mu})L_{\mu\nu} + 2g^{\mu\nu}
(\lambda^{\alpha}\Lambda^{\beta}_{\alpha\mu}C_{\nu})\}.\nonumber\end{equation}

\hspace{- 0.6 cm}Consequently, by the relation $\,\undersym{\lambda}{j}^{\alpha}\,\undersym{\lambda}{j}_{\sigma} = \delta^{\alpha}_{\sigma}$, we
deduce that
\begin{equation}\label{O}\frac{1}{|\lambda|}\bigg(\frac{\partial {\cal L}}{\partial \ \undersym{\lambda}{j}_{\beta}}\bigg) \ \undersym{\lambda}{j}_{\sigma}
= \delta^{\beta}_{\sigma}L - 2L^{\beta}_{\sigma} +
2\Lambda^{\beta}_{\sigma\mu}C^{\mu},\end{equation}
where $L:= g^{\mu\nu}L_{\mu\nu}$.

\bigskip

Moreover, by Lemma \ref{lemma}, we obtain
\begin{equation}\frac{\partial {\cal L}}{\partial \lambda_{\beta, \gamma}} = 2|\lambda|(\lambda^{\gamma}C^{\beta} - \lambda^{\beta}C^{\gamma}).
\nonumber\end{equation}
Hence, noting that, by Lemma \ref{lemma} {\bf (a)}, $|\lambda|_{,\gamma} = |\lambda|(\,\undersym{\lambda}{k}^{\mu}\,
\undersym{\lambda}{k}_{\mu, \gamma})$, we get

\begin{eqnarray*}\\[- 1.1 cm]\frac{\partial}{\partial x^{\gamma}}\bigg(\frac{\partial {\cal L}}{\partial \lambda_{\beta, \gamma}}\bigg)&=&
2|\lambda|\{(\, \undersym{\lambda}{k}^{\mu}\,
\undersym{\lambda}{k}_{\mu, \gamma})(\lambda^{\gamma}C^{\beta} - \lambda^{\beta}C^{\gamma}) +
(\lambda^{\gamma}C^{\beta})\!\,_{,\gamma} - (\lambda^{\beta}C^{\gamma})\!\,_{,\gamma}\},\\
\\[- 1.2 cm]\end{eqnarray*}
so that
\begin{equation}\begin{split}\label{SUB}\frac{1}{|\lambda|}\bigg\{\frac{\partial}
{\partial x^{\gamma}}\bigg(\frac{\partial {\cal L}}{\partial \,\undersym{\lambda}{j}_{\beta, \gamma}}\bigg)\bigg\}\, \undersym{\lambda}{j}_{\sigma}&=
2\{(\, \undersym{\lambda}{j}^{\mu}\, \undersym{\lambda}{j}_{\mu, \gamma})(\delta^{\gamma}_{\sigma}C^{\beta} - \delta^{\beta}_{\sigma}C^{\gamma})
+ [(\,\undersym{\lambda}{j}_{\sigma}\, \undersym{\lambda}{j}^{\gamma}\!\,_{,\gamma})C^{\beta}\\& \ \ \  + \ \delta^{\gamma}_{\sigma}C^{\beta}\!\,_{,\gamma}
- (\,\undersym{\lambda}{j}_{\sigma}\,\undersym{\lambda}{j}^{\beta}\!\,_{,\gamma})C^{\gamma} - \delta^{\beta}_{\sigma}C^{\gamma}\!\,_{,\gamma})]\}.
\end{split}\end{equation}

\vspace{- 0.1 cm}
\hspace{- 0.6 cm}Since $\delta_{\mu} = \partial_{\mu} - N^{a}_{\mu}\dot{\partial_a}$, \,$\Gamma^{\alpha}_{\mu\nu} = \undersym{\lambda}{j}^{\alpha}
\delta_{\nu}\,\undersym{\lambda}{j}_{\mu}$ and $C^{\alpha}_{\mu a} = \undersym{\lambda}{j}^{\alpha}
\dot{\partial_a}\,\undersym{\lambda}{j}_{\mu}$, it follows that
\begin{equation}\label{ab}\undersym{\lambda}{j}^{\mu}\, \undersym{\lambda}{j}_{\mu, \gamma} =
(\,\undersym{\lambda}{j}^{\mu}\delta_{\gamma}\,\undersym{\lambda}{j}_{\mu}) +
N^{a}_{\gamma}(\,\undersym{\lambda}{j}^{\mu}\dot{\partial_a}\, \undersym{\lambda}{j}_{\mu}) =
\Gamma^{\mu}_{\mu\gamma} + N^{a}_{\gamma}C^{\mu}_{\mu a},\end{equation}
\begin{equation}\label{b}\undersym{\lambda}{j}_{\sigma}\,\undersym{\lambda}{j}^{\gamma}\!\,_{,\gamma} = -
(\,\undersym{\lambda}{j}^{\gamma}\, \undersym{\lambda}{j}_{\sigma, \gamma}) = - (\Gamma^{\gamma}_{\sigma\gamma} + N^{a}_{\gamma}
C^{\gamma}_{\sigma a}),\end{equation}
\begin{equation}\label{c} C^{\beta}\!\,_{,\gamma} = \delta_{\gamma}C^{\beta} + N^{a}_{\gamma}\dot{\partial_a}C^{\beta}; \ \ \ \ \
C^{\gamma}\!\,_{,\gamma} = \delta_{\gamma}C^{\gamma} + N^{a}_{\gamma}\dot{\partial_a}C^{\gamma},\end{equation}
\begin{equation}\label{d}\undersym{\lambda}{j}_{\sigma}\,\undersym{\lambda}{j}^{\beta}\!\,_{,\gamma} = -
(\Gamma^{\beta}_{\sigma\gamma} + N^{a}_{\gamma}C^{\beta}_{\sigma a}).\end{equation}
\vspace{0.12 cm}
Substituting (\ref{ab}), (\ref{b}), (\ref{c}) and (\ref{d}) in (\ref{SUB}), we find that
\begin{equation}\begin{split}\label{AA}\frac{1}{|\lambda|}\bigg\{\frac{\partial}
{\partial x^{\gamma}}\bigg(\frac{\partial {\cal L}}{\partial \,\undersym{\lambda}{j}_{\beta, \gamma}}\bigg)\bigg\}\, \undersym{\lambda}{j}_{\sigma}&=
2(C^{\beta}\!\,_{|\sigma} - C_{\sigma}C^{\beta} -
\delta^{\beta}_{\sigma}(C^{\gamma}\!\,_{|\gamma} - C^{\gamma}C_{\gamma})  + C^{\epsilon}\Lambda^{\beta}_{\sigma\epsilon})\\
& \ \ \ \, + \ 2 (N^{a}_{\sigma}C^{\mu}_{\mu a}C^{\beta} + N^{a}_{\gamma}C^{\beta}_{\sigma a}C^{\gamma} -
N^{a}_{\gamma}C^{\gamma}_{\sigma a}C^{\beta} - \delta^{\beta}_{\sigma}N^{a}_{\gamma}C^{\mu}_{\mu a}C^{\gamma})\\
& \ \ \ \, + \ 2(N^{a}_{\sigma}\dot{\partial_a}C^{\beta} - \delta^{\beta}_{\sigma}N^{a}_{\gamma}\dot{\partial_a}C^{\gamma}).\end{split}\end{equation}
\vspace{- 0.1cm}On the other hand, by Lemma \ref{lem}, we have
\begin{equation}\frac{\partial {\cal L}}{\partial \lambda_{\beta; a}} = 2|\lambda|(\lambda^{\beta}N^{a}_{\nu}C^{\nu} - \lambda^{\alpha}
N^{a}_{\alpha}C^{\beta}). \nonumber\end{equation}
Consequently, noting that, by Lemma \ref{lemma} {\bf (a)}, $|\lambda|_{;a} = |\lambda|C^{\mu}_{\mu a}$, we obtain
\begin{eqnarray*}\frac{1}{|\lambda|}\bigg\{\frac{\partial}{\partial y^{a}}\bigg(\frac{\partial {\cal L}}{\partial \,\undersym{\lambda}{j}_{\beta; a}}\bigg)\bigg\}
\,\undersym{\lambda}{j}_{\sigma}&=& 2C^{\mu}_{\mu a}(\delta^{\beta}_{\sigma}N^{a}_{\nu}C^{\nu} -
N^{a}_{\sigma}C^{\beta})\\&& + \ 2\{(\,\undersym{\lambda}{j}_{\sigma}\,\undersym{\lambda}{j}^{\beta}\!\,_{; a})
N^{a}_{\nu}C^{\nu}
+ \delta^{\beta}_{\sigma}N^{a}_{\nu; a}C^{\nu} + \delta^{\beta}_{\sigma}N^{a}_{\nu}C^{\nu}\!\,_{; a}\}\\&& - \  2\{(\undersym{\lambda}{j}_{\sigma}\,
\undersym{\lambda}{j}^{\alpha}\!\,_{; a})N^{a}_{\alpha}C^{\beta}
+ \delta^{\alpha}_{\sigma}N^{a}_{\alpha; a}C^{\beta} + \delta^{\alpha}_{\sigma}N^{a}_{\alpha}C^{\beta}\!\,_{; a}\}.\\
\\[- 1 cm]\end{eqnarray*}
Since $C^{\alpha}_{\mu a} = \, \undersym{\lambda}{j}^{\alpha}\,\undersym{\lambda}{j}_{\mu; a}$
(or $\, \undersym{\lambda}{j}_{\mu}\,\undersym{\lambda}{j}^{\alpha}\!\,_{; a} = - \, C^{\alpha}_{\mu a}$), it follows
from the above formula that

\begin{equation}\begin{split}\\[- 0.8 cm]\label{t}\frac{1}{|\lambda|}\bigg\{\frac{\partial}{\partial y^{a}}\bigg(\frac{\partial {\cal L}}{\partial \,
\undersym{\lambda}{j}_{\beta; a}}\bigg)\bigg\}
\,\undersym{\lambda}{j}_{\sigma}&= 2(N^{a}_{\alpha}C^{\alpha}_{\sigma a}C^{\beta} - N^{a}_{\sigma}C^{\mu}_{\mu a}C^{\beta} -
N^{a}_{\gamma}C^{\beta}_{\sigma a}C^{\gamma})\\
& \ \ \ + \, 2(\delta^{\beta}_{\sigma}N^{a}_{\gamma}C^{\mu}_{\mu a}C^{\gamma} + \delta^{\beta}_{\sigma}N^{a}_{\nu}C^{\nu}\!\!\,_{; a} -
N^{a}_{\sigma}C^{\beta}\!\!\,_{; a})\\
& \ \ \ + \ 2C^{\nu}(\delta^{\beta}_{\sigma}N^{a}_{\nu; a} - \delta^{\beta}_{\nu}N^{a}_{\sigma; a}).\end{split}\end{equation}

By (\ref{O}), (\ref{AA}) and (\ref{t}), canceling equal terms, we
conclude that
\begin{equation}\begin{split}\label{FL}\\[- 0.4 cm]\frac{1}{|\lambda|}\bigg(\frac{\delta {\cal L}}{\delta \,\undersym{\lambda}{j}_{\beta}}\bigg)\,
\undersym{\lambda}{j}_{\sigma} &=
\delta^{\beta}_{\sigma}L - 2L^{\beta}_{\sigma} + 2\Lambda^{\beta}_{\sigma\mu}C^{\mu} -
2(C^{\beta}\!\,_{|\sigma} + C^{\epsilon}\Lambda^{\beta}_{\sigma\epsilon}) + 2C_{\sigma}C^{\beta}\\& \ \ \ \  + 2\delta^{\beta}_{\sigma}
(C^{\gamma}\!\,_{{|}\gamma} - C^{\gamma}C_{\gamma}) - 2C^{\nu}(\delta^{\beta}_{\sigma}N^{a}_{\nu; a} - \delta^{\beta}_{\nu}
N^{a}_{\sigma; a}).\end{split}\end{equation}

We next consider the expression
$$\frac{\delta {\cal K}}{\delta \lambda_{\beta}} : = \frac{\partial {\cal K}}{\partial \lambda_{\beta}} -
\frac{\partial}{\partial x^{\gamma}}\bigg(\frac{\partial {\cal K}}
{\partial \lambda_{\beta, \gamma}}\bigg) - \frac{\partial}{\partial y^{a}}\bigg(\frac{\partial {\cal K}}
{\partial \lambda_{\beta; a}}\bigg).$$

Again, by Lemmas \ref{lemma} and \ref{lem}, we find that
\begin{eqnarray*}\frac{\partial {\cal K}}{\partial \lambda_{\beta}} &=& |\lambda|\big\{\lambda^{\beta}g^{\mu\nu}K_{\mu\nu} -
(g^{\mu\beta}\lambda^{\nu} + g^{\nu\beta}\lambda^{\mu})K_{\mu\nu} + g^{\mu\nu}
(\lambda^{\alpha}\Lambda^{\beta}_{\mu\epsilon}\Lambda^{\epsilon}_{\alpha\nu} +
\lambda^{\epsilon}\Lambda^{\alpha}_{\epsilon\mu}\Lambda^{\beta}_{\nu\alpha})\big\}.\\
\end{eqnarray*}
\\[-1 cm]Consequently,
\begin{equation}\label{OT}\frac{1}{|\lambda|}\bigg(\frac{\partial {\cal K}}{\partial \ \undersym{\lambda}{j}_{\beta}}\bigg) \ \undersym{\lambda}{j}_{\sigma}
= \delta^{\beta}_{\sigma}K - 2K^{\beta}_{\sigma} +
2g^{\mu\nu}\Lambda^{\beta}_{\mu\epsilon}\Lambda^{\epsilon}_{\sigma\nu},\end{equation}
where $K := g^{\mu\nu}K_{\mu\nu}.$

\bigskip
Moreover, by Lemma \ref{lem}, we have
\begin{equation}\frac{\partial {\cal K}}{\partial \lambda_{\beta, \gamma}} = 2|\lambda|(\lambda^{\epsilon}g^{\gamma\alpha}\Lambda^{\beta}_{\epsilon\alpha} -
\lambda^{\epsilon}g^{\alpha\beta}\Lambda^{\gamma}_{\epsilon\alpha}).\nonumber\end{equation}
Hence, noting that $|\lambda|_{,\gamma} = (|\lambda|\,\undersym{\lambda}{k}^{\mu})\, \undersym{\lambda}{k}_{\mu, \gamma}$, we get
\begin{equation}\begin{split}\label{SUBT}\frac{1}{|\lambda|}\bigg\{\frac{\partial}
{\partial x^{\gamma}}\bigg(\frac{\partial {\cal K}}{\partial \,\undersym{\lambda}{j}_{\beta, \gamma}}\bigg)\bigg\}\, \undersym{\lambda}{j}_{\sigma}&= 2\{
(\, \undersym{\lambda}{k}^{\mu}\, \undersym{\lambda}{k}_{\mu, \gamma})(\delta^{\epsilon}_{\sigma}\,g^{\gamma\alpha}\Lambda^{\beta}_{\epsilon\alpha} -
\delta^{\epsilon}_{\sigma}\,g^{\alpha\beta}\Lambda^{\gamma}_{\epsilon\alpha})\}\\
& \ \ \, \, + \ 2\{(\,\undersym{\lambda}{j}_{\sigma}\,\undersym{\lambda}{j}^{\epsilon}\!\,_{,\gamma})g^{\gamma\alpha}\Lambda^{\beta}_{\epsilon\alpha}
+ \delta^{\epsilon}_{\sigma}\,g^{\gamma\alpha}\!\,_{,\gamma}
\Lambda^{\beta}_{\epsilon\alpha} + \delta^{\epsilon}_{\sigma}\,g^{\gamma\alpha}\Lambda^{\beta}_{\epsilon\alpha, \gamma}\}\\
& \ \ \, \, - \
2\{(\,\undersym{\lambda}{j}_{\sigma}\,\undersym{\lambda}{j}^{\epsilon}\!\,_{,\gamma})g^{\alpha\beta}\Lambda^{\gamma}_{\epsilon\alpha}
+ \delta^{\epsilon}_{\sigma}g^{\alpha\beta}\!\,_{,\gamma}
\,\Lambda^{\gamma}_{\epsilon\alpha} +
\delta^{\epsilon}_{\sigma}\,g^{\alpha\beta}\Lambda^{\gamma}_{\epsilon\alpha,
\gamma}\}.\end{split}\end{equation} As easily checked,
\begin{equation}\label{abT}\undersym{\lambda}{j}^{\mu}\, \undersym{\lambda}{j}_{\mu, \gamma} = \Gamma^{\mu}_{\mu\gamma} +
N^{a}_{\gamma}C^{\mu}_{\mu a}; \ \ \ \ \ \ \ \ \ \undersym{\lambda}{j}_{\sigma}\,\undersym{\lambda}{j}^{\epsilon}\,_{,\gamma} = -
(\Gamma^{\epsilon}_{\sigma\gamma} + N^{a}_{\gamma}
C^{\epsilon}_{\sigma a}),\end{equation}
\begin{equation}\label{cT} g^{\beta\alpha}\,_{,\gamma} = \delta_{\gamma}g^{\beta\alpha} + N^{a}_{\gamma}\dot{\partial_a}g^{\beta\alpha}; \ \ \ \ \
g^{\gamma\alpha}\,_{,\gamma} = \delta_{\gamma}g^{\gamma\alpha} + N^{a}_{\gamma}\dot{\partial_a}g^{\gamma\alpha},\end{equation}
\begin{equation}\label{dT} \Lambda^{\beta}_{\epsilon\alpha, \gamma} = \delta_{\gamma}\Lambda^{\beta}_{\epsilon\alpha} +
N^{a}_{\gamma}\dot{\partial_a}\Lambda^{\beta}_{\epsilon\alpha}; \ \ \ \ \
\Lambda^{\gamma}_{\epsilon\alpha, \gamma} = \delta_{\gamma}\Lambda^{\gamma}_{\epsilon\alpha} +
N^{a}_{\gamma}\dot{\partial_a}\Lambda^{\gamma}_{\epsilon\alpha}.\end{equation}

\vspace{0.2 cm}
\hspace{- 0.6 cm}Substituting (\ref{abT}), (\ref{cT}) and (\ref{dT}) in (\ref{SUBT}), we obtain
\begin{equation}\begin{split}\label{SUBTT}\\[ - 0.3 cm]\frac{1}{|\lambda|}\bigg\{\frac{\partial}
{\partial x^{\gamma}}\bigg(\frac{\partial {\cal K}}{\partial \,\undersym{\lambda}{j}_{\beta, \gamma}}\bigg)\bigg\}\, \undersym{\lambda}{j}_{\sigma} &=
2\big\{\Gamma^{\mu}_{\mu\gamma}(g^{\gamma\alpha}\Lambda^{\beta}_{\sigma\alpha} - g^{\beta\alpha}\Lambda^{\gamma}_{\sigma\alpha}) -
\Gamma^{\epsilon}_{\sigma\gamma}(g^{\gamma\alpha}\Lambda^{\beta}_{\epsilon\alpha} - g^{\beta\alpha}\Lambda^{\gamma}_{\epsilon\alpha})\big\}
\\& \ \ \ \, + \ 2\big\{(g^{\gamma\alpha}\delta_{\gamma}\Lambda^{\beta}_{\sigma\alpha} -
g^{\beta\alpha}\delta_{\gamma}\Lambda^{\gamma}_{\sigma\alpha}) - (\delta_{\gamma}
g^{\beta\alpha}\Lambda^{\gamma}_{\sigma\alpha} - \delta_{\gamma}g^{\gamma\alpha}\Lambda^{\beta}_{\sigma\alpha})\big\}\\
& \ \ \ \, + \, 2\big\{N^{a}_{\gamma}C^{\mu}_{\mu a}(g^{\gamma\alpha}\Lambda^{\beta}_{\sigma\alpha} - g^{\alpha\beta}\Lambda^{\gamma}_{\sigma\alpha})
- N^{a}_{\gamma}C^{\epsilon}_{\sigma a}(g^{\gamma\alpha}\Lambda^{\beta}_{\epsilon\alpha} - g^{\beta\alpha}
\Lambda^{\gamma}_{\epsilon\alpha})\big\}\\
& \ \ \ \, + \, 2N^{a}_{\gamma}\big\{(g^{\gamma\alpha}\dot{\partial_a}\Lambda^{\beta}_{\sigma\alpha} - g^{\beta\alpha}
\dot{\partial_a}\Lambda^{\gamma}_{\sigma\alpha}) - (\Lambda^{\gamma}_{\sigma\alpha}\dot{\partial_a}g^{\beta\alpha}
- \Lambda^{\beta}_{\sigma\alpha}\dot{\partial_a}g^{\gamma\alpha})\big\}.\end{split}\end{equation}

\vspace{- 0.1 cm}
On the other hand, Lemma \ref{lem} gives
\begin{eqnarray*} \frac{\partial {\cal K}}{\partial \lambda_{\beta; a}} &=& 2|\lambda|\lambda^{\epsilon}(g^{\beta\mu}N^{a}_{\alpha}\Lambda^{\alpha}_{\epsilon\mu} -
g^{\mu\nu}N^{a}_{\nu}\Lambda^{\beta}_{\epsilon\mu}).\\
\\[- 1 cm]\end{eqnarray*}
Consequently, taking into account that $|\lambda|_{;a} = |\lambda|C^{\mu}_{\mu a}$, we get
\begin{eqnarray*}\frac{1}{|\lambda|}\bigg\{\frac{\partial}{\partial y^{a}}\bigg(\frac{\partial {\cal K}}{\partial \,\undersym{\lambda}{j}_{\beta; a}}\bigg)\bigg\}
\,\undersym{\lambda}{j}_{\sigma}&=&2C^{\delta}_{\delta a}
(\delta^{\epsilon}_{\sigma}g^{\beta\mu}N^{a}_{\alpha}\Lambda^{\alpha}_{\epsilon\mu} -
\delta^{\epsilon}_{\sigma} g^{\mu\nu}N^{a}_{\nu}\Lambda^{\beta}_{\epsilon\mu})\,\undersym{\lambda}{j}_{\sigma}\\
&& +  \ 2(\,\undersym{\lambda}{j}_{\sigma}\,\undersym{\lambda}{j}^{\epsilon}\!\,_{; a})\{g^{\beta\mu}N^{a}_{\alpha}
\Lambda^{\alpha}_{\epsilon\mu} - g^{\mu\nu}N^{a}_{\nu}\Lambda^{\beta}_{\epsilon\mu}\} +
\delta^{\epsilon}_{\sigma}g^{\beta\mu}\!\,_{;a}N^{a}_{\alpha}\Lambda^{\alpha}_{\epsilon\mu}\\
&& + \ \delta^{\epsilon}_{\sigma}g^{\mu\nu}\!\,_{;a}N^{a}_{\nu}\Lambda^{\beta}_{\epsilon\mu} +
\delta^{\epsilon}_{\sigma}g^{\beta\mu}N^{a}_{\alpha; a}\Lambda^{\alpha}_{\epsilon\mu} +
\delta^{\epsilon}_{\sigma}g^{\mu\nu}N^{a}_{\nu; a}\Lambda^{\beta}_{\epsilon\mu}\\
&& + \ \delta^{\epsilon}_{\sigma}g^{\beta\mu}N^{a}_{\alpha}
\Lambda^{\alpha}_{\epsilon\mu; a}  + \ \delta^{\epsilon}_{\sigma}g^{\mu\nu}N^{a}_{\nu}\Lambda^{\beta}_{\epsilon\mu; a}.\\
\\[- 1.2 cm]\end{eqnarray*}

\hspace{- 0.6 cm}Moreover, since $C^{\alpha}_{\mu a} = \, \undersym{\lambda}{j}^{\alpha}\,\undersym{\lambda}{j}_{\mu; a}$, it follows that
\begin{equation}\begin{split}\label{SUBTTT}\\[ - 0.5 cm]\frac{1}{|\lambda|}\bigg\{\frac{\partial}
{\partial y^{a}}\bigg(\frac{\partial {\cal K}}{\partial \,\undersym{\lambda}{j}_{\beta; a}}\bigg)\bigg\}\, \undersym{\lambda}{j}_{\sigma} &=
2\{(g^{\beta\mu}N^{a}_{\alpha; a}\Lambda^{\alpha}_{\sigma\mu}-
g^{\mu\nu}N^{a}_{\nu; a}\Lambda^{\beta}_{\sigma\mu})\\
&\ \ \ \,  - \ N^{a}_{\gamma}C^{\mu}_{\mu a}(g^{\gamma\alpha}\Lambda^{\beta}_{\sigma\alpha} -
g^{\alpha\beta}\Lambda^{\gamma}_{\sigma\alpha})  + N^{a}_{\gamma}C^{\epsilon}_{\sigma a}(g^{\gamma\alpha}\Lambda^{\beta}_{\epsilon\alpha}
- g^{\beta\alpha}\Lambda^{\gamma}_{\epsilon\alpha})\\
&\ \ \ \,  - \ N^{a}_{\gamma}(g^{\gamma\alpha}\dot{\partial_a}\Lambda^{\beta}_{\sigma\alpha} - g^{\beta\alpha}
\dot{\partial_a}\Lambda^{\gamma}_{\sigma\alpha}) + N^{a}_{\gamma}(\Lambda^{\gamma}_{\sigma\alpha}\dot{\partial_a}g^{\beta\alpha}
 - \Lambda^{\beta}_{\sigma\alpha}\dot{\partial_a}g^{\gamma\alpha})\}.\end{split}\end{equation}

\vspace{- 0.3 cm}

By (\ref{OT}), (\ref{SUBTT}) and (\ref{SUBTTT}), after some reductions, we finally arrive at the relation

\begin{equation}\begin{split}\label{SS}\\[- 1 cm]
\frac{1}{|\lambda|}\bigg(\frac{\delta {\cal K}}{\delta \,\undersym{\lambda}{j}_{\beta}}\bigg)\,\undersym{\lambda}{j}_{\sigma} &=
\delta^{\beta}_{\sigma}K - 2K^{\beta}_{\sigma} +
2g^{\mu\nu}\Lambda^{\beta}_{\mu\epsilon}\Lambda^{\epsilon}_{\sigma\nu} -
2\Gamma^{\mu}_{\mu\gamma}(g^{\gamma\alpha}\Lambda^{\beta}_{\sigma\alpha} - g^{\beta\alpha}\Lambda^{\gamma}_{\sigma\alpha})\\
& \ \ \ \ + \ 2\Gamma^{\epsilon}_{\sigma\gamma}(g^{\gamma\alpha}\Lambda^{\beta}_{\epsilon\alpha} - g^{\beta\alpha}\Lambda^{\gamma}_{\epsilon\alpha})
- 2(g^{\gamma\alpha}\delta_{\gamma}\Lambda^{\beta}_{\sigma\alpha} -
g^{\beta\alpha}\delta_{\gamma}\Lambda^{\gamma}_{\sigma\alpha})\\
& \ \ \  \  + \ 2(\delta_{\gamma}
g^{\beta\alpha}\Lambda^{\gamma}_{\sigma\alpha} -   \delta_{\gamma}g^{\gamma\alpha}\Lambda^{\beta}_{\sigma\alpha}) -
2(g^{\beta\mu}N^{a}_{\alpha; a}\Lambda^{\alpha}_{\sigma\mu} - g^{\mu\nu}N^{a}_{\nu; a}\Lambda^{\beta}_{\sigma\mu}).\end{split}\end{equation}

In what follows, we shall derive some formulae to simplify (\ref{SS}). Recall that the cannonical $d$-connection is a metric connection.

\bigskip

We have $g^{\alpha\beta}\,_{|\gamma} = 0$, so that $\Lambda^{\gamma}_{\sigma\alpha}g^{\alpha\beta}\,_{|\gamma}  = 0$. Consequently,
\begin{equation}\label{x}\Lambda^{\gamma}_{\sigma\alpha}\delta_{\gamma}g^{\alpha\beta} = -
\Lambda^{\gamma}_{\sigma\alpha}g^{\alpha\epsilon}\Gamma^{\beta}_{\epsilon\gamma} - \Lambda^{\gamma}_{\sigma\epsilon}g^{\beta\alpha}
\Gamma^{\epsilon}_{\alpha\gamma}.\end{equation}
Next,
$$g^{\alpha\beta}\Lambda^{\gamma}\,_{\sigma\alpha|\gamma} =
g^{\alpha\beta}(\delta_{\gamma}\Lambda^{\gamma}_{\sigma\alpha} +
\Lambda^{\epsilon}_{\sigma\alpha}\Gamma^{\gamma}_{\epsilon\gamma} -
\Lambda^{\gamma}_{\epsilon\alpha}\Gamma^{\epsilon}_{\sigma\gamma} - \Lambda^{\gamma}_{\sigma\epsilon}
\Gamma^{\epsilon}_{\alpha\gamma}).\\[0.3 cm]$$
Hence, noting that $\Gamma^{\gamma}_{\epsilon\gamma} =
\Lambda^{\gamma}_{\epsilon\gamma} + \Gamma^{\gamma}_{\gamma\epsilon} = C_{\epsilon} + \Gamma^{\gamma}_{\gamma\epsilon}$, we
get
\begin{equation}\label{xt} g^{\alpha\beta}(\Lambda^{\gamma}\,_{\sigma\alpha|\gamma} - C_{\epsilon}\Lambda^{\epsilon}_{\sigma\alpha}) =
g^{\alpha\beta}(\delta_{\gamma}\Lambda^{\gamma}_{\sigma\alpha} +
\Lambda^{\epsilon}_{\sigma\alpha}\Gamma^{\gamma}_{\gamma\epsilon} -
\Lambda^{\gamma}_{\epsilon\alpha}\Gamma^{\epsilon}_{\sigma\gamma} - \Lambda^{\gamma}_{\sigma\epsilon}
\Gamma^{\epsilon}_{\alpha\gamma}).\end{equation}

\hspace{- 0.6 cm}Moreover,
\begin{equation}\label{y} g^{\gamma\alpha}\Lambda^{\beta}\,_{\sigma\alpha|\gamma} = g^{\gamma\alpha}\delta_{\gamma}
\Lambda^{\beta}_{\sigma\alpha} + g^{\alpha\epsilon}\Lambda^{\gamma}_{\sigma\alpha}\Gamma^{\beta}_{\gamma\epsilon} -
g^{\gamma\alpha}\Lambda^{\beta}_{\sigma\epsilon}\Gamma^{\epsilon}_{\alpha\gamma} - g^{\gamma\alpha}\Lambda^{\beta}_{\epsilon\alpha}
\Gamma^{\epsilon}_{\sigma\gamma}.\end{equation}
Adding the first term on the right hand side of (\ref{x}) and the second term on the right hand side of (\ref{y}), we obtain
\begin{equation}\label{xyz}g^{\alpha\epsilon}\Lambda^{\gamma}_{\sigma\alpha}(\Gamma^{\beta}_{\gamma\epsilon} - \Gamma^{\beta}_{\epsilon\gamma})
=  g^{\alpha\epsilon}\Lambda^{\gamma}_{\sigma\alpha}\Lambda^{\beta}_{\gamma\epsilon}=  g^{\mu\nu}\Lambda^{\beta}_{\epsilon\mu}
\Lambda^{\epsilon}_{\sigma\nu}.\end{equation}

\hspace{- 0.6 cm}As easily checked,
\begin{equation}\delta_{\gamma}g^{\gamma\alpha} + g^{\gamma\epsilon}\Gamma^{\alpha}_{\gamma\epsilon} + g^{\alpha\epsilon}\Gamma^{\gamma}_{
\epsilon\gamma} = 0.\nonumber\end{equation}

\hspace{- 0.6 cm}Consequently,
\begin{equation}\begin{split}\label{TAR} 0 &= \Lambda^{\beta}_{\sigma\alpha}(\delta_{\gamma}g^{\gamma\alpha} +
g^{\gamma\epsilon}\Gamma^{\alpha}_{\gamma\epsilon} + g^{\alpha\epsilon}\Gamma^{\gamma}_{\epsilon\gamma})\\
& =  \Lambda^{\beta}_{\sigma\alpha}\delta_{\gamma}g^{\gamma\alpha} +
\Lambda^{\beta}_{\sigma\alpha}g^{\gamma\epsilon}\Gamma^{\alpha}_{\gamma\epsilon} +
\Lambda^{\beta}_{\sigma\alpha}g^{\alpha\epsilon}\Gamma^{\gamma}_{\gamma\epsilon} +
\Lambda^{\beta}_{\sigma\alpha}C^{\alpha}.\end{split}\end{equation}

\hspace{- 0.6 cm}Finally, it is clear that the third term on the right hand side of (\ref{y}) cancels with the second term on the right hand side of (\ref{TAR}), that is
\begin{equation}g^{\gamma\epsilon}\Lambda^{\beta}_{\sigma\alpha}\Gamma^{\alpha}_{\gamma\epsilon} -
g^{\gamma\alpha}\Lambda^{\beta}_{\sigma\epsilon}\Gamma^{\epsilon}_{\alpha\gamma} = 0\nonumber\end{equation}

In view of (\ref{xt}), (\ref{xyz}), and (\ref{TAR}), equation (\ref{SS}) may be written in the form
\begin{equation}\begin{split}\label{FES}\\[- 0.3 cm]\frac{1}{|\lambda|}\bigg(\frac{\delta {\cal K}}
{\delta \,\undersym{\lambda}{j}_{\beta}}\bigg)\,\undersym{\lambda}{j}_{\sigma} &=
\delta^{\beta}_{\sigma}K - 2K^{\beta}_{\sigma} + 2g^{\alpha\beta}(\Lambda^{\gamma}\,_{\sigma\alpha|\gamma} -
C_{\epsilon}\Lambda^{\epsilon}_{\sigma\alpha}) - 2g^{\gamma\alpha}\Lambda^{\beta}_{\sigma\alpha|\gamma}\\
& \ \ \ \ + 2\Lambda^{\beta}_{\sigma\alpha}C^{\alpha} - 2(g^{\beta\mu}N^{a}_{\alpha; a}\Lambda^{\alpha}_{\sigma\mu} -
g^{\mu\nu}N^{a}_{\nu; a}\Lambda^{\beta}_{\sigma\mu}).\end{split}\end{equation}

We finally consider the Euler-Lagrange equations (\ref{ELEst}). Setting
\begin{equation}\label{xcx}E^{\beta}_{\sigma}: = \frac{1}{|\lambda|}\bigg(\frac{\delta {\cal H}}{\delta \,\undersym{\lambda}{j}_{\beta}}\bigg)\,
\undersym{\lambda}{j}_{\sigma},\end{equation}
then, according to (\ref{FL}), (\ref{FES}) and (\ref{xcx}), we conclude that
\begin{equation}\begin{split}\label{FFFL}\\[- 0.45 cm]E^{\beta}_{\sigma} &= \delta^{\beta}_{\sigma}H - 2H^{\beta}_{\sigma} +
2(C^{\beta}\!\,_{|\sigma} + C^{\epsilon}\Lambda^{\beta}_{\sigma\epsilon}) - 2C_{\sigma}C^{\beta} - 2\delta^{\beta}_{\sigma}
(C^{\gamma}\!\,_{{|}\gamma} - C^{\gamma}C_{\gamma})  - 2g^{\gamma\alpha}\Lambda^{\beta}_{\sigma\alpha|\gamma}
\\& \ \ \ +
2g^{\alpha\beta}(\Lambda^{\gamma}\,_{\sigma\alpha|\gamma} - C_{\epsilon}\Lambda^{\epsilon}_{\sigma\alpha})
- 2(g^{\beta\mu}N^{a}_{\alpha; a}\Lambda^{\alpha}_{\sigma\mu}
- g^{\mu\nu}N^{a}_{\nu; a}\Lambda^{\beta}_{\sigma\mu})\\& \ \ \ + 2C^{\nu}(\delta^{\beta}_{\sigma}N^{a}_{\nu; a} -
\delta^{\beta}_{\nu}N^{a}_{\sigma; a}).\end{split}\end{equation}
On the other hand, by (\ref{h}), we have
\begin{equation}\label{PPP}C_{\sigma|\alpha} - C_{\alpha|\sigma} = \Lambda^{\gamma}_{\sigma\alpha|\gamma} -
C_{\epsilon}\Lambda^{\epsilon}_{\sigma\alpha} + \mathfrak{S}_{\sigma, \alpha, \epsilon}C^{\epsilon}_{\alpha a}R^{a}_{\epsilon\sigma}.\end{equation}

Hence, by (\ref{PPP}), we obtain
\begin{equation}\begin{split}\label{Aida} (C^{\beta}\,_{{|}\sigma} + C^{\epsilon}\Lambda^{\beta}_{\sigma\epsilon}) + g^{\alpha\beta}
(\Lambda^{\gamma}\,_{\sigma\alpha|\gamma} - C_{\epsilon}\Lambda^{\epsilon}_{\sigma\alpha})
&= g^{\alpha\beta}C_{\sigma|\alpha} - C^{\epsilon}\Lambda^{\beta}_{\epsilon\sigma}\\& \ \ \ +
g^{\alpha\beta}\{\mathfrak{S}_{\sigma, \alpha, \epsilon}C^{\epsilon}_{\alpha a}R^{a}_{\sigma\epsilon}\}
\end{split}\end{equation}
In view of (\ref{FFFL}) and (\ref{Aida}), the Euler-Lagrange equations  take the form
\begin{equation}\begin{split}\label{EEEE}\\[- 0.5 cm]0&=E^{\beta}_{\sigma} = \delta^{\beta}_{\sigma}H - 2H^{\beta}_{\sigma}
- 2C_{\sigma}C^{\beta} - 2\delta^{\beta}_{\sigma}C^{\epsilon}\!\,_{|\epsilon} +
2\delta^{\beta}_{\sigma}C^{\epsilon}C_{\epsilon} -2C^{\epsilon}\Lambda^{\beta}_{\epsilon\sigma}\\& \ \ \ \ \ \ \ \ \ \ \
\ + \, 2g^{\alpha\beta}C_{\sigma|\alpha} - 2g^{\gamma\alpha}\Lambda^{\beta}_{\sigma\alpha|\gamma}
 - 2N^{a}_{\alpha; a}(\Lambda^{\alpha}\,_{\sigma}\,^{\beta}-  \Lambda^{\beta}\,_{\sigma}\,^{\alpha})\\&
\ \ \ \ \ \ \ \ \ \ \ \ + \, 2C^{\nu}(\delta^{\beta}_{\sigma}N^{a}_{\nu; a} - \delta^{\beta}_{\nu}N^{a}_{\sigma; a}) + 2g^{\alpha\beta}\{\mathfrak{S}_{\sigma, \alpha, \epsilon}C^{\epsilon}_{\alpha a}
R^{a}_{\sigma\epsilon}\}\end{split}\end{equation}
Lowering the index $\beta$ in (\ref{EEEE}) and renaming the indices, we get
\begin{equation}\begin{split}\label{EEEXz}\\[- 0.7 cm]0& = E_{\mu\nu} := g_{\mu\nu}H - 2H_{\mu\nu} -
2C_{\mu}C_{\nu} - 2g_{\mu\nu}(C^{\epsilon}\!\,_{|\epsilon} - C^{\epsilon}C_{\epsilon})- 2C^{\epsilon}\Lambda_{\mu\epsilon\nu}
+ 2C_{\nu|\mu}\\& \ \ \ \ \ \ \ \ \ \ \ \ \ \ - \ 2g^{\epsilon\alpha}\Lambda_{\mu\nu\alpha|\epsilon}
 - 2N^{a}_{\epsilon; a}(\Lambda^{\epsilon}\,_{\nu\mu} -  \Lambda_{\mu\nu}\,^{\epsilon})
+ 2g_{\mu\nu}C^{\epsilon}N^{a}_{\epsilon; a} - 2C_{\mu}N^{a}_{\nu; a}\\& \ \ \ \ \ \ \ \ \ \ \ \ \ \ + 2 \ \mathfrak{S}_{\mu, \nu, \epsilon}C^{\epsilon}_{\mu a}
R^{a}_{\nu\epsilon}.\end{split}\end{equation}
%\vspace{- 0.3 cm}
This is the {\bf horizontal unified field equations} in the context of the EAP-geometry.

\bigskip

Although the chosen Lagrangian (\ref{rzx}) has a similar form to that of the GFT, the horizontal field equations derived in the
framework of EAP-geometry,
as should be expected, contain extra
terms which do not exist in the context of the GFT. This is because the Euler-Lagrange equations contain an additional term due to the
dependence of the horizontal fundamental vector fields on the directional argument $y$ (\ref{ELEst}). Moreover, these extra terms are
expressed explicitly in terms of the nonlinear connection of the EAP-space (\ref{EEEXz}).

%%%%%%%%%%%%%%%%%%%%%%%%%%%%%%%%%%% Subsection 4.2 %%%%%%%%%%%%%%%%%%%%%%%%%%%%%%

\subsection{Vertical unified field equations}

\hspace{0.6 cm} We now deduce the vertical field equations. {\it No conditions are imposed on the nonlinear connection}.

\bigskip

As easily checked, Lemma \ref{ct} remains valid if all horizontal geometric objects are replaced by their vertical corespondents.

\bigskip
We consider here a scalar Lagrangian formed of vertical entities. Let
$${\cal V} := {||\lambda||} g^{ab} V_{ab},$$ where
\begin{equation}V_{ab} := T^{d}_{ea}T^{e}_{db} - C_{a}C_{b}.\nonumber\end{equation}

Since, by (\ref{cnt}) and (\ref{torsion}), the vertical torsion tensor $T^{a}_{bc}$ is expressed in terms of both $\lambda^{a}$ and $\lambda_{b; c}$, it follows that
${\cal V}$ does not depend on $\lambda_{b, \mu}$, that is
\begin{equation}\frac{\partial {\cal V}}{\partial \lambda_{b, \mu}} = 0.\nonumber\end{equation} Consequently,
the Euler-Lagrange equations in this case reduce to
\begin{equation}\frac{\partial {\cal V}}{\partial \lambda_{b}} - \frac{\partial}{\partial y^{e}}\bigg(\frac{\partial {\cal V}}
{\partial \lambda_{b; \, e}}\bigg) = 0.\nonumber\end{equation}
Following the same procedure of the proof of equation
(\ref{EEEXz}), taking into account Lemma \ref{lemma}, (with each geometric object being replaced by its vertical analogue), we get

\begin{equation}\begin{split}\label{xax} 0 &= E_{ab}: = g_{ab} V - 2V_{ab} - 2g_{ab} (C^{e}\!\,_{||e} - C^{e}C_{e})
- 2C_{a}C_{b} - 2C^{e} T_{aeb}\\& \ \ \ \ \ \ \ \ \ \ \ \ \ + 2C_{b||a} -
2g^{de}T_{abe||d}.\end{split}\end{equation}

\hspace{- 0.65 cm}This is the {\bf vertical unified field equations} in the context of the EAP-geometry.

\bigskip

It should be noted that equation (\ref{xax}) is formally similar to the field equations of the GFT with each geometric object in the context of the GFT replaced
by its vertical counterpart.
\end{mysect}

%%%%%%%%%%%%%%%%%%%%%%%%%%%%%%%%%%%%%%%%%%%%% Section 5. Physical Consequence %%%%%%%%%%%%%%%%%%%%%%%%%%%%%%%%%%%%%%%%%%%%%%%%%%%%%%%%%

\begin{mysect}{Physical Consequences}

In this section, we investigate some physical consequences of the obtained field equations. To do this, we split the obtained field equations
into its symmetric and
skew-symmetric parts. We show that the symmetric
(skew-symmetric) part of the field equations give rise to a generalized form of Einstein's equations (Maxwell's equations).
Moreover, all physical objects considered are purely geometric.

%%%%%%%%%%%%%%%%%%%%%%%%%%%%%%%%%% Subsection 5.1 %%%%%%%%%%%%%%%%%%%%%%%%%%%%%%%%%%%%%%%%

\vspace{- 0.2 cm}\subsection{Splitting the horizontal field equations}

\hspace{0.6 cm} We first focus our attention on the symmetric and skew symmetric parts of the horizontal field equations (\ref{EEEXz}).

\bigskip

Considering the {\bf symmetric} part of (\ref{EEEXz}), noting that $C = C^{\epsilon}\!\,_{|\epsilon} - C^{\epsilon}C_{\epsilon}$,
we have
\begin{equation}\begin{split}\label{EEEXy}\\[- 0.7 cm]0& = E_{(\mu\nu)} = g_{\mu\nu}H - 2H_{\mu\nu} -  2C_{\mu}C_{\nu} - 2g_{\mu\nu}C -
C^{\epsilon}(\Lambda_{\mu\epsilon\nu} + \Lambda_{\nu\epsilon\mu}) + (C_{\nu|\mu} + C_{\mu|\nu})
\\& \ \ \ \ \ \ \ \ \ \ \ \ \ \, - \,g^{\epsilon\alpha}(\Lambda_{\mu\nu\alpha|\epsilon} + \Lambda_{\nu\mu\alpha|\epsilon})
 + N^{a}_{\epsilon; a}(\Lambda_{\mu\nu}\!\,^{\epsilon} +  \Lambda_{\nu\mu}\!\,^{\epsilon})  + 2g_{\mu\nu}C^{\epsilon}N^{a}_{\epsilon; a}\\&
\ \ \ \ \ \ \ \ \ \ \ \ \ \  - (C_{\mu}N^{a}_{\nu; a} + C_{\nu}N^{a}_{\mu; a}).\nonumber\end{split}\end{equation}

\hspace{-0.6 cm}Taking into account Lemma \ref{ct} {\bf (b)}, the above equation can be expressed in the form
\begin{equation}\begin{split}\label{EEEXy}\\[- 0.7 cm]0& = E_{(\mu\nu)} = g_{\mu\nu}H - 2H_{\mu\nu} - 2\alpha_{\mu\nu} - 2g_{\mu\nu}C +
\phi_{\mu\nu} + \theta_{\mu\nu} - \psi_{\mu\nu} + N^{a}_{\epsilon; a}(\Lambda_{\mu\nu}\!\,^{\epsilon} +  \Lambda_{\nu\mu}\!\,^{\epsilon})
\\& \ \ \ \ \ \ \ \ \ \ \ \ \ \
%+  N^{a}_{\epsilon; a}(\Lambda_{\mu\nu}\!\,^{\epsilon} +  \Lambda_{\nu\mu}\!\,^{\epsilon})
+ \, 2g_{\mu\nu}C^{\epsilon}N^{a}_{\epsilon; a} - (C_{\mu}N^{a}_{\nu; a} + C_{\nu}N^{a}_{\mu; a})\\[- 0.8 cm].\end{split}\end{equation}

\hspace{- 0.6 cm}As easily checked, the relation $\Lambda^{\alpha}_{\mu\nu} = \gamma^{\alpha}_{\mu\nu} - \gamma^{\alpha}_{\nu\mu}$ implies that
\begin{equation} \label{xtHT} H_{\mu\nu} = \sigma_{\mu\nu} - \varpi_{\mu\nu} + \omega_{\mu\nu}  - \alpha_{\mu\nu},\end{equation}
\begin{equation}\label{xtH} H := g^{\mu\nu}H_{\mu\nu} = \sigma - \varpi + \omega - \alpha.\end{equation}

\hspace{- 0.6 cm} Substituting (\ref{xtHT}) and (\ref{xtH}) in (\ref{EEEXy}), setting $N_{\beta}: = N^{a}_{\beta; a}$, we find that
\begin{equation}\begin{split}\\[- 0.7 cm]\label{xtE} 0 &= E_{(\mu\nu)}: = g_{\mu\nu}(\sigma - \varpi - \alpha + \omega) -
2(\sigma_{\mu\nu} - \varpi_{\mu\nu} + \omega_{\mu\nu}) - 2g_{\mu\nu} C\\& \ \ \ \ \ \ \ \ \ \ \ \ \ \, \ + (\theta_{\mu\nu} + \phi_{\mu\nu} - \psi_{\mu\nu}) +
N_{\beta}(\Lambda_{\mu\nu}\!\,^{\beta} + \Lambda_{\nu\mu}\!\,^{\beta}) +  2g_{\mu\nu}C^{\beta}N_{\beta}\\& \, \, \, \ \ \ \ \ \ \ \ \ \ \ \ \
- (C_{\mu}N_{\nu} + C_{\nu}N_{\mu}).\end{split}\end{equation}
\vspace{- 0.1 cm}On the other hand, by (\ref{symx}), we have
\begin{equation}\label{xtWT}\\[- 0.5 cm] \,\overcirc{R}_{(\mu\nu)} = - \frac{1}{2}\,(\theta_{\mu\nu} - \psi_{\mu\nu} + \phi_{\mu\nu}) +
\omega_{\mu\nu} + Q_{(\mu\nu)},\end{equation}
\\[ - 0.5 cm]\begin{equation}\label{xtscalar} \ \overcirc{\cal R} = - \frac{1}{2}\,(\theta - \psi + \phi) + \omega + Q; \ \ \ Q: = g^{\mu\nu}Q_{\mu\nu}.\end{equation}

%\vspace{ -0.1 cm}
\hspace{- 0.6 cm}Solving for $\omega_{\mu\nu}$ and $\omega$ in (\ref{xtWT}) and (\ref{xtscalar}) respectively and substituting in (\ref{xtE}), we obtain
\begin{equation}\begin{split}\label{ZXt} 0 &= E_{(\mu\nu)}: = g_{\mu\nu}(\sigma - \varpi - \alpha + \{ \, \overcirc{\cal R} - \frac{1}{2}\,(\psi - \phi - \theta) - Q\} - 2C) -
2(\sigma_{\mu\nu} - \varpi_{\mu\nu})\\
& \ \ \ \ \ \ \ \ \ \ \ \ \ \ \ - \ \{2 \ \overcirc{\cal R}_{(\mu\nu)} - (\psi_{\mu\nu} - \phi_{\mu\nu} - \theta_{\mu\nu}) - Q_{(\mu\nu)}\}
 + (\theta_{\mu\nu} + \phi_{\mu\nu} - \psi_{\mu\nu})\\
& \ \ \ \ \ \ \ \ \ \ \ \ \ \ \ +  \ N_{\beta}
(\Lambda_{\mu\nu}\!\,^{\beta} + \Lambda_{\nu\mu}\!\,^{\beta}) +  2g_{\mu\nu}C^{\beta}N_{\beta} -
(C_{\mu}N_{\nu} + C_{\nu}N_{\mu}).\end{split}\end{equation}

\hspace{- 0.8 cm} Let $N^{\beta} := g^{\beta\epsilon}N_{\epsilon}.$ Taking the relation $\frac{1}{2}(\phi - \psi + \theta) - \alpha - 2C = 0$ (Lemma \ref{ct} {\bf (a)})
into account, (\ref{ZXt}) reduces to
\begin{equation}\label{Tarek}\begin{split}  0 &= E_{(\mu\nu)}: = (g_{\mu\nu} \ \overcirc{\cal R}  - 2 \ \overcirc{R}_{(\mu\nu)})
+ g_{\mu\nu}(\sigma - \varpi - Q) - \ 2(\sigma_{\mu\nu} - \varpi_{\mu\nu} - Q_{(\mu\nu)})\\
& \ \ \ \ \ \ \ \ \ \ \ \ \ \ \ +  \ N^{\beta}
(\Lambda_{\mu\nu\beta} + \Lambda_{\nu\mu\beta}) +  2g_{\mu\nu}C^{\beta}N_{\beta} -
(C_{\mu}N_{\nu} + C_{\nu}N_{\mu}),\end{split}\end{equation}
which represents the {\bf symmetric part} of the horizontal unified field equations (\ref{EEEXz}) expressed in terms of the fundamental tensors of Table 1.

\bigskip

\hspace{- 0.6 cm}Finally, setting
\begin{equation}\label{newtensors}M_{\mu\nu} := N^{\beta}\Lambda_{\mu\nu\beta}, \ \ \ \ \ Z_{\mu\nu} := C_{\mu}N_{\nu}, \ \ \ \
Z := g^{\mu\nu}Z_{\mu\nu},\end{equation}
equation (\ref{Tarek}) can be written in the more informative form:
\begin{equation}\label{first order}\overcirc{R}_{(\mu\nu)} - \frac{1}{2}\, g_{\mu\nu}\,\,\overcirc{\cal R} = T_{(\mu\nu)};\end{equation}
\begin{equation}\label{ems}T_{(\mu\nu)} := \frac{1}{2}\,g_{\mu\nu}(\sigma - \varpi - Q + 2Z)   - (\sigma_{\mu\nu} - \varpi_{\mu\nu} - Q_{(\mu\nu)}) +
(\frac{1}{2}\,N_{\beta}\Omega^{\beta}_{\mu\nu} - Z_{(\mu\nu)}).\end{equation}
\hspace{0.6 cm} According to (\ref{first order}), $T_{(\mu\nu)}$ may be interpreted as the
{\bf horizontal geometric energy-momentum tensor} (as will be clear in section 7), constructed from the
symmetric tensors of Table 1, together with $N_{\beta}\Omega^{\beta}_{\mu\nu}$, $Q_{(\mu\nu)}$ and $Z_{(\mu\nu)}$. On the other hand,
in view of (\ref{first order}), (\ref{ems}) and the fact that
\, $\overcirc{R}_{[\mu\nu]} = \frac{1}{2}\, \mathfrak{S}_{\mu, \nu, \alpha} \ \overcirc{C}^{\alpha}_{\mu a} R^{a}_{\nu \alpha}$ (by (\ref{skewz})), the
horizontal Einstein tensor (\ref{HE}) takes the form
\begin{equation}\begin{split}\label{cxa} \overcirc{G}_{\mu\nu} & = \ \overcirc{R}_{\mu\nu} - \frac{1}{2}\, g_{\mu\nu} \ \overcirc{\cal R}\\
& \ = \, \frac{1}{2}\, g_{\mu\nu}(\sigma - \varpi) + (\varpi_{\mu\nu} - \sigma_{\mu\nu}) + \frac{1}{2}\,g_{\mu\nu}(2Z - Q) + \frac{1}{2} \, N_{\beta}\,
\Omega^{\beta}_{\mu\nu} -
Z_{(\mu\nu)} + Q_{(\mu\nu)}\\
& \ \ \, \ \ + \ \frac{1}{2}\, \mathfrak{S}_{\mu, \nu, \alpha} \ \overcirc{C}^{\alpha}_{\mu a} R^{a}_{\nu \alpha},\end{split}\end{equation}
which is, by (\ref{HES}), subject to the identity %{\bf generalized conservation law}
\begin{equation}\label{HESxx} \,\overcirc{G}^{\mu}\!\,_{\sigma{o\atop|}\mu} = R^{a}_{\sigma\mu}\,\,\overcirc{P}^{\mu}_{a} +
\frac{1}{2}\,R^{a}_{\alpha\mu}\,\,\overcirc{P}^{\alpha\mu}\!\,_{\sigma a}.\end{equation}

\bigskip

Now, we consider the {\bf skew-symmetric part} of the horizontal field equations (\ref{EEEXz}). By Lemma \ref{ct} {\bf (c)} and (\ref{newtensors}), we obtain
\begin{equation}\label{EEEXyt}0 = E_{[\mu\nu]} = 2\gamma_{\mu\nu} + \eta_{\mu\nu} - \epsilon_{\mu\nu}
- 2\xi_{\mu\nu} - \chi_{\mu\nu} + 2N_{\beta}\Lambda^{\beta}_{\mu\nu} + 2(M_{[\mu\nu]} - Z_{[\mu\nu]}) +
2\mathfrak{S}_{\mu, \nu, \epsilon}C^{\epsilon}_{\nu a}R^{a}_{\epsilon\mu}\end{equation}

\hspace{-0.6 cm}Expressed in terms of the fundamental tensors of Table 1, relation (\ref{h}) is given by
\begin{equation}\label{tFBII}\eta_{\mu\nu} + \epsilon_{\mu\nu} - \chi_{\mu\nu} =  \mathfrak{S}_{\mu, \nu, \epsilon}C^{\epsilon}_{\nu a}R^{a}_{\epsilon\mu},\end{equation}

\hspace{- 0.6 cm}which, when inserted in (\ref{EEEXyt}), yields
\begin{equation}\label{tSM} 0 = E_{[\mu\nu]} = 2\{(\gamma_{\mu\nu} - \epsilon_{\mu\nu} - \xi_{\mu\nu} +
N_{\beta}\Lambda^{\beta}_{\mu\nu}) + (M_{[\mu\nu]} - Z_{[\mu\nu]})\} +  3\mathfrak{S}_{\mu, \nu, \epsilon}C^{\epsilon}_{\nu a}R^{a}_{\epsilon\mu}.\end{equation}

\hspace{- 0.6 cm}Moreover, since
\begin{equation}\epsilon_{\mu\nu} = C_{\mu|\nu} - C_{\nu|\mu} = (\delta_{\nu}C_{\mu} - \delta_{\mu}C_{\nu}) - \eta_{\mu\nu},\nonumber\end{equation}
\hspace{0.02 cm}it follows, by (\ref{tSM}), that
\begin{equation}\label{tMax}\delta_{\nu}C_{\mu} - \delta_{\mu}C_{\nu} = (\gamma_{\mu\nu} - \xi_{\mu\nu} + \eta_{\mu\nu} +
N_{\beta}\Lambda^{\beta}_{\mu\nu}) + (M_{[\mu\nu]} - Z_{[\mu\nu]}) +
\frac{3}{2}\mathfrak{S}_{\mu, \nu, \epsilon}C^{\epsilon}_{\nu a}R^{a}_{\epsilon\mu}.\end{equation}
The above equation can be written in the more informative form
\begin{equation}\label{CUzx}F_{\mu\nu} = \delta_{\nu}C_{\mu} - \delta_{\mu}C_{\nu},\end{equation}
where
\begin{equation}\begin{split}\label{tEMF}F_{\mu\nu}: &=  (\gamma_{\mu\nu} - \xi_{\mu\nu} + \eta_{\mu\nu} + N_{\beta}\Lambda^{\beta}_{\mu\nu})
+ (M_{[\mu\nu]} - Z_{[\mu\nu]}) + \frac{3}{2}\,\mathfrak{S}_{\mu, \nu, \epsilon}C^{\epsilon}_{\nu a}R^{a}_{\epsilon\mu} \\
& = (\gamma_{\mu\nu} - \xi_{\mu\nu} + \eta_{\mu\nu}) + N^{\beta}(\gamma_{\mu\nu\beta} + \Lambda_{\beta\mu\nu}) +
(\frac{1}{2}\,N^{\beta}\Lambda_{\beta\mu\nu} - Z_{[\mu\nu]})%\\& \ \ \
+ \frac{3}{2}\,\mathfrak{S}_{\mu, \nu, \epsilon}C^{\epsilon}_{\nu a}R^{a}_{\epsilon\mu}\end{split}\end{equation}
Now, recalling that $[\delta_{\mu}, \delta_{\nu}] = R^{a}_{\mu\nu}\,\dot{\partial_a}$, which indicates the non-commutativity of the
operator $\delta_{\mu}$, and
\begin{equation}F_{\mu\nu{o\atop{|}}\sigma} + F_{\nu\sigma{o\atop{|}}\mu} + F_{\sigma\mu{o\atop{|}}\nu} =
\delta_{\sigma}F_{\mu\nu} + \delta_{\mu}F_{\nu\sigma} + \delta_{\nu}F_{\sigma\mu},\\[0.1 cm]\nonumber\end{equation}
one can write, using (\ref{CUzx}), the identity
\vspace{0.2 cm}\begin{equation}\label{GMESz}\mathfrak{S}_{\mu, \nu, \sigma} \, F_{\mu\nu{o\atop|}\sigma} = - \,\mathfrak{S}_{\mu, \nu, \sigma} \, R^{a}_{\mu\nu}
\dot{\partial_a}C_{\sigma}.\end{equation}

According to (\ref{CUzx}), considering its right hand side as a generalization of the curl of the horizontal basic vector $C_{\mu}$
in the context of the
EAP-geometry, then the tensor $F_{\mu\nu}$ can be considered as the {\bf horizontal geometric electromagnetic field strength}, the vector
$C_{\mu}$ as the {\bf horizontal geometric electromagnetic
potential} and equation (\ref{CUzx}) as a {\bf generalized} form of the {\bf horizontal Maxwell's equations}. Again, by (\ref{tEMF}), $F_{\mu\nu}$ is
constructed from the horizontal skew-symmetric fundamental tensors of the EAP-space (Table 1) together with the skew-symmetric tensors
$N^{\beta}\gamma_{\mu\nu\beta}$, $N^{\beta}\Lambda_{\beta\mu\nu}$,  $\mathfrak{S}_{\mu, \nu, \epsilon}C^{\epsilon}_{\nu a}R^{a}_{\epsilon\mu}$
and $Z_{[\mu\nu]}$. It is thus constructed from a purely geometric standpoint.

\bigskip

Now, let \begin{equation}J^{\mu} := F^{\mu\nu}\!\,_{{o\atop{|}}\nu}.\nonumber\end{equation}
Then, by the commutation formula
\begin{equation}\label{kj}F^{\mu\nu}\!\,_{{o\atop{|}}\alpha\beta} - F^{\mu\nu}\!\,_{{o\atop{|}}\beta\alpha} =
F^{\mu\epsilon}\,\overcirc{R}^{\nu}_{\epsilon\beta\alpha} +
 F^{\epsilon\nu}\,\overcirc{R}^{\mu}_{\epsilon\beta\alpha} + R^{a}_{\beta\alpha}F^{\mu\nu}\!\,_{{o\atop{||}}a},\nonumber\end{equation}
noting that $F^{\mu\nu}$ is skew-symmetric, we conclude that
\begin{equation}\label{GPMx}2\,F^{\mu\nu}\!\,_{{o\atop{|}}\mu\nu} = F^{\mu\nu}\!\,_{{o\atop{|}}\mu\nu} - F^{\mu\nu}\!\,_{{o\atop{|}}\nu\mu} =
- F^{\mu\epsilon}\,\overcirc{R}_{\epsilon\mu} +  F^{\epsilon\nu}\,\overcirc{R}_{\epsilon\nu} - R^{a}_{\mu\nu}F^{\mu\nu}\!\,_{{o\atop{||}}a}.\nonumber\end{equation}
Consequently,
\begin{equation}\label{CONSz}J^{\mu}\!\,_{{o\atop{|}}\mu} = \frac{1}{2}\,\{ F^{\epsilon\mu}(\,\overcirc{R}_{\mu\epsilon} - \,\overcirc{R}_{\epsilon\mu}) +
R^{a}_{\mu\nu}F^{\mu\nu}\!\,_{{o\atop{||}}a}\}.\end{equation}

\bigskip

%In the case where the nonlinear connection $N^{\alpha}_{\mu}$ is integrable, (\ref{CONSz}) can be viewed as a generalized conservation law and
%$J^{\mu}$ as the {\bf horizontal generalized current density}.
This relation will be discussed in subsection 6.1.

\bigskip

Finally, setting
\begin{equation}\label{Ax}{\cal V}_{\mu\nu}:= N^{\beta}\gamma_{\beta\mu\nu} - Z_{\mu\nu} + Q_{\mu\nu},\nonumber\end{equation}

\begin{equation}\label{Ay}{\cal U}_{\mu\nu} : = \frac{1}{2}\,\mathfrak{S}_{\mu, \nu, \alpha} \ \overcirc{C}^{\alpha}_{\mu a} R^{a}_{\nu \alpha},\nonumber\end{equation}
we find, by (\ref{cxa}), (\ref{tEMF}) and
\begin{equation}\mathfrak{S}_{\mu, \nu, \alpha}\,{C}^{\alpha}_{\mu a} R^{a}_{\nu \alpha} =
\mathfrak{S}_{\mu, \nu, \alpha} \ \overcirc{C}^{\alpha}_{\mu a} R^{a}_{\nu \alpha} + 2 Q_{[\mu\nu]},\nonumber\end{equation}
that
\begin{equation}\label{az}\overcirc{G}_{\mu\nu} := \frac{1}{2}\,g_{\mu\nu}(\sigma - h) + (h_{\mu\nu} - \sigma_{\mu\nu}) + \frac{1}{2}\,g_{\mu\nu}(2Z - Q) +
{\cal V}_{(\mu\nu)} + {\cal U}_{\mu\nu},\end{equation}
\begin{equation}\label{bz}F_{\mu\nu} = (\gamma_{\mu\nu} - \xi_{\mu\nu} + \eta_{\mu\nu}) +
N^{\beta}(\gamma_{\mu\nu\beta} + \Lambda_{\beta\mu\nu}) + {\cal V}_{[\mu\nu]} + {\cal U}_{\mu\nu} +
{\cal W}_{\mu\nu};\end{equation}

\begin{equation}{\cal W}_{\mu\nu}: = \mathfrak{S}_{\mu, \nu, \alpha}\,{C}^{\alpha}_{\mu a} R^{a}_{\nu \alpha}.\nonumber\\[0.2 cm]\end{equation}

It is clear, by (\ref{az}) and (\ref{bz}), that the skew-symmetric tensor ${\cal U}_{\mu\nu}$ contributes to both gravitational and electromagnetic effects.
Moreover, the
tensor ${\cal V}_{\mu\nu}$ could be interpreted as representing a kind of {\it mutual interaction} between gravity and electromagnetism,
since its symmetric (resp. skew-symmetric) part gives rise to gravitational (resp. electromagnetic) effects.

%%%%%%%%%%%%%%%%%%%%%%%%%%%%%%%%%% Subsection 5.2 %%%%%%%%%%%%%%%%%%%%%%%%%%%%%%%%

\subsection{Splitting the vertical field equations}

\hspace{0.5 cm}Considering the {\bf symmetric part} of the vertical field equations (\ref{xax}), taking into consideration the vertical analogue of Lemma \ref{ct} {\bf (b)}, we obtain, similar to (\ref{EEEXy}),
\begin{equation}\label{bbx} 0 = E_{(ab)}: = g_{ab} V - 2V_{ab} - 2g_{ab} (C^{e}\!\,_{||e} - C^{e}C_{e}) -
2\alpha_{ab} + \theta_{ab} + \phi_{ab} - \psi_{ab}.\end{equation}

Proceeding as we did in the derivation of the symmetric part of the horizontal field equations, taking into account (\ref{VSM}) and the vertical
analogue of Lemma \ref{ct}{\bf (a)} (setting $\bar{\sigma} = {trace}(\sigma_{ab})$ and $\bar{\varpi} = {trace}(\varpi_{ab})$),
we finally arrive at the relation

\begin{equation} 0 = E_{(ab)}: = (g_{ab} \ \overcirc{\cal S}  - 2 \ \overcirc{S}_{ab}) + g_{ab}(\bar{\sigma} - \bar{\varpi}) -
2(\sigma_{ab} - \varpi_{ab}),\nonumber\end{equation}
which can be written in the more informative form:
\begin{equation}\label{SYMx}\ \overcirc{S}_{ab} - \frac{1}{2} \, g_{ab} \ \overcirc{\cal S} = T_{ab};\end{equation}
\begin{equation}\label{xems}T_{ab}: = \frac{1}{2}\,g_{ab}(\bar{\sigma} - \bar{\varpi})  - (\sigma_{ab} - \varpi_{ab}).\end{equation}
Moreover, (\ref{SYMx}) implies, using (\ref{vee}), that
\begin{equation}\label{cons}T^{a}\,_{b{o\atop{||}} a} = 0.\end{equation}
Consequently, in view of (\ref{SYMx}) and (\ref{cons}), $T_{ab}$ could be interpreted as the {\bf vertical geometric
energy-momentum tensor} for both matter and electromagnetism, which is, according to (\ref{xems}), constructed from the vertical
symmetric fundamental tensors of the EAP-space (Table 1).

\bigskip
Considering the {\bf skew-symmetric part} of (\ref{xax}), following the
same steps as in the previous section, with necessary changes, we
conclude that
\begin{equation}\label{po}F_{ab} = \dot{\partial_b}C_{a} - \dot{\partial_a}C_{b},\end{equation}
where
\begin{equation}\label{MAXTx}F_{ab} :=  \gamma_{ab} - \xi_{ab} + \eta_{ab}\end{equation}
is the {\bf vertical geometric electromagnetic field} tensor. Moreover, (\ref{po})
represents the {\bf vertical generalized Maxwell's equations} in which $F_{ab}$ is expressed as the generalized curl of the vertical basic
vector $C_{a}$. Consequently, $C_{a}$ may be interpreted as the {\bf vertical geometric}
{\bf electromagnetic potential}. Also, by (\ref{po}) and the relation
$$F_{ab{o\atop{||}}c} + F_{bc{o\atop{||}}a} + F_{ca{o\atop{||}}b} =
\dot{\partial_c}F_{ab} + \dot{\partial_a}F_{bc} + \dot{\partial_b}F_{ca},$$
we obtain the identity

\begin{equation}\label{GMEz}\mathfrak{S}_{a, b, c} \, F_{ab{o\atop||}c} = 0.\end{equation}

It is clear, by (\ref{MAXTx}), that $F_{ab}$ is constructed from the vertical skew-symmetric tensors of Table 1.

\bigskip

Finally, if we set
\begin{equation}J^{a} := F^{ab}\!\,_{{o\atop{|}}b}\nonumber\end{equation}
then,  similar to (\ref{CONSz}), $J^{a}$ satisfies the identity %{\bf conservation law}
\begin{equation}\label{hala}J^{a}\!\,_{{o\atop{|}}a} = 0.\end{equation}
Hence, $J^{a}$ represents the {\bf vertical geometric current density} and (\ref{hala}) represents a generalization of the
conservation law of the current density.

\bigskip

We note that the equations (\ref{MAXTx}) to (\ref{hala}) are formally similar to those obtained in the GFT \cite{aaa},
with each geometric object of the GFT replaced by the corresponding vertical geometric object in the EAP-context.

\bigskip

To sum up, in sections 4 and 5, we have constructed a unified field
theory representing a natural generalization of the GFT in which the
unique metric $d$-connection defined in Theorem \ref{metric} plays
the role of the Riemannian connection in the GFT. The constructed
theory has given  rise to {\bf two sets of field equations}
(\ref{cxa}), (\ref{CUzx}) and (\ref{SYMx}), (\ref{po}) subject to
the identities (\ref{GMESz}), (\ref{CONSz}), (\ref{GMEz}) and the
conservation laws (\ref{cons}) and (\ref{hala}). Certain geometrical
objects derived from the fundamental vector fields have been
identified with the {\bf material energy}, {\bf electromagnetic
field}, {\bf electromagnetic potential} and the {\bf current
density}. The dependence of the geometric objects considered on the
directional argument $y$, in addition to the positional argument
$x$, has made our constructed field theory wider in scope and richer
in content than the GFT.
\end{mysect}

%%%%%%%%%%%%%%%%%%%%%%%%%%%%%%%%%%%%%%%%%%%% Section 6. Important Special Cases %%%%%%%%%%%%%%%%%%%%%%%%%%%%%%%%%%%%%%%%%%%%%%%%%%%%%

\begin{mysect}{Important Special Cases}

We now investigate some interesting special cases for the constructed field equations by imposing extra conditions which are natural
on either the nonlinear connection or
the canonical $d$-connection or both. These conditions illuminate our understanding of the physical contents of the constructed theory.
We consider the following cases.
\vspace{- 0.5 cm}

%%%%%%%%%%%%%%%%%%%%%%%%%%%%%%%%% Subsection 6.1 %%%%%%%%%%%%%%%%%%%%%%%%%%%%%%%%%%%%%%%%%%%%%%%

\subsection{Integrability condition}
\hspace{0.45 cm} We investigate the form of the horizontal field equations obtained under the additional assumption
that {the nonlinear connection is integrable}, that is, $R^{a}_{\mu\nu}$ vanishes.
We refer to this condition as the integrability condition. No extra conditions are imposed on the canonical $d$-connection.

\bigskip

Let $R^{a}_{\mu\nu} = 0$. Then, noting in this case that $T_{\mu\nu} = T_{(\mu\nu)}$ and setting $A_{\mu\nu}:=
N^{\beta}\gamma_{\beta\mu\nu} - Z_{\mu\nu}$, equations (\ref{ems}), (\ref{tEMF}) and (\ref{CUzx}) take the form

\begin{equation}\label{cazx}T_{\mu\nu} = \{\frac{1}{2}\,g_{\mu\nu}(\sigma - \varpi) + (\varpi_{\mu\nu} - \sigma_{\mu\nu})\} + g_{\mu\nu}Z +
A_{(\mu\nu)},\end{equation}
\begin{equation}\label{cbzx}F_{\mu\nu} = (\gamma_{\mu\nu} - \xi_{\mu\nu} + \eta_{\mu\nu}) +
N^{\beta}(\gamma_{\mu\nu\beta} + \Lambda_{\beta\mu\nu}) + A_{[\mu\nu]},\vspace{- 0.3 cm}\end{equation}

\begin{equation}\label{gage}F_{\mu\nu} = \delta_{\nu}C_{\mu} - \delta_{\mu}C_{\nu}.\end{equation}
The terms between curly brackes in (\ref{cazx}) are similar in form to those obtained in the context of the GFT.
The gauge invariance of (\ref{gage}) will be discussed in the concluding remarks.

\bigskip

Again, by (\ref{cazx}) and (\ref{cbzx}), $A_{\mu\nu}$ could be interpreted as representing a kind of {\it mutual interaction} between gravity and electromagnetism,
since its symmetric (resp. skew-symmetric) part gives rise to gravitational (resp. electromagnetic) effects.

\bigskip

It is clear that $T_{\mu\nu}$ is symmetric. Moreover, by relation ({\ref {HESxx}}), noting that $G_{\mu\nu} = T_{\mu\nu}$, the energy-momentum tensor
$T_{\mu\nu}$ satisfies the {\bf conservation law}
\begin{equation}\label{cssx}T^{\mu}\,_{\nu{o\atop{|}}\mu} = 0.\end{equation}
\par Also, by (\ref{GMESz}), $F_{\mu\nu}$ satisfies the identity
\begin{equation}\label{MESz}\mathfrak{S}_{\mu, \nu, \sigma} \, F_{\mu\nu{o\atop|}\sigma} =  0.\end{equation}
\par
Finally, in view of (\ref{CONSz}), noting that \, $\overcirc{R}_{\mu\nu}$, in this case, is symmetric (by (\ref{skewz})), the
geometric current density satisfies the {\bf conservation law}
\begin{equation}\label{ssx}J^{\mu}\!\,_{{o\atop{|}}\mu} = 0.\end{equation}
\par
On the other hand, the vertical field equations remain the same as those obtained in section 5.

%%%%%%%%%%%%%%%%%%%%%%%%%%%%%%%%% Subsection 6.2 %%%%%%%%%%%%%%%%%%%%%%%%%%%%%%%%%%%%%%%%%%%%%555

\subsection{Cartan-type case}

\hspace{0.5 cm} A $d$-connection $D = (\Gamma^{\alpha}_{\mu\nu}, \, \Gamma^{a}_{b\mu}, \, C^{\alpha}_{\mu c}, \, C^{a}_{bc})$ on $TM$
is said to be of Cartan-type  if
\begin{equation}y^{a}\!\,_{|\mu} = 0, \ \ \ \ \ \ \ y^{a}\!\,_{||c} = \delta^{a}_{c}.\nonumber\end{equation}

Let $(TM, \lambda)$ be an EAP-space.
Assume that the canonical $d$-connection $D$ is of Cartan-type. Then we have \cite{EAP}:
\begin{description}
\item [(a)] The nonlinear connection $N^{a}_{\mu}$ is expressed in the form
$N^{a}_{\mu} =  y^{b}( \, \undersym{\lambda}{i}^{a}\partial_{\mu} \ \undersym{\lambda}{i}_{b})$.\footnote{A similar expression is found in \cite{WW},
but in a completely different situation.}
\item [(b)] The nonlinear connection $N^{a}_{\mu}$ is integrable: $R^{a}_{\mu\nu} = 0$.
\item [(c)] $P^{a}_{\mu c} = T^{a}_{bc} = 0.$ Consequently, $C^{a}_{bc}$ is symmetric,
$\gamma^{a}_{bc} = 0$ and $C^{a}_{bc} = \ \overcirc{C}^{a}_{bc}$.
 \item [(d)] $\dot{\partial_b}N^{a}_{\mu} = \Gamma^{a}_{b\mu}$ and $N^{a}_{\mu}$ is homogeneous. Consequently,
$\Gamma^{a}_{b\mu}$ is positively homogeneous of degree $0$ in $y$.
\end{description}

Assume that $D$ is of Cartan-type. Then the horizontal Einstein tensor satisfies the identity
\begin{equation}\label{HEE}\,\,\overcirc{G}^{\mu}\!\,_{\sigma{o\atop|}\mu} = 0.\nonumber\end{equation}
Moreover, all vertical second rank tensors in Table 1 vanish identically.
\bigskip

We now consider the case dealt with in section 4 under the
additional assumption that the canonical $d$-connection is of {Cartan-type}. By {\bf (b)} above, the nonlinear connection
$N^{\alpha}_{\mu}$ is integrable. Consequently, the Cartan-type case
can be regarded as a special case of the Integrability case,
obtained by setting $N^{a}_{\mu} =
y^{b}(\,\undersym{\lambda}{i}^{a}\partial_{\mu}
\,\undersym{\lambda}{i}_{b})$ and $R^{a}_{\mu\nu} = 0$ (among other
things). Accordingly, relations (\ref{cazx}) to (\ref{ssx}) remain
valid under the Cartan-type condition.  In particular, we have
\begin{equation}\label{gages}F_{\mu\nu} = \delta_{\nu}C_{\mu} - \delta_{\mu}C_{\nu}.\end{equation}

The gauge invariance of (\ref{gages}) will be discussed in the concluding remarks.

\bigskip

On the other hand, there are
no vertical field equations (all vertical objects of Table 1
vanish). The advantage in this case is that the nonlinear
connection, consequently, all geometric objects considered, are
expressed explicitly in terms of the fundamental vector fields
$\lambda$'s.  It is for this reason that the horizontal field
equations in the Cartan-type case lend themselves to the process of
linearization.

%%%%%%%%%%%%%%%%%%%%%%%%%%%%%%%%%%% Subsection 6.3 %%%%%%%%%%%%%%%%%%%%%%%%%%%%%%%%%%%%%%

\subsection{Berwald-type case}

\hspace{0.5 cm} A $d$-connection $D = (\Gamma^{\alpha}_{\mu\nu}, \, \Gamma^{a}_{b\mu}, \, C^{\alpha}_{\mu c}, \, C^{a}_{bc})$ on $TM$
is said to be of Berwald-type if
\begin{equation}\label{Berwald} \ \ \dot{\partial_{b}}N^{a}_{\mu} = \Gamma^{a}_{b\mu}; \ \ \ \ \ \ \ \ \ C^{\alpha}_{\mu c} = 0.\nonumber\end{equation}

Assume that $D$ is of Berwald-type. Then \cite{EAP}
\begin{description}

\item [(a)] ${\lambda}_{\mu}$ are functions of the positional argument $x$ only.
Consequently, so are $g_{\mu\nu}$.
\item [(b)] Both $\Lambda^{\alpha}_{\mu\nu}$ and $\gamma^{\alpha}_{\mu\nu}$ are functions of the positional argument
$x$ only. Consequently, so are $C_{\mu}$.
\item [(c)] $ \ \overcirc{C}^{\alpha}_{\mu c} = 0$. Consequently, $\gamma^{\alpha}_{\mu c} = 0$ so that \\[- 0.1 cm]
\begin{equation}\label{SYM}\overcirc{R}_{\beta\mu} = \, \overcirc{R}_{(\beta\mu)} = - \frac{1}{2}\,(\theta_{\beta\mu} - \psi_{\beta\mu} + \phi_{\beta\mu}) +
\omega_{\beta\mu}.\\[- 0.2 cm]\end{equation}
\item [(d)] $\overcirc{P}^{\alpha}_{\beta\nu  c} = 0.$ Consequently,
the horizontal Einstein tensor satisfies the identity
\begin{equation}\label{HEE}\,\,\overcirc{G}^{\mu}\!\,_{\sigma{o\atop|}\mu} = 0.\nonumber\end{equation}
\end{description}

We now assume that the canonical $d$-connection is of {Berwald-type}. In this case, {\it no conditions
are imposed on the nonlinear connection.}   In other words, the Berwald-type case is not deducible from the case dealt with in section 4.
We will therefore derive the horizontal field equations from scratch. Since the horizontal fundamental vector fields
and all horizontal geometric objects involved are functions of the positional argument $x$ only, the Euler-Lagrange equations have the form
\begin{equation}\label{ELEs}\frac{\delta {\cal H}}{\delta \lambda_{\beta}}: = \frac{\partial {\cal H}}{\partial \lambda_{\beta}} -
\frac{\partial}{\partial x^{\gamma}}\bigg(\frac{\partial {\cal H}}
{\partial \lambda_{\beta, \gamma}}\bigg) = 0.\nonumber\end{equation}
Moreover, by the fact that $C^{\alpha}_{\mu c} = 0$,
the {\bf horizontal field equations} in this case, as can be checked, are given by
\begin{equation}\begin{split}\label{BER}0 &= E_{\mu\nu} := g_{\mu\nu}H - 2H_{\mu\nu} -
2(C_{\mu}C_{\nu} - C_{\nu|\mu}) + 2g_{\mu\nu}(C^{\epsilon}C_{\epsilon} - C^{\epsilon}\!\,_{|\epsilon})\\&
\ \ \ \ \ \ \ \ \ \ \ \ \ - \, 2(C^{\epsilon}\Lambda_{\mu\epsilon\nu} +
g^{\epsilon\alpha}\Lambda_{\mu\nu\alpha|\epsilon}),\end{split}\end{equation}
which are {\bf similar in form} to the field equations of the GFT. Moreover, all geometrical
objects involved in (\ref{BER}) are functions of the positional argument $x$ only.

\bigskip

Proceeding as we did with the symmetric part of the field equations, taking into account
(\ref{SYM}) and (\ref{HEE}), we deduce that

\begin{equation}\label{order}\overcirc{R}_{\mu\nu} - \frac{1}{2}\, g_{\mu\nu}\,\,\overcirc{\cal R} = T_{\mu\nu};\\[- 0.3 cm]\nonumber\end{equation}

\begin{equation}\label{MT}T_{\mu\nu}: = \frac{1}{2}\,g_{\mu\nu}(\sigma - \varpi)   - (\sigma_{\mu\nu} - \varpi_{\mu\nu}),\end{equation}
where the energy-momentum tensor $T_{\mu\nu}$ is subject to the {\bf conservation law}
\begin{equation}\label{sss}T^{\mu}\,_{\nu{o\atop{|}}\mu} = 0.\end{equation}

Again similar to the skew-symmetric part of the field equations, noting that
\linebreak $[\delta_{\mu}, \delta_{\nu}] = R^{a}_{\mu\nu}\dot{\partial_a}$ and
$C_{\sigma} = C_{\sigma}(x)$, we obtain
the identity
\begin{equation}\label{GMESx}\mathfrak{S}_{\mu, \nu, \sigma} \, F_{\mu\nu{o\atop|}\sigma} = - \,\mathfrak{S}_{\mu, \nu, \sigma} \, R^{a}_{\mu\nu}
\dot{\partial_a}C_{\sigma} = 0,\end{equation}
where the electromagenetic field is given by:
\begin{equation}\label{xyzc}F_{\mu\nu}: =  \gamma_{\mu\nu} - \xi_{\mu\nu} + \eta_{\mu\nu}.\end{equation}

Moreover, $F_{\mu\nu}$ is expressed as the curl of the horizontal basic vector, namely,
\begin{equation}\label{gagex}F_{\mu\nu} = \partial_{\nu}C_{\mu} - \partial_{\mu}C_{\nu}.\end{equation}

The gauge invariance of (\ref{gagex}) will be discussed in the concluding remarks.

\bigskip
%\hspace{0.25 cm}
Finally, if
\begin{equation}J^{\mu} := F^{\mu\nu}\!\,_{{o\atop{|}}\nu},\end{equation}
then, by (\ref{CONSz}), noting that \,$\overcirc{R}_{\mu\nu}$ is symmetric and $F^{\mu\nu}\!\,_{{o\atop{||}}a} = 0$, we conclude that
$J^{\mu}$ satisfies the {\bf conservation law}
\begin{equation}\label{wwwz}J^{\mu}\!\,_{{o\atop{|}}\mu} = 0.\end{equation}

In the Berwald-type case, equations (\ref{order}) to (\ref{wwwz})
are identical in form to those obtained in the context of the GFT.
Moreover, all geometric objects involved are functions of the
positional argument $x$ only. Consequently, the horizontal field
equations obtained under the Berwald-type condition are actually an
{exact replica} of the GFT. Nevertheless, the vertical field
equations under the Berwald-type condition are still alive and are the same as those
obtained in the general case (\ref{xax}). This means that, in this case, we have the GFT plus something else whose
physical essence is not yet revealed.

%%%%%%%%%%%%%%%%%%%%%%%%%%%%%%%%%%%% Subsection 6.3 %%%%%%%%%%%%%%%%%%%%%%%%%%%%%%%%%%%%%%%%%

\subsection{Recovering the GFT (The Cartan-Berwald case)}

\hspace{0.6 cm} We finally assume that the canonical $d$-connection is both of Berwald- and Cartan-type. Then we have, in this case, \cite{EAP}

\begin{description}
\item [(a)] The $hh$-coefficients of the natural metric and the canonical $d$-connections are functions of the
positional argument $x$ only and are both identical to the coefficients of the corresponding
connections in the conventional AP-space.
\item [(b)] The torsion and the contortion of the EAP-space are functions
of the positional argument $x$ only and are given by
$${\bf T} = (\Lambda^{\alpha}_{\mu\nu}, \, 0, \, 0, \, 0 , \, 0); \ \ \ \ {\bf C} =
(\gamma^{\alpha}_{\mu\nu}, \, 0, \, 0, \, 0)$$
\item [(c)] The horizontal fundamental tensors of Table 1 are functions of the
positional argument $x$ only and are identical to their
corresponding tensors in the conventional AP-space (cf. \cite{FI}).

\end{description}

%Consequently, in this case, the EAP-space reduces to the classical AP-space.
In this case, the horizontal field equations are given by (\ref{BER}), whereas the vertical field equations clearly disappear.
Moreover, relations (\ref{MT}) to (\ref{wwwz}) hold. %, (\ref{sss}), (\ref{GMESx}) and (\ref{wwwz}) hold.
Consequently, we have only one set of field equations which actually {\bf coincides} with those of the GFT.
This is the typical case in which the GFT is naturally retrieved.

\end{mysect}

%%%%%%%%%%%%%%%%%%%%%%%%%%%%%%%%%%%%%%%%%%%% Section 7. Discussion and Concluding Remarks %%%%%%%%%%%%%%%%%%%%%%%%%%%%%%%%%%%%%%%%%%%%%

\begin{mysect}{Discussion and Concluding Remarks}

In the next 2 tables, we summarize the important results obtained so far. We shall use the following abbreviations:

\bigskip
{\bf I}-condition, for the Integrability condition.

{\bf C}-condition, for the Cartan case.

{\bf B}-condition, for the Berwald case.

{\bf CB}-condition, for the Cartan-Berwald case.

{\bf MEC}, for the matter-energy conservation law.

{\bf CDC}, for the current density conservation law.

\bigskip

We also set
$${\mathbb{E}}_{\mu\nu} := \frac{1}{2}\,g_{\mu\nu}(\sigma - \varpi) +
(\varpi_{\mu\nu} - \sigma_{\mu\nu}), \ \ \ {\mathbb{F}}_{\mu\nu} := \gamma_{\mu\nu} - \xi_{\mu\nu} + \eta_{\mu\nu}, \\[- 0.3 cm]$$
$${\mathbb{E}}_{ab} := \frac{1}{2}\,g_{ab}(\bar{\sigma} - \bar{\varpi}) +
(\varpi_{ab} - \sigma_{ab}), \ \ \ {\mathbb{F}}_{ab} := \gamma_{ab} - \xi_{ab} + \eta_{ab}.$$

We note that ${\mathbb{E}}_{\mu\nu}$ and ${\mathbb{F}}_{\mu\nu}$ represent the Einstein and electromagnetic tensors in the context of the
GFT respectively.

\bigskip

{\it The next two tables should be considered together as a single entity since they are
complementary.} For example, the horizontal counterparts of the {\bf
CB}- and {\bf B}-conditions are identical. However, their vertical
counterparts are categorically different; the vertical counterparts
of the {\bf CB}-condition disappear, whereas those of the {\bf
B}-condition coincide with those of the general case. Similarly, the
vertical counterparts of the {\bf I}-condition and the general case
coincide, while their horizontal counterparts are again different. {\it It is only by considering the two tables as {\bf one unit} that the
similarity and the difference of the cases dealt with are revealed.}

\bigskip

On the other hand, though the horizontal conservation laws (Table 2) are similar in form in all cases considered (apart from
the general case), the geometric objects involved in the {\bf CB}-
and {\bf B}-conditions are functions of the positional argument $x$ only,
while in the {\bf C}- and {\bf I}- conditions they are function of
both the positional argument $x$ and the directional argument $y$.

\newpage

\begin{landscape}
\begin{center}
Table 2\\
{\bf Horizontal section}
\end{center}
\begin{tabular}{|l|l|l|l|l|l|}
 \hline
\multirow{2}{*}{\bf Condition}&\multicolumn{2}{l|}{\hspace{2.8
cm}{\bf Gravitation and Electromagnetic tensors}
}&\multicolumn{2}{l|}{\hspace{0.2 cm}
{\bf Conservation laws}}&\multirow{2}{*}{\bf Max. Equations}\\
\cline{2-5} \cline{2-5}  & \hspace{1.3 cm}{\bf Gravitation  tensor} & \hspace{1.1 cm}{\bf Electromagnetic tensor} &\hspace{0.5 cm}{\bf MEC}
&\hspace{0.2 cm}{\bf CDC} &\\
\hline {\bf CB-Condition} &$\,\overcirc{G}_{\mu\nu} = {\mathbb {E}}_{\mu\nu}$
& $F_{\mu\nu} = {\mathbb{F}}_{\mu\nu} $
&{${T}^{\mu}\,_{\sigma{o\atop{|}}\mu} = 0$}
&$J^{\mu}\!\,_{{o\atop|}\mu} = 0$
 & $F_{\mu\nu} = \partial_{\nu}C_{\mu} - \partial_{\mu}C_{\nu}$\\
\hline
 {\bf B-Condition} & $\,\overcirc{G}_{\mu\nu} = {\mathbb{E}}_{\mu\nu}$  &
 $F_{\mu\nu} = {\mathbb{F}}_{\mu\nu}$ &{${T}^{\mu}\,_{\sigma{o\atop{|}}\mu} = 0$}
 & $J^{\mu}\!\,_{{o\atop|}\mu} = 0$ &$F_{\mu\nu} = \partial_{\nu}C_{\mu} - \partial_{\mu}C_{\nu}$\\
\hline
{\bf C-Condition} & {$\,\overcirc{G}_{\mu\nu} = {\mathbb{E}}_{\mu\nu} + g_{\mu\nu}Z + A_{(\mu\nu)}$} &$F_{\mu\nu} = {\mathbb{F}}_{\mu\nu} +
N^{\beta}(\gamma_{\mu\nu\beta} +
\Lambda_{\beta\mu\nu}) + A_{[\mu\nu]}$ &
{${T}^{\mu}\,_{\sigma{o\atop{|}}\mu} =
0$}&$J^{\mu}\!\,_{{o\atop|}\mu} = 0$ &$F_{\mu\nu} = \delta_{\nu}C_{\mu} - \delta_{\mu}C_{\nu}$\\

\hline
{\bf I-Condition} & {$\,\overcirc{G}_{\mu\nu} = {\mathbb{E}}_{\mu\nu} + g_{\mu\nu}Z + A_{(\mu\nu)}$} &$F_{\mu\nu} = {\mathbb{F}}_{\mu\nu} +
N^{\beta}(\gamma_{\mu\nu\beta} +
\Lambda_{\beta\mu\nu}) + A_{[\mu\nu]}$ &
{${T}^{\mu}\,_{\sigma{o\atop{|}}\mu} = 0$}&
$J^{\mu}\!\,_{{o\atop|}\mu} = 0$&$F_{\mu\nu} = \delta_{\nu}C_{\mu} - \delta_{\mu}C_{\nu}$\\

\hline \multirow{2}{*}{{\bf General case}} &
{$\,\overcirc{G}_{\mu\nu} = {\mathbb{E}}_{\mu\nu} + g_{\mu\nu}(Z -
\frac{1}{2}Q) + {\mathbb{V}}_{(\mu\nu)} $} &$F_{\mu\nu} = {\mathbb{F}}_{\mu\nu} + N^{\beta}(\gamma_{\mu\nu\beta} +
\Lambda_{\beta\mu\nu}) + {\cal V}_{[\mu\nu]}$  &\hspace{0.6
cm}---&\hspace{0.6 cm}---
&$F_{\mu\nu} = \delta_{\nu}C_{\mu} - \delta_{\mu}C_{\nu}$\\
 & {$\hspace{1.1cm} + \,\, {\mathbb{U}}_{\mu\nu}$}&{$\hspace{1.1cm} + \, \,{\mathbb{U}}_{\mu\nu} +
{\mathbb{W}}_{\mu\nu}$} & & &{$ $}\\
 \hline
\end{tabular}

\vspace{2cm}
\begin{center}
Table 3\\
{\bf Vertical section}
\end{center}
\begin{center}

\begin{tabular}{|l|l|l|l|l|l|}
 \hline
\multirow{2}{*}{\bf Condition}&\multicolumn{2}{l|}{\hspace{0.2 cm}{\bf Gravitation and Electromagnetic tensors }}
&\multicolumn{2}{l|}{\bf Conservation laws}&\multirow{2}{*}{\bf Maxwell's Equations}\\
\cline{2-5} \cline{2-5}  & {\bf Gravitation  tensor} & {\bf Electromagnetic tensor} &\hspace{0.5 cm}{\bf MEC} &\hspace{0.2 cm}{\bf CDC}
&\\
\hline {\bf CB-Condition} &vanishes& vanishes
&{trivial}
&trivial
 &trivial\\
\hline
 {\bf B-Condition} &$\,\overcirc{G}_{ab} = {\mathbb{E}}_{ab}$  & $F_{ab} = {\mathbb{F}}_{ab}$ &{${T}^{a}\,_{b{o\atop{||}}a} = 0$} &
$J^{a}\!\,_{{o\atop||}a} = 0$&$F_{ab} = \dot{\partial_b}C_{a} - \dot{\partial_a}C_{b}$\\
\hline
{\bf C-Condition} & vanishes&vanishes&
{trivial}&trivial&trivial\\
  \hline
{\bf I-Condition} &{$\,\overcirc{G}_{ab} = {\mathbb{E}}_{ab}$} &$F_{ab} = {\mathbb{F}}_{ab}$ &
{${T}^{a}\,_{b{o\atop{||}}a} = 0$}&
$J^{a}\!\,_{{o\atop|}a} = 0$&$F_{ab} = \dot{\partial_b}C_{a} - \dot{\partial_a}C_{b}$\\

\hline {\bf General case} & {$\,\overcirc{G}_{ab} = {\mathbb{E}}_{ab}$}
&$F_{ab} = {\mathbb{F}}_{ab}$ &
{${T}^{a}\,_{b{o\atop{||}}a} = 0$}&
$J^{a}\!\,_{{o\atop|}a} = 0$&$F_{ab} = \dot{\partial_b}C_{a} - \dot{\partial_a}C_{b}$\\
\hline
\end{tabular}

\end{center}

\end{landscape}

\newpage

We end the paper with the following remarks and comments:
%\begin{mysect}{Concluding remarks}
\begin{itemize}
\item In the mathematical framework of general relativity, there are no geometric objects that can be identified with
matter and other fields different from the gravitational field. General relativity provides a geometric description of spacetime and purely gravitational phenomena.
Other interactions are simply {\it not included}. Though the dynamics of macroscopic objects can be treated successfully in the
context of general relativity theory, the
impact of other interactions on gravity and the absence of dynamical objects describing these interactions may lead to
the emergence of serious problems. In other words, although other interactions do not play an explicit role on the dynamics
of large objects, we cannot neglect their impact on gravitational phenomena. Theories having a wider geometric structure
than Riemannian geometry are therefore needed to reformulate the structure of spacetime in order to include other fields. AP-geometry, with its sixteen degrees of freedom, is capable of treating electromagnetic phenomena on an
equal footing with gravitational phenomena. In fact, the GFT, constructed in the framework of AP-geometry, not only gives a single
geometric description of both fields, but also gives matter a geometric origin. Moreover, GFT has been able to solve some of the problems present in the context of general relativity theory such as 
the horizon and the flatness problems \cite{SS}. Since the GFT emerges as a special case of our constructed field theory,
it follows that these problems are actually solved in the framework of our new theory. Moreover, other interactions
may be potentially included in the context of our new theory due to its wealth and its wide geometric scope. We hope that using the present theory,
constructed in the context of EAP-geometry, one would solve more of the general relativity problems.

\item In this paper, we have constructed a unified field theory in the framework of EAP-geometry. The field equations are obtained by a
variational method.  The formulated theory is a generalization of the GFT, in which the chosen Lagrangians are the horizontal and vertical analogues
of the Lagrangian used in the construction of the GFT. Five different interesting cases for the horizontal field equations have been singled out. The most general is derived under the mere assumption that
the nonlinear connection is independent of the horizontal fundamental vector fields. From this, follows
both the Integrability case and the Cartan-type case. The Berwald-type case is deduced independently.
Finally, under the Cartan-Berwald condition, the constructed field equations are shown to coincide with the GFT. On the other hand, the vertical field equations
are derived under no additional assumptions.

\item The nonlinear connection enters explicitly in the horizontal field equations (and implicitly in the vertical field equations). This is because the chosen horizontal Lagrangian is expressed in terms
of the torsion tensor field and its contraction which is, in turn,  expressed in terms of the canonical $d$-connection.
The latter is defined in terms of the horizontal fundamental vector fields and the
nonlinear connection, through the operator $\delta_{\mu}$. Accordingly, the mathematical structure of the constructed field theory relies heavily on the notion of the nonlinear connection. In fact, the splitting
of the field equations into horizontal and vertical counterparts is made possible only due to the existence of the nonlinear connection. 
The physical
role of the nonlinear connection, however, needs further investigation. This could be partially achieved if an appropriate physical interpretation of the directional argument $y$ is given.
One possible interpretation is the following: The vector $y^{a}$ is attached as an internal variable to each point $x$. Consequently, the $y$-dependence may be combined with the notion of anisotropy or non-locality \cite{V}.

\item The horizontal field equations in the Integrability case and the Cartan-type case {coincide}. On the other hand, the vertical field equations in the
Integrability case coincide with those obtained in the general case, while they disappear in the Cartan-type case.  Moreover, in
the Cartan-type case, it is possible to take into consideration the \lq\lq mixed\rq\rq \, Lagrangian
${\cal M} := |\lambda|g^{cd}M_{cd}$, where $M_{cd} := C^{\alpha}_{\epsilon c}C^{\epsilon}_{\alpha d} - C_{c}C_{d}$;
$C_{d} := C^{\alpha}_{\alpha d}$. The corresponding
field equations are then given by
$$\frac{1}{|\lambda|}\bigg(\frac{\delta {\cal M}}{\delta \,\undersym{\lambda}{j}_{\beta}}\bigg)
{\,\undersym{\lambda}{j}_{\sigma}} = 0.$$
The idea is that $C^{\alpha}_{\mu c}$ is one of the non-vanishing torsion tensors of the canonical $d$-connection under
the Cartan-type case, in addition to
the purely horizontal counterpart $\Lambda^{\alpha}_{\mu\nu}$, which is used in the construction of the horizontal field equations.

\item Though the curvature of the nonlinear connection does not in general vanish under the Berwald-type condition, {\it it does not contribute, in any way, to
the horizontal field equations obtained}. This is one of the reasons (besides the fact that the geometric objects involved are
functions of the positional argument $x$ only) why the horizontal field equations coincide with those of the GFT.

\item In a forthcoming paper, a linearization of the obtained field equations is carried out, in which more physical interpretations of the coordinates
$(x^{\mu}, y^{a})$ are given. We interpret $x^{1}$, $x^{2}$ and
$x^{3}$ as space coordinates, whereas $x^{4}$ is taken as the time
coordinate. On the other hand, the vector $y^{a}$ is attached as an
{\bf internal variable} to each point $x^{\mu}$, as previously stated. In this sense,
$y^{a}$ may be regarded as the spacetime fluctuations (micro
internal freedom) associated to the point $x^{\mu}$ \cite{GLS},
\cite{V}. Accordingly, the vertical field equations, we conjecture,
may express some kind of micro (or {\bf quantum}) phenomena.

\item The Cartan-type case is, roughly speaking, the {closest} natural generalization of the GFT (at the expense of killing
the vertical counterpart)\footnote{We show in our next paper that
the field equations in the Cartan-type case actually {\bf coincide}
with the GFT in the {\bf first} order of approximation.}, whereas
the horizontal counterpart of the Berwald-type case is {
identical} to the GFT (leaving untouched the vertical counterpart).
Since both electromagnetism and gravity follow from the GFT in the
first order of approximation \cite{aax}, it follows that both
theories are an outcome of the horizontal field equations in the
Berwald-type case in the first order of approximation. This will be
discussed in detail in the above mentioned forthcoming paper.

\item The unified field theory constructed in the present work is a covariant one, i.e., it has an external symmetry. It is not, in general, gauge invariant. But its
skew-symmetric section has an internal symmetry, in addition to its external one. The special cases given by equations
(\ref{gage}), (\ref{gages}) and (\ref{gagex})
show that the skew-symmetric section of the horizontal field equations, in the Integrability, the Cartan and the Berwald cases,
are gauge invariant under the local gauge group $U(1)$ (abelian gauge group). Indeed, considering the Integrability and the Cartan cases, let
$C_{\mu}^{*}$ be a (generalized) gauge transformation of $C_{\mu}$, that is,
\begin{equation}C_{\mu}^{*}: = C_{\mu} - \delta_{\mu}\phi,\nonumber\end{equation} where $\phi$ is
an arbitrary smooth function of both $x$ and $y$. Then
\begin{eqnarray*}F^{*}_{\mu\nu} :&=& \delta_{\nu}C_{\mu}^{*} - \delta_{\mu}C_{\nu}^{*}\\
&=&\delta_{\nu}(C_{\mu} - \delta_{\mu}\phi) -  \delta_{\mu}(C_{\nu} - \delta_{\nu}\phi)\\
&=&(\delta_{\nu}C_{\mu} - \delta_{\mu}C_{\nu}) + [\delta_{\mu}, \delta_{\nu}]\phi\\
&=&\delta_{\nu}C_{\mu} - \delta_{\mu}C_{\nu} = F_{\mu\nu},\\
\end{eqnarray*}
\\[- 1.2 cm] where in the last step we have used the integrability condition, namely, $[\delta_{\mu}, \delta_{\nu}] = 0$. The gauge invariance of the Berwald case
may be proved in a similar manner, taking into account that the gauge transformation in this case has the form
\begin{equation}C_{\mu}^{*} := C_{\mu} - \partial_{\mu}\phi,\nonumber\end{equation} where $\phi$ is
an arbitrary smooth function of $x$ only, together with the fact that $[\partial_{\mu}, \partial_{\nu}] = 0$.
 This insures the capability
of the theory under the above mentioned conditions to describe electromagnetism. Moreover, equation (\ref{po}) implies that the skew-symmetric section of the vertical field equations is
also gauge invariant. To see this, note that the gauge transformation in this case takes the form
\begin{equation}C_{a}^{*} := C_{a} - \dot{\partial}_{a}\phi,\nonumber\end{equation} where $\phi$ is
an arbitrary smooth function of $x$ and $y$. Proceeding as we did in the evaluation of $F^{*}_{\mu\nu}$, with necessary changes,
the gauge invariance now follows by noting that $[\dot{\partial_a}, \dot{\partial_b}] = 0$.

\vspace{0.2 cm}
\hspace{0.5 cm} On the other hand, in the general case, (\ref{CUzx}) may imply that the skew-symmetric section of the horizontal field equations may have
a more general
internal local symmetry than $U(1)$, i.e. invariance under a non-abelian gauge group. More efforts are needed to probe other interactions,
besides gravity and electromagnetism, that the theory may be capable of representing. This, in turn, may
shed some light on our understanding of the role of the nonlinear connection in the present theory.

\item To sum up, our constructed field theory is a pure {\bf geometric} attempt to unify gravity and electromagnetism. The two fields are treated, in a comprehensive way, as
one entity. The theory is manifestly {\bf covariant.} Its underlying geometry is the EAP-geometry.
The symmetric part represents gravitation, while the
anti-symmetric part represents electromagnetism.\footnote{This does not exclude the possibility that the
theory could potentially describe other physical interactions.} Moreover, the anti-symmetric part of the
vertical field equations is gauge invariant. Finally, all {\bf physical objects} involved are expressed in
terms of the {fundamental} tensors of the EAP-space together with the {nonlinear
connection} $N$ defined on the space.

\end{itemize}

In conclusion, denoting our constructed field theory by UFT (Unified Field Theory), assuming $n = 4$, we have the following

\begin{center} {\bf Table 4: Comparison between GFT and UFT}\\[0.2cm]
\begin{tabular}
{|c|c|c|c|c|c|c|c|c|c|}\hline
&&&&\\
 {\bf Field}&{\bf Field}&{\bf No. of Field}
 &{\bf Field}&{\bf No. of Field}\\
 {\bf Theory}&{\bf Variables}&{\bf Variables}&{\bf Equations}&{\bf Equations}\\[0.2cm]\hline
 &&&&\\
{\bf GFT}&$\,\undersym{\lambda}{i}_{\mu}$&$16$&$E_{\mu\nu} = 0$&$16$\\[0.2 cm]\hline
&&&&\\
{\bf UFT}&$\,\undersym{\lambda}{i}_{\mu}; \,\,\,\undersym{\lambda}{i}_{a}$&$32$&$E_{\mu\nu} = E_{ab} = 0$&$32$\\[0.2cm]\hline
\end{tabular}
\end{center}

\end{mysect}

%%%%%%%%%%%%%%%%%%%%%%%%%%%%%%%%%%%%%%%%%%%%%%%%%% References %%%%%%%%%%%%%%%%%%%%%%%%%%%%%%%%%%%%%%%%%%%%%%%%%%%%

\bibliographystyle{plain}

\end{document}